\newcommand{\subparagraph}{}
\titlespacing{\section}{0pt}{\parskip}{-\parskip}
\titlespacing{\subsection}{0pt}{\parskip}{-\parskip}
\titlespacing{\subsubsection}{0pt}{\parskip}{-\parskip}
\newtheoremstyle{exampstyle}
{\topsep} % Space above
{\topsep} % Space below
{} % Body font
{} % Indent amount
{\bfseries} % Theorem head font
{.} % Punctuation after theorem head
{.3em} % Space after theorem head
{} % Theorem head spec (can be left empty, meaning `normal')
\tikzset{every picture/.style={font issue=\small},
	font issue/.style={execute at begin picture={#1\selectfont}}
}
\newcommand{\RN}[1]{%
	\textup{\uppercase\expandafter{\romannumeral#1}}%
}
\newcommand{\secref}[1]{\S\ref{sec:#1}}
\renewcommand{\eqref}[1]{(\ref{eq:#1})}
\newcommand{\figref}[1]{Fig.~\ref{fig:#1}}
\newcommand{\tabref}[1]{Table~\ref{tab:#1}}
\newcommand{\thmref}[1]{Theorem.~\ref{thm:#1}}
\newcommand{\lemref}[1]{Lemma.~\ref{lem:#1}}
\newcommand{\propref}[1]{Proposition.~\ref{prop:#1}}
\newcommand{\defref}[1]{Definition.~\ref{def:#1}}
\newcommand{\zerosf}{\mathsf{0}}
\newcommand{\onesf}{\mathsf{1}}
\theoremstyle{exampstyle} 
\theoremstyle{exampstyle} 
\theoremstyle{exampstyle} \newtheorem{definition}{Definition}
\theoremstyle{exampstyle} 
\theoremstyle{exampstyle} 
\theoremstyle{exampstyle} 
\theoremstyle{exampstyle} \newtheorem{lemma}{Lemma}
\theoremstyle{exampstyle} \newtheorem{prop}{Proposition}
\theoremstyle{exampstyle} \newtheorem{theorem}{Theorem}
\theoremstyle{exampstyle} 
\title{Efficiency and detectability of random reactive jamming in carrier sense wireless networks}
\author{Ni~An,~\IEEEmembership{Member,~IEEE,} and~Steven~Weber,~\IEEEmembership{Senior Member,~IEEE}
\thanks{N. An and S. Weber are with the Department of Electrical and Computer Engineering, Drexel University, Philadelphia, PA. This research has been supported by the National Science Foundation under award \#CNS-1228847. Preliminary version of this work was presented at the 15th Annual IEEE International Conference on Sensing, Communication, and Networking (SECON) in June, 2018 in Hong Kong, China \cite{AnWeb2018}. S. Weber is the contact author ({\sf sweber@coe.drexel.edu}).}}
\begin{document}
\IEEEoverridecommandlockouts
\maketitle

\begin{abstract}
A natural basis for the detection of a wireless random reactive jammer (RRJ) is the perceived violation by the detector (typically located at the access point (AP)) of the carrier sensing protocol underpinning many wireless random access protocols ({\em e.g.}, WiFi).  Specifically, when the wireless medium is perceived by a station to be busy, a carrier sensing compliant station will avoid transmission while a RRJ station will often initiate transmission.  However, hidden terminals (HTs), {\em i.e.}, activity detected by the AP but not by the sensing station, complicate the use of carrier sensing as the basis for RRJ detection since they provide plausible deniability to a station suspected of being an RRJ.  The RRJ has the dual objectives of avoiding detection and effectively disrupting communication, but there is an inherent performance tradeoff between these two objectives.  In this paper we capture the behavior of both the RRJ and the compliant stations via a parsimonious Markov chain model, and pose the detection problem using the framework of Markov chain hypothesis testing.  Our analysis yields the receiver operating characteristic (ROC) of the detector, and the optimized behavior of the RRJ.  While there has been extensive work in the literature on jamming detection, our innovation lies in leveraging carrier sensing as a natural and effective basis for detection.
\end{abstract}

\begin{IEEEkeywords}
Reactive jamming, Markov chain model, large deviations principle, hypothesis testing
\end{IEEEkeywords}

%%%%%%%%%%%%%%%%%%%%%%%%%%%%%%%%%%%%%%%%%%%%%%%%%%%%%%%%%%%%%%%%%%%%%%%%%%%%%%%%%%%%%%%%
\section{Introduction}
%%%%%%%%%%%%%%%%%%%%%%%%%%%%%%%%%%%%%%%%%%%%%%%%%%%%%%%%%%%%%%%%%%%%%%%%%%%%%%%%%%%%%%%%

Jamming attacks are a widely recognized threat to wireless networks. As a type of {\em denial-of-service} attack, wireless jamming leverages the broadcast nature of the wireless medium and emits jamming signals either to prevent other (compliant) users from accessing the network, or to corrupt ongoing transmissions. There are three major types of jammers \cite{XuTra2005}: $i)$ {\em constant jammer}, which constantly sends jamming signals, $ii)$ {\em random jammer}, which randomly alternates between jamming and idle states, and $iii)$ {\em reactive jammer}, which emits jamming signals upon sensing any ongoing traffic over the wireless channel. Compared with the first two types, the reactive jammer (RJ) is more sophisticated in that it achieves high jamming efficiency by only disrupting ongoing transmissions, which in general also lowers the risk of detection \cite{XuTra2005}.  A RJ faces an inherent tradeoff in the dual objectives of effectively degrading network throughput and in avoiding detection: as the ``aggressiveness'' of the jamming is increased, it increases the effectiveness of the disruption, but at the same time increases the ease with which behavior not compliant with carrier sensing is detected.  This detection is often based upon changes in network performance statistics such as the packet delivery rate (PDR), received signal strength (RSS), packet delivery delay, {\em etc}. However, the presence of {\em hidden terminals (HT)}, {\em i.e.}, transmissions detectable by the access point (AP) but not the sensing station, complicates the jamming detection problem, as the AP cannot always disambiguate whether a new packet is a (malicious) jamming decision or a (innocuous) HT mistake \cite{MokBen2015}.  This motivates our work on RJ detection and RJ design in the presence of HTs. There are several related detection problems: $i)$ deciding whether or not a specified (suspicious) station is a jammer, which is the focus of this paper, $ii)$ identifying which station is the jammer given knowledge that there is a jammer in the network, and $iii)$ deciding whether or not each station in the network is a jammer or a compliant station.

%%%%%%%%%%%%%%%%%%%%%%%%%%%%%%%%%%%%%%%%%%%%%%%%%%%%%%%%%%%%%%%%%%%%%%%%%%%%%%%%%%%%%%%%
\subsection{Related works}

The detection of general jamming attacks has been extensively studied in \cite{XuTra2005,MokBen2015,ShiShe2009,LiKou2007,PelIli2011,LuWan2014,MarWyg2014,PunAkt2014,SpuGiu2014, CiuAub2017}. Xu \emph{et al.} \cite{XuTra2005} analyze the influence of various jamming attacks on the PDR and RSS of the network, and propose a thresholding algorithm for jamming detection. Other works such as \cite{MarWyg2014, PunAkt2014} utilize different metrics, such as the channel busy ratio, the number of retransmission attempts, etc., in addition to the metrics proposed by Xu \emph{et al.}, and employ machine learning based techniques for jamming detection. Shin \emph{et al.} \cite{ShiShe2009} propose an approach based on group testing to identify the trigger stations, whose signal triggers the RJ activity, in wireless sensor networks. Lu \emph{et al.} \cite{LuWan2014} investigates jamming attacks in time-critical networks and present analytical results of the network message invalidation ratio under jamming. There is also a body of work on analyzing jamming attacks' effects on the performance of wireless networks \cite{HarPin2011,BenBou2011,BayKin2013}. Bayraktaroglu \emph{et al.} \cite{BayKin2013} present theoretical results of the IEEE 802.11 throughput under various jamming attacks, and their analysis is mainly built upon Bianchi's Markov chain model of 802.11 DCF \cite{Bia2000}. 

Although there is a large body of work on jamming attacks, few analytical results have been developed about the relationship between the {\em effectiveness} and the {\em detectability} of jamming attacks, which is the central focus of our work. Li \emph{et al.} consider mathematical models of an optimal jamming attack which chooses its jamming probability that balances the tradeoff between the long-term amount of corrupted packets and the detection time under the slotted Aloha protocol \cite{LiKou2007}. The jamming detection algorithm that Li \emph{et al.} employed is a sequential probability ratio test based on the amount of collision events. One drawback of Li \emph{et al.}'s work is that it only considers the slotted Aloha protocol, which does not incorporate carrier sense multiple access (CSMA), an essential feature of the ubiquitous IEEE 802.11 protocol. 

%%%%%%%%%%%%%%%%%%%%%%%%%%%%%%%%%%%%%%%%%%%%%%%%%%%%%%%%%%%%%%%%%%%%%%%%%%%%%%%%%%%%%%%%
\subsection{Contributions and outline}

This paper focuses on both RJ attack design and the detection of RJ attacks by the AP, in the presence of HTs. We consider a ``single-hop'' wireless network in which multiple wireless stations communicate directly with a single AP, equipped with a jamming detection monitor.  Our work is distinct from previous work such as \cite{LiKou2007} in that our detector leverages the CSMA mechanism underlying many modern wireless multiple access protocols.  Wireless stations are {\em compliant} in the sense that they are assumed to comply with the CSMA mechanism, meaning that they sense the wireless channel and only transmit if and when the medium is sensed as {\em idle}, while reactive jammers are bad actors that violate the CSMA mechanism, meaning that they sense the channel and only transmit when the medium is sensed as {\em active}.  This behavior, when identified, differentiates the reactive jammers from the compliant stations, and is the basis for the AP's ability to detect RJ.  The difficulty of this detection, however, is that the AP cannot disambiguate whether a transmission on top of an active channel is attributable to the (innocuous) HT or to the (malicious) reactive jammer.

\begin{table}[h]
\centering
\begin{tabular}{c|l}
\hline
Symbol 		& Meaning\\
\hline
$m$ & number of stations\\
$T_k$ & a bit indicating the idle/active status of the $k$-th station\\
$\Smc$ & the state space of the full observability model with size $(d+1)$\\
$d$ & the largest index of the state space of the Markov chain (indexing starts from 0)\\
$\Tmc$ & a state in $\Smc$, denoting a subset of nodes in $[m]$ that are active\\
$p_I(k, \Tmc)$ & probability that station $k$ senses the channel as idle with the set of active stations being $\Tmc$ \\
$p_A(1, \Tmc)$ & probability that the RRJ $1$ sends jam packets when the set of active stations being $\Tmc$ \\
$p_J$ & the RJ probability \\
$p_R$ & the random jamming probability \\
$\pbf_{\rm rrj}$ & a row vector equals to $[p_R, p_J]$ \\
$\lambda$ 	& a single station's off to on transition rate when probability of channel being sensed as idle is 1\\
$\gamma$    & a single station's on to off transition rate\\
$\Hsf_{\bsf}$ & null hypothesis when $\bsf=0$, alternative hypthesis when $\bsf=1$\\
$\Qbf^{\bsf}$ & transition rate matrix under $\Hsf_{\bsf}$\\
$q^{\bsf}_{i,j}$ & the transition rate from state $i$ to state $j$ under $\Hsf_{\bsf}$\\
$\Pbf^{\bsf}$ & transition matrix under hypothesis $\Hsf_{\bsf}$ \\
$p_{i,j}^{\bsf}$ & transition probability from state $i$ to state $j$ under $\Hsf_{\bsf}$\\
$\boldsymbol{\pi}^{\bsf}$ & a row vector denoting the stationary distribution of $\Pbf^{\bsf}$ (or $\Qbf^{\bsf}$)\\
$\pi_j^{\bsf}$ & the $j$-th element of $\boldsymbol{\pi}^{\bsf}$\\
$W$ & length of the sample path\\
$N_{i,j}$ & the number of transitions from state $i$ to state $j$ \\
$Z^{\bsf}$ & the log-likelihood ratio test statistic random variable \eqref{defZ} under $\Hsf_{\bsf}$\\
$\eta$ & jamming efficiency metric \eqref{etaDef}\\
\hline     
\end{tabular}
\caption{Notation.}
\label{tab:notation}
\end{table}

Our contributions include: $i)$ we design a novel algorithm based on a Markov chain model that detects RJ by its violation of the CSMA mechanism in a network with HT problems, $ii)$ we analyze the variance of the test statistic of a generic binary hypothesis testing of ergodic Markov chains, $iii)$ we build a novel intelligent RRJ model with the two competing goals to evade the jamming detector and to maximize its jamming efficiency, $iv)$ we propose and analyze the jamming detection model under different assumptions regarding the knowledge of the jammer and the AP regarding the state of the network. 

The rest of this paper is organized as follows. \secref{fullModel} formulates the basic mathematical model and sets up the hypothesis test for jamming detector with full observability, proposes an analytical upper bound on the test statistic's variance, and proposes the model for a jammer to choose its best jamming strategy. \secref{limitedModel} considers the cases when the detector only has limited observability. \secref{Experiments} shows numerical results, and \secref{conclusion} concludes the paper. \tabref{notation} lists general notation.

%%%%%%%%%%%%%%%%%%%%%%%%%%%%%%%%%%%%%%%%%%%%%%%%%%%%%%%%%%%%%%%%%%%%%%%%%%%%%%%%%%%%%%%%
\section{Markov model for full observability}
\label{sec:fullModel}
%%%%%%%%%%%%%%%%%%%%%%%%%%%%%%%%%%%%%%%%%%%%%%%%%%%%%%%%%%%%%%%%%%%%%%%%%%%%%%%%%%%%%%%%

One objective of this paper is to propose a tractable model that captures the essence of CSMA to detect potentially non-compliant stations, such as jammers. \secref{mathematicalModel} proposes a novel continuous-time Markov chain (CTMC) model for the overall transmission behavior of a network consisting of CSMA-compliant stations, and also proposes a more advanced jammer called a {\em random reactive jammer (RRJ)}. \secref{NPTestMC} introduces the general supervised hypothesis testing problem of Markov chain models. \secref{rrjStrategy} presents an approach for selecting the best jamming strategy for an intelligent RRJ. \secref{semiSupvised} introduces the semi-supervised testing problem for when only the behavior of compliant stations are available to the detector for training.
 
%%%%%%%%%%%%%%%%%%%%%%%%%%%%%%%%%%%%%%%%%%%%%%%%%%%%%%%%%%%%%%%%%%%%%%%%%%%%%%%%%%%%%%%%
\subsection{Mathematical models}
\label{sec:mathematicalModel}

There have been extensively works on modeling the CSMA protocols in the literature: one group of works concentrates on using an idealized CTMC model for CSMA/CA WiFi networks and approximating their throughput \cite{LieKai2010,BorHil2013,NarKni2012,BelZoc2014,FohZuk2007,JiaWal2010,LauKle2013}; while \cite{BelChe2016,MicRog2016} focus on using CTMC to model the interaction between the APs of multiple networks to analyze their performance. The proposed CTMC in our paper is distinct from  previous works in that: $i)$ our model is more realistic in that it is built on the physical interference model of the channel, while previous works employ a simple contention-graph based flow model that assumes WiFi signal sensing is deterministic and does not take the randomness of interference into consideration;  $ii)$ the objectives of our work are distinct from previous works in that our work aims to use the CTMC model to infer channel statistics, such as the fraction of certain station's transmitting time, and also to detect jammers using the state transition statistics. 

This section assumes the network monitor, assumed to be located at the AP, is ``omniscient'' in the sense that it is aware of the transmission behavior of all stations in the network, which consists of $m$ stations and the AP.  For simplicity, we also assume the stations are immobile and all stations have the same transmission power $p_t$.  We assume all stations are backlogged, {\rm i.e.,} each station always has a packet awaiting transmission, and as such the status of a single station $k \in [m]$ may be represented by a bit: $T_k=0$ or 1 indicates station $k$ is {\em idle} or {\em active}.  The overall transmission behavior of the whole network is modeled by a CTMC with {\em state space} $\Smc$, with cardinality $|\Smc| = d+1 \equiv 2^m$, and each state $\Tmc$ representing a distinct subset of $[m]$.  The state of the system is the subset of active stations, {\rm i.e.,} $\Tmc=\{k \in [m]: T_k=1\}$. 
% SW: No, there is no need for this
%For notational simplicity in the remaining of this paper, we assign an ordering to the states (which has no influence on our analysis), and denote the $i$-th state in $\Smc$ as $\Tmc_i$. 
We define several network parameters: the received power $p_o$ at a reference distance $d_o$, the {\rm Rayleigh fading random variable} of the $k$-th station $F_k\sim \mbox{Exp}(1)$, the pathloss exponent $\alpha$, the location vector of all the stations in $[m]$, denoted $\xbf=[x_1,x_2,...,x_m]$, and the minimum received power required for a station to detection transmission is $\theta$.  The service rate is denoted by $\gamma$, the sensing rate is $\lambda$, and $p_I(k, \Tmc)$ denotes the probability the channel is sensed as {\em idle} at station $k$ when the set of current active stations is $\Tmc$:
\begin{equation}
\label{eq:idleProb}
p_I(k,\Tmc)\equiv \Pbb\left(\sum_{k'\in \Tmc}F_{k'}l(\lVert x_{k'}-x_{k}\rVert)+ N_0 \leq \theta\right).
\end{equation}
Here, $\lVert x_{k'}-x_{k}\rVert$ denotes the Euclidean distance between stations $k'$ and $k$, $N_0$ denotes the background noise power, and $l(d)$ (with $l(d) = p_o d^{-\alpha} \text{ if $d \geq d_o$}$, and $l(d) = p_t, \text{if $d< d_o$}$) is the large-scale pathloss model.  As such, \eqref{idleProb} gives the probability that station $k$ senses the medium to be idle (as the received power level is below the minimum power level threshold $\theta$ required for detection of activity) when the set of concurrent transmitters is $\Tmc$.  Under the assumption that the network topology is static, the idle probability $p_I(k, \Tmc)$ defined in \eqref{idleProb} is the cumulative distribution function (CDF) of a weighted sum of independent exponential random variables with weights $l(\lVert x_{k'}-x_k\rVert)$. Considering the possibility that some stations have equal distance to station $k$, we  partition every active station set $\Tmc$ to $s(\Tmc)\geq 1$ disjoint groups: $\tau_1,...,\tau_{s(\Tmc)}$, each of which contains stations with equal distance to $k$: $\tau_{a}\equiv \{k'\in \Tmc: l(\lVert x_{k'}-x_k\rVert)=d_a\}$ for $a\in \{1,...,s(\Tmc)\}$. Then the interference in \eqref{idleProb} can be expressed as:
\begin{equation}
\label{eq:idleProbEqualDist}
\sum_{k'\in\Tmc}F_{k'} l(\lVert x_{k'}-x_k\rVert) = \sum_{a=1}^{s(\Tmc)} d_a \sum_{k'\in \tau_{a}}F_{k'}.
\end{equation}
Applying Amari and Misra's derivation of the general CDF of summation of independent exponential random variables in \cite{AmaMis1997}, and combining \eqref{idleProbEqualDist}, we may compute \eqref{idleProb} as:
\begin{equation}
\Pbb\left[\left(\sum_{a=1}^{s(\Tmc)} d_a \sum_{k'\in \tau_{a}}F_{k'}\right) \leq \theta'\right] 
\!\!=\! 1-\left(\prod_{a=1}^{s(\Tmc)} d_a^{-|\tau_{a}|}\right)\sum_{l=1}^{s(\Tmc)}\sum_{j=1}^{|\tau_l|}\frac{\mbox{exp}(-d_l^{-1} \theta') \theta'^{|\tau_l|-j} \Psi_{l,j}(-d_l^{-1})}{(|\tau_l|-j)!(j-1)!},
\end{equation}
where $\theta'=\theta - N_0$, and $\Psi_{l,j}(d)=-\frac{\partial^{j-1}}{\partial d^{j-1}}\{\prod_{a=0, a\neq l}^{s(\Tmc)}(d_a^{-1}+d)^{-|\tau_a|}\}$ (note that at the special point with $a=0$, let $(d_0^{-1}+d)^{-|\tau_0|}=d^{-1}$).  These probabilities are used in the CTMC transition rates, described below.  We begin below with three variants on a simple $m=2$ station network, then generalize in \secref{generalFullModel} to a network with an arbitrary number $m$ of stations.   

%%%%%%%%%%%%%%%%%%%%%%%%%%%%%%%%%%%%%%%%%%%%%%%%%%%%%%%%%%%%%%%%%%%%%%%%%%%%%%%%%%%%%%%%
\subsubsection{CTMC for $m=2$ stations}
\label{sec:twoStationFullModel}
\begin{figure}[!htb]
	\begin{minipage}[t]{0.3\textwidth}
		\centering
		\begin{tikzpicture}[font=\sffamily,scale=0.46]
		% Setup the style for the states
		\tikzset{node style/.style={state, 
				minimum width=1cm,
				line width=0.3mm,
				fill=gray!20!white}}
		% Draw the states
		\node[node style] at (0, 0)     (00)     {$\phi$};
		\node[node style] at (0, 5.5)     (10) 	 {$\{1\}$};
		\node[node style] at (5.5, 0)     (01)     {$\{2\}$};
		\node[node style] at (5.5, 5.5)     (11)     {$\{1,2\}$};
		% Connect the states with arrows
		\draw[every loop,
		auto=right,
		line width=0.3mm,
		>=latex,
		draw=black,
		fill=black]
		(00)     edge[bend right=20]     node {$\lambda$} (10)
		(10)     edge[bend right=20, auto=right] node {$\gamma$} (00)
		(10)     edge[bend left=20, auto=left]     node {$\lambda p_I(2,\{1\})$} (11)
		(11)     edge[bend left=20, auto=left] node {$\gamma$} (10)
		(11) edge[bend right=20] node {$\gamma$} (01)
		(01) edge[bend right=20, auto=right] node {$\lambda p_I(1,\{2\})$} (11)
		(00) edge[bend left=20, auto=left] node {$\lambda$} (01)
		(01) edge[bend left=20, auto=left] node {$\gamma$} (00)
		;
		\end{tikzpicture}
		\caption{CTMC of compliant stations $m=2$).}
		\label{fig:compliantCTMCm2}
	\end{minipage}
	\hfill	
	\begin{minipage}[t]{0.3\textwidth}
		\centering
		\begin{tikzpicture}[font=\sffamily,scale=0.46]
		% Setup the style for the states
		\tikzset{node style/.style={state, 
				minimum width=1cm,
				line width=0.3mm,
				fill=gray!20!white}}
		% Draw the states
		\node[node style] at (0, 0)     (00)     {$\phi$};
		\node[node style] at (0, 5.5)     (10) 	 {$\{1\}$};
		\node[node style] at (5.5, 0)     (01)     {$\{2\}$};
		\node[node style] at (5.5, 5.5)     (11)     {$\{1,2\}$};
		% Connect the states with arrows
		\draw[every loop,
		auto=right,
		line width=0.3mm,
		>=latex,
		draw=black,
		fill=black]
		%(00)     edge[bend right=20]     node {$\lambda p_I(1,\phi)$} (10)
		(10)     edge[bend right=0, auto=right] node {$\gamma$} (00)
		(10)     edge[bend right=20]     node {$\lambda p_I(2,\{1\})$} (11)
		(11)     edge[bend right=20, auto=right] node {$\gamma$} (10)
		(11) edge[bend right=20] node {$\gamma$} (01)
		(01) edge[bend right=20, sloped, anchor=center, below] node {$\lambda p_J (1-p_I(1,\{2\}))$} (11)
		(00) edge[bend right=20] node {$\lambda $} (01)
		(01) edge[bend right=20, auto=right] node {$\gamma$} (00)
		;
		\end{tikzpicture}
		\caption{CTMC with one naive RJ ($m=2, k_J = 1$).}
		\label{fig:NaiveRJCTMCm2}
	\end{minipage}
	\hfill
	\begin{minipage}[t]{0.3\textwidth}
		\centering
		\begin{tikzpicture}[font=\sffamily,scale=0.46]
		% Setup the style for the states
		\tikzset{node style/.style={state, 
				minimum width=1cm,
				line width=0.3mm,
				fill=gray!20!white}}
		% Draw the states
		\node[node style] at (0, 0)     (00)     {$\phi$};
		\node[node style] at (0, 5.5)     (10) 	 {$\{1\}$};
		\node[node style] at (5.5, 0)     (01)     {$\{2\}$};
		\node[node style] at (5.5, 5.5)     (11)     {$\{1,2\}$};
		% Connect the states with arrows
		\draw[every loop,
		auto=right,
		line width=0.3mm,
		>=latex,
		draw=black,
		fill=black]
		(00)     edge[bend right=20, auto=right]     node {$\lambda p_A(1,\phi)$} (10)
		(10)     edge[bend right=20, auto=left] node {$\gamma$} (00)
		(10)     edge[bend right=20, auto=right]     node {$\lambda p_I(2,\{1\})$} (11)
		(11)     edge[bend right=20, auto=right] node {$\gamma$} (10)
		(11) edge[bend right=20, auto=left] node {$\gamma$} (01)
		(01) edge[bend right=20, auto=right] node {$\lambda p_A(1,\{2\})$} (11)
		(00) edge[bend right=20] node {$\lambda$} (01)
		(01) edge[bend right=20, auto=right] node {$\gamma$} (00)
		;
		\end{tikzpicture}
		\caption{CTMC with one RRJ ($m=2, k_J=1$).}
		\label{fig:RRJCTMCm2}
	\end{minipage}
\end{figure}  
{\em Two compliant stations.}  A compliant station does not transmit if it senses the channel is idle, but will transmit (as it is assumed backlogged) if it senses the channel is busy. Therefore, the transition rate of station $k$ from idle to active is $\lambda p_I(k,\Tmc)$, where recall $\lambda$ is the sensing rate, as shown in \figref{compliantCTMCm2}. The HT phenomenon in the network can be clearly captured by the idle probability: if $\Tmc$ contains only stations that are HTs relative to station $k$, then the idle probability $p_I(k,\Tmc)$ is relatively high, and the transition rate into the HT state $\Tmc \cup \{k\}$ is thus high as well. 

{\em One compliant station and one naive RJ.} The behavioral difference between a compliant station, potentially a HT, and a naive RJ is that a HT will transmit regardless of its HT counterparts (since HT pairs cannot hear each other's signal), while a naive RJ only transmits with a certain {\em jamming probability} as soon as it senses signals from other compliant stations on the channel. Define the jamming probability as $p_J$. The probability that a naive RJ $k_J$ can sense the current signal over the channel is $p_I(k_J,\Tmc)$, and thus the transition rate of naive RJ from idle to active is: $\lambda p_J (1-p_I(k_J,\Tmc))$. The interactions between stations in a network with a naive RJ is shown in \figref{NaiveRJCTMCm2}.  As shown, the transition rate from state $\phi$ ({\rm i.e.,} there are no active stations) to state $\{k_J\}$ ({\em i.e.}, jammer $k_J$ starts transmitting) is zero, since the naive RJ is only ``triggered'' by an active channel.  The state transition graphs in \figref{NaiveRJCTMCm2} and  \figref{compliantCTMCm2} are different, and in this case the Neyman-Pearson test of differentiating these two CTMCs is {\em degenerate} to a singular detection problem \cite{Lev2008}, meaning that the test can achieve arbitrarily small error \cite{MarMat2006}. 

{\em One compliant station and one RRJ.} To avoid the singular detection problem, we propose the RRJ model, which equips the RJ with additional randomness. \figref{RRJCTMCm2} shows the CTMC for a network containing a RRJ. The RRJ can better disguise its malicious behavior by mimicking HT: namely, when there is no traffic over the channel, the RRJ remains idle or randomly transmits packets with probability $p_R$, and when there are compliant packets transmitting, the RRJ decides whether or not to jam the compliant packet with certain jamming probability $p_J$. The proposed RRJ model increases the detection difficulty since the incorporation of random jamming behavior makes the RRJ similar to a HT. Thus we can define the {\em anomalous probability} of the RRJ $k_J$ as the probability that the RRJ sends jam packets when the set of active stations is $\Tmc$
\begin{equation}
\label{eq:pA}
p_{A}(k_J, \Tmc)\equiv p_R p_I(k_J,\Tmc) + p_J (1-p_I(k_J,\Tmc)).
\end{equation}
The RRJ has $(p_R,p_J)$ as design parameters.  The transition rate of a RRJ from idle to active is
\begin{equation}
\label{eq:rrjTransitionRate}
q_{\Tmc, \Tmc\cup\{k_J\}} = \lambda p_A(k_J, \Tmc).
\end{equation} 

%%%%%%%%%%%%%%%%%%%%%%%%%%%%%%%%%%%%%%%%%%%%%%%%%%%%%%%%%%%%%%%%%%%%%%%%%%%%%%%%%%%%%%%%
\subsubsection{CTMC for an arbitrary number of stations}
\label{sec:generalFullModel}

We now extend the CTMC to an arbitrary number of stations $m$. Without loss of generality, index the {\rm station under test (SUT)}, {\rm i.e.,} the station which the AP is assessing, as the first station ({\em i.e.}, $k_J=1$), and use indices $\{2,\ldots,m\}$ to denote the other stations, assumed to be CSMA-compliant (hereafter referred to as {\em compliant stations} (CS)).  \figref{compliantCTMCm} shows the state transition diagram of the CTMC for a network without a RRJ (SUT is CS), while \figref{RRJCTMCm} shows the same with a RRJ (SUT is RRJ). From \eqref{rrjTransitionRate}, the RRJ has two design parameters, $(p_J,p_R)$, with $p_J$ controlling its ``reactive jamming'' behavior, and $p_R$ controlling its ``random jamming'' behavior. Define $\Qbf^{\bsf}\in \Rbb^{(d+1)\times (d+1)}$ for $\bsf \in \{\zerosf, \onesf\}$, where $\Qbf^{\zerosf}, \Qbf^{\onesf}$ are the transition rate matrices for CTMCs without and with the RRJ, respectively, and $q^{\bsf}_{i,j}$ is the $(i,j)$-th element of $\Qbf^{\bsf}$, denoting the transition rate from state $i$ to state $j$ in $\Smc$. {Note that for notation simplifity, we use $i$ (or $j$) to interchangeably denote: $i)$ a specific state $i$ in $\Smc$; $ii)$ the specific index of state $i$ in $\Smc$ according to certain ordering of states in $\Smc$, in the rest of this paper}. From \figref{compliantCTMCm}, if $i=\Tmc$ and $j = \Tmc\cup \{k\}$, then $q^{\zerosf}_{i,j} = \lambda p_I(k, \Tmc)$ and $q^{\zerosf}_{j,i} = \gamma$. From \figref{RRJCTMCm}, we have $q^{\onesf}_{i,j} = \lambda p_I(k, \Tmc)$ (if $k\neq 1$), $q^{\onesf}_{i,j} = \lambda p_A(1, \Tmc)$, and $q^{\onesf}_{j,i} = \gamma$. 

\begin{figure}[!htb]
	\centering
	\begin{minipage}[b]{0.46\textwidth}
	\begin{tikzpicture}[font=\sffamily,scale=0.55]
	% Setup the style for the states
	\tikzset{node style/.style={draw, 	
			ellipse, 
			align=center,
			minimum width=1cm, 
			line width=0.3mm,
			fill=gray!20!white}}
	% Draw the states
	\node[node style] at (0, 2.5)     (T)     {$\Tmc$};
	\node[node style, text width = 2cm] at (8, 2.5)     (TPlusI) 	 {$\Tmc \cup \{k\},$\\ $\forall k\in [m]\backslash\Tmc$};
	\draw[every loop,
	auto=right,
	line width=0.3mm,
	>=latex,
	draw=black,
	fill=black]
	(T)     edge[bend left=20, above]     node {$\lambda p_I(k,\Tmc)$} (TPlusI)
	(TPlusI) edge[bend left=20] node {$\gamma$} (T)
	;
	\end{tikzpicture}
	\caption{CTMC without reactive jammers ($m>1$).}
	\label{fig:compliantCTMCm}
	\end{minipage}
	\hfill
	\begin{minipage}[b]{0.46\textwidth}
	\begin{tikzpicture}[font=\sffamily,scale=0.55]
	% Setup the style for the states
	\tikzset{node style/.style={draw, 	
			ellipse, 
			align=center,
			minimum width=1cm, 
			line width=0.3mm,
			fill=gray!20!white}}
	% Draw the states
	\node[node style] at (0, 3)     				  (T)        {$\Tmc$};
	\node[node style, text width = 3cm] at (7, 5)     (TPlusI) 	 {$\Tmc \cup \{k\},$\\ $\forall k\in [m]\backslash (\Tmc \cup \{1\})$};
	\node[node style, text width = 1.5cm] at (5.5, 0)     (TPlusK) 	 {$\Tmc \cup \{1\},$ if $1\notin \Tmc$};
	\draw[every loop,
	auto=right,
	line width=0.3mm,
	>=latex,
	draw=black,
	fill=black]
	(T)     edge[bend left=20, above]     node {$\lambda p_I(k,\Tmc)$} (TPlusI)
	(T) edge[bend left=20, below] node {$ \quad\quad\quad\quad\quad\quad\quad\quad\lambda p_A(1,\Tmc)$} (TPlusK)
	(TPlusI) edge[bend left=20, above] node {$\gamma$} (T)
	(TPlusK) edge[bend left=20, below] node {$\gamma$} (T)
	;
	\end{tikzpicture}
	\caption{CTMC with a RRJ ($m>1$).}
	\label{fig:RRJCTMCm}
	\end{minipage}
\end{figure} 

%%%%%%%%%%%%%%%%%%%%%%%%%%%%%%%%%%%%%%%%%%%%%%%%%%%%%%%%%%%%%%%%%%%%%%%%%%%%%%%%%%%%%%%%
\subsection{Supervised hypothesis testing of Markov chain models}
\label{sec:NPTestMC}

Since the interaction between stations is modeled by Markov chains, we pose the jamming detection problem as a binary hypothesis testing problem, namely, to identify which of two Markov chains is more likely to have produced the sequence of observed states. This section introduces the general problem of binary hypothesis testing of Markov chains, while \secref{modelSetup} presents the setup of the hypothesis testing problem, and the theoretical distribution of the test statistic.  Then, \secref{distributionTestStatistics} develops an analytical upper bound on the variance of the test statistic. 
%This is a completely supervised approach which requires the detector to know the transition matrices under the null and the alternative hypotheses. In \secref{semiSupvised}, we will formulate the problem as an anomaly detection problem, which only requires knowledge of the null hypothesis, {\rm i.e.,} the normal transmission behavior of CS-compliant stations. 

%%%%%%%%%%%%%%%%%%%%%%%%%%%%%%%%%%%%%%%%%%%%%%%%%%%%%%%%%%%%%%%%%%%%%%%%%%%%%%%%%%%%%%%%
\subsubsection{The hypothesis test statistic}
\label{sec:modelSetup}

To facilitate analyzing the hypothesis testing problem, we transform the CTMC models proposed in \secref{mathematicalModel} to discrete-time Markov chains (DTMCs) through {\em uniformization}. Choose an uniformization parameter $u$ obeying $u\geq \underset{i,j,\bsf}{\max}\, |q^{\bsf}_{i,j}|$. Following the standard procedure, 
%of uniformizing a CTMC with transition rate matrix $\Qbf^{\bsf}$,
we obtain a DTMC with transition matrix $\Pbf^{\bsf} = \Ibf_{d+1}+\frac{\Qbf^{\bsf}}{u}$, where $\Ibf_{d+1}$ denotes the $d+1$-dimensional identity matrix. We suppose the network monitor at the AP  collects the transmission patterns of all stations in the network over a time interval of length $\frac{W}{u}$, consisting of $W$ intervals each of length $1/u$.  Each observation is sampled from the continuous-time stochastic process generated by a Markov chain in each time interval. To test whether or not the SUT $1$ is a RRJ, we collect a sequence of observations of all stations' transmission on-off processes, and form the observation sequence of $\ybf_W\equiv \{Y(t)\}_{t=1}^{W+1}$, called a \emph{sample path} of the DTMC, generated by a network containing a RRJ or containing only compliant stations. Under the assumption that the transition probability matrices $\Pbf^{\zerosf}$ of compliant DTMC and $\Pbf^{\onesf}$ of RRJ DTMC are known, a binary hypothesis testing problem can be formed:
\begin{equation}
\Hsf_{\zerosf}: \Pbf=\Pbf^{\zerosf}, ~ \Hsf_{\onesf}: \Pbf=\Pbf^{\onesf}.
\end{equation}
This formulation requires supervised training, {\em i.e.}, the detector uses both normal and attack samples for training. The monitor may use observations of the transmission patterns of the network both with and without RRJs to estimate the transition probabilities.  
%It is reasonable to assume the initial probability distribution of the states is the same as the stationary distribution, but, 

We now derive the likelihood of $\ybf_W$ under $\Hsf_{\zerosf}$ and $\Hsf_{\onesf}$. Define $i)$ the state transition counts as the number of $\ybf_W$'s transitions from state $i$ to $j$, denoted $N_{i,j}\equiv \sum_{t=1}^{W}\mathbf{1}_{\{y_t=i,y_{t+1}=j\}}$, where $y_t$ indicates the $t$-th element of vector $\ybf_W$, and $ii)$ the state occupancy counts, denoted $N_i\equiv \sum_{j=0}^d N_{i,j} =  \sum_{t=1}^{W}\mathbf{1}_{\{y_t=i\}} $. The log-likelihood of $\ybf_W$ under hypothesis $\Hsf_{\bsf}$ is: $\ln f(\ybf_W|\Hsf_{\bsf}) = \ln\pi^{\bsf}_{y_1}\prod_{t=1}^W p^{\bsf}_{y_t,y_{t+1}} = \ln\pi_{y_1}^{\bsf}+\sum_{i=0}^d\sum_{j=0}^d N_{i,j}\ln p^{\bsf}_{i,j}$,
with $\pi_{y_1}^{\bsf}$ denotes the stationary distribution of state $y_1$ of the DTMC under $\Hsf_{\bsf}$, and $p_{i,j}^{\bsf}$ is the transition probability from state $i$ to state $j$. Thus, the log-likelihood ratio (LLR) between $\Hsf_{\onesf}$ and $\Hsf_{\zerosf}$ is \cite{Lev2008}:
$\ln \frac{f(\ybf_{W}|\Hsf_{\onesf})}{f(\ybf_W|\Hsf_{\zerosf})}$, and the LLR test is:
\begin{equation}
\label{eq:NPtest}
\begin{aligned}
\ln\frac{\pi^{\onesf}_{y_1}}{\pi^{\zerosf}_{y_1}} + \sum_{i=0}^d\sum_{j=0}^d N_{i,j}\ln\left(\frac{p_{i,j}^{\onesf}}{p_{i,j}^{\zerosf}}\right)\,\, & \mathop{\gtreqless}_{\Hsf_{\zerosf}}^{\Hsf_{\onesf}}\,\, \xi(W),\\
\sum_{i=0}^d\sum_{j=0}^d \frac{N_{i,j}}{W}\ln\left(\frac{p_{i,j}^{\onesf}}{p_{i,j}^{\zerosf}}\right) \,\,  & \mathop{\gtreqless}_{\Hsf_{\zerosf}}^{\Hsf_{\onesf}}\,\,  \xi'\equiv \frac{\xi(W) -\ln\frac{\pi^{\onesf}_{y_1}}{\pi^{\zerosf}_{y_1}}}{W}.
\end{aligned}
\end{equation}
Note that the threshold $\xi(W)$ varies with the observation window length $W$ to balance the tradeoff between the false alarm rate and the missed detection rate.  The most natural choice is $\xi(W) = \xi_0 W$ for which $\xi' \approx \xi_0$ for $W$ large.  That is,  as evident from \eqref{NPtest}, the initial distribution has an influence on the detection threshold that decreases to $0$ in $W$ as $1/W$, and as such has little impact on the test outcome for large $W$.  The test statistics of the above LLR test is
\begin{equation}
\label{eq:defZ}
Z \equiv \sum_{i=0}^d\sum_{j=0}^d l_{i,j}\frac{N_{i,j}}{W},
\end{equation} 
with parameters $l_{i,j}\equiv \ln\left(\frac{p_{i,j}^{\onesf}}{p_{i,j}^{\zerosf}}\right)$. To derive the distribution of $Z$ under $\Hsf_{\bsf}$ requires the distributions of $N_{i,j}$. As is well-known \cite{Bar1951}, the $N_{i,j}$ have an asymptotic (in $W$) normal distribution. As a linear combination of $N_{i,j}$, the test statistic $Z$ is therefore also asymptotically normal. We henceforth use the superscript $\bsf\in\{\zerosf,\onesf\}$ to indicate a certain random variable under hypothesis $\Hsf_{\bsf}$. The expectation of $N^{\bsf}_{i,j}$ is $\Ebb[N_{i,j}^{\bsf}]=W\pi_i^{\bsf}p_{i,j}^{\bsf}$, 
and the variance of $N^{\bsf}_{i,j}$ may be written as (c.f. \cite{Bil1961})
\begin{equation}
\label{eq:VarNij}
{\rm Var}[N_{i,j}^{\bsf}] = W(\pi^{\bsf}_ip^{\bsf}_{i,j} - {\pi_i^{\bsf}}^2 {p_{i,j}^{\bsf}}^2) + 2 \pi_i^{\bsf} {p_{i,j}^{\bsf}}^2\sum_{t'=1 }^{W-1}(W-t')\epsilon_{j,i}^{\bsf}{}^{(t'-1)}.
\end{equation}
Here, $[\Pbf]_{i,j}$ denotes the $(i,j)$-th entry of a matrix $\Pbf$, ${\Pbf^{\bsf}}^{(t'-1)}$ is the $(t'-1)$-step transition matrix under $\Hsf_{\bsf}$, and $\epsilon_{j,i}^{\bsf}{}^{(t'-1)}\equiv [{\Pbf^{\bsf}}^{(t'-1)}]_{j,i}-\pi^{\bsf}_i$ \cite{Gan1955}. As $W\uparrow \infty$, transition counts $N_{i,j}$, $N_{i',j'}$ are asymptotically independent for $i\neq i'$, and thus we assume that ${\rm Cov}[N_{i,j},N_{i',j'}]=0$ for $i\neq i'$.  We have derived the covariance of $N_{i,j}$ between $N_{i',j'}$ when $i=i',j\neq j'$ as
\begin{equation}
\label{eq:CovNij}
 {\rm Cov}[N_{i,j}^{\bsf},N_{i',j'}^{\bsf}] =  \pi_i^{\bsf} p_{i,j}^{\bsf}p_{i,j'}^{\bsf}\left(-W\pi_i^{\bsf} + \sum_{ t'=1}^{W-1}(W-t') \left({\epsilon_{j,i}^{\bsf}}^{(t'-1)}+{\epsilon_{j',i}^{\bsf}}^{(t'-1)}\right)\right),
\end{equation}
but the derivation of \eqref{CovNij} is omitted due to space. Based on the above properties of $N_{i,j}$, the distribution of $Z^{\bsf}$ under $\Hsf_{\bsf}$ is asymptotically normal with mean $\mu_Z^{\bsf}\equiv \sum_{i,j} l_{i,j}\pi_i^{\bsf}p_{i,j}^{\bsf}$.  Defining $Z_i^{\bsf}=\sum_{j=0}^d l_{i,j}\frac{N_{i,j}^{\bsf}}{W}$, the variance of $Z^{\bsf}$ is derived as follows:
\begin{equation}
\label{eq:VarZ}
\begin{aligned}
{\rm Var}[Z^{b}]  & = \sum_{i=0}^d{\rm Var}[Z_i^{\bsf}] + \sum_{i\neq j}{\rm Cov}[Z_i^{\bsf},Z_j^{\bsf}]\\
 \overset{(a)}{\approx} & \sum_{i=0}^d{\rm Var}[Z_i^{\bsf}] = \sum_{i} \left(\sum_{j} l_{i,j}^2\frac{{\rm Var}[N_{i,j}^{\bsf}]}{W^2}+2\sum_{ j< j'} l_{i,j}l_{i,j'} \frac{{\rm Cov}[N_{i,j}^{\bsf},N_{i,j'}^{\bsf}]}{W^2}\right)\\
= &  \sum_{i, j}l_{i,j}^2\pi_i^{\bsf}p_{i,j}^{\bsf}  \left(\frac{1-\pi_i^{\bsf}p_{i,j}^{\bsf}}{W}+2 p_{i,j}^{\bsf} \sum_{t'=1 }^{W-1}\frac{W-t'}{W^2}\epsilon_{j,i}^{\bsf}{}^{(t'-1)}\right)+\\
& 2\sum_{i, j< j'}  l_{i,j}l_{i,j'}\pi_i^{\bsf} p_{i,j}^{\bsf}p_{i,j'}^{\bsf}\left(-\frac{\pi_i^{\bsf}}{W}+\sum_{t'=1}^{W-1} \frac{W-t'}{W^2}\left({\epsilon_{j,i}^{\bsf}}^{(t'-1)}+{\epsilon_{j',i}^{\bsf}}^{(t'-1)}\right) \right),
\end{aligned}
\end{equation}
Note the approximation $(a)$ holds for large $W$ by the asymptotic independence of $N_{i,j}$'s.

%%%%%%%%%%%%%%%%%%%%%%%%%%%%%%%%%%%%%%%%%%%%%%%%%%%%%%%%%%%%%%%%%%%%%%%%%%%%%%%%%%%%%%%%
\subsubsection{Analytical upper bound on the variance of the test statistic}
\label{sec:distributionTestStatistics}

The previous section showed that test statistic $Z$ is asymptotically normal as $W\uparrow \infty$, and we used this to derive the asymptotic mean and variance of $Z$ under the null and alternative hypotheses. However, the expression of ${\rm Var}[Z^{\bsf}]$ are unwieldy and difficult to compute as they includes a summation over the $t$-step transition matrices, which are $t$-th powers of the (one-step) transition matrix. In this section we develop a simpler, and more easily computable, upper bound on the variance of $Z$. As $Z$ is a linear combination of the transition counts, it is useful to first derive an upper bound on the variances and covariances of $\frac{N_{i,j}}{W}$. Xue {\em et al.\ } proposed spectral and graphical bounds for the error covariance measure of the classic steady-state distribution estimator of DTMCs in \cite{XueRoy2011}. However, they did not analyze the variances and covariances of the transition counts and the variance of the test statistic, which is the focus of our work. 
\begin{lemma}
\label{lem:VarCovNijUpperbound}
For an ergodic DTMC with simple eigenvalues, the variance of $\frac{N_{i,j}^{\bsf}}{W}$ has upper bound:
\begin{equation}
\label{eq:VarNijUpperbound}
{\rm Var}\left[\frac{N_{i,j}^{\bsf}}{W}\right] \leq  \pi_i^{\bsf}p_{i,j}^{\bsf}\left(\frac{(1-\pi_i^{\bsf}p_{i,j}^{\bsf})}{W} + 2p_{i,j}^{\bsf}c_{j,i}\frac{2+W|1-\lambda_1^{\bsf}|}{W^2|1-\lambda_1^{\bsf}|^2}\right).
\end{equation}
Here, $c_{j,i}\equiv \sum_{r=1}^d |u_{jr}v_{ri}|$ is a constant that depends upon the right eigenvector matrix $\Ubf\in \Rbb^{(d+1)\times (d+1)}$ of $\Pbf^{\bsf}$, $u_{jr}$ denotes the $(j,r)$-th entry of the $\Ubf$, $v_{ri}$ denotes the $(r,i)$-th entry of $\Ubf^{-1}$, and $\lambda_1^{\bsf}$ represents the largest non-unit eigenvalue of the transition matrix $\Pbf^{\bsf}$, in the sense that $|1-\lambda_1^{\bsf}|\leq |1-\lambda_r^{\bsf}|$ where $\lambda_r^{\bsf}$ denotes all the other non-unit eigenvalues of $\Pbf^{\bsf}$. Similarly, the covariance of $N_{i,j}$ and $N_{i,j'}$ ($j\neq j'$) has upper bound:
\begin{equation}
{\rm Cov}\left[\frac{N_{i,j}^{\bsf}}{W},\frac{N_{i,j'}^{\bsf}}{W}\right] \leq \pi_i^{\bsf}p_{i,j}^{\bsf}p_{i,j'}^{\bsf}\left(-\frac{ \pi_i^{\bsf}}{W} + \left(c_{j,i}+c_{j',i}\right)\frac{2+W|1-\lambda_1^{\bsf}|}{ W^2|1-{\lambda_1^{\bsf}}|^2} \right).
\end{equation}
\end{lemma}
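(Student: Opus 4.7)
The plan is to start from the exact expressions \eqref{VarNij} and \eqref{CovNij} and derive the stated bounds by controlling the weighted tail sum $\sum_{t'=1}^{W-1}(W-t')\epsilon_{j,i}^{\bsf}{}^{(t'-1)}$ via the spectral decomposition of $\Pbf^{\bsf}$. Because the expressions involve $t$-step transition matrices, an eigenvalue expansion converts the $t$-dependence into simple geometric factors, which is what makes the final bound computable.

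First, since $\Pbf^{\bsf}$ has simple eigenvalues, diagonalize it as $\Pbf^{\bsf} = \Ubf \Lambda \Ubf^{-1}$, where $\Lambda$ has $\lambda_0^{\bsf} = 1$ together with the non-unit eigenvalues $\lambda_1^{\bsf},\ldots,\lambda_d^{\bsf}$. The component associated with $\lambda_0^{\bsf}$ is precisely $\pi_i^{\bsf}$ (the stationary part), so subtracting it gives $\epsilon_{j,i}^{\bsf}{}^{(t-1)} = \sum_{r=1}^d u_{jr}(\lambda_r^{\bsf})^{t-1}v_{ri}$. Interchanging the order of summation reduces the task to bounding, for each non-unit $r$, the scalar sum $S_r(W) \equiv \sum_{t'=1}^{W-1}(W-t')(\lambda_r^{\bsf})^{t'-1}$.

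The key analytic step is to evaluate $S_r(W)$ in closed form by Abel summation (or direct manipulation of the geometric series), producing a rational expression whose numerator collects $W(1-\lambda_r^{\bsf})$ with a residual term of the form $\lambda_r^{\bsf}((\lambda_r^{\bsf})^W - 1)$, and whose denominator is $(1-\lambda_r^{\bsf})^2$. Taking absolute values and using ergodicity, which ensures $|\lambda_r^{\bsf}| \leq 1$ and hence $|(\lambda_r^{\bsf})^W - 1| \leq 2$, yields $|S_r(W)| \leq \frac{2 + W|1-\lambda_r^{\bsf}|}{|1-\lambda_r^{\bsf}|^2}$. A brief calculus check shows that $x \mapsto (2+Wx)/x^2$ is strictly decreasing on $(0,\infty)$, so this per-eigenvalue bound is maximized at the non-unit eigenvalue that minimizes $|1-\lambda_r^{\bsf}|$, which by definition is $\lambda_1^{\bsf}$. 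Combining with the triangle inequality over the spectral expansion and the definition of $c_{j,i}$ yields
\begin{equation*}
\left|\sum_{t'=1}^{W-1}(W-t')\,\epsilon_{j,i}^{\bsf}{}^{(t'-1)}\right| \;\leq\; c_{j,i}\cdot\frac{2+W|1-\lambda_1^{\bsf}|}{|1-\lambda_1^{\bsf}|^2}.
\end{equation*}
Substituting this into \eqref{VarNij}, dividing by $W^2$, and collecting terms produces the variance bound; applying the same inequality separately to the $\epsilon_{j,i}$ and $\epsilon_{j',i}$ contributions in \eqref{CovNij} gives the covariance bound.

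The main obstacle is the closed-form Abel-summation step: teasing out exactly the numerator $2 + W|1-\lambda_1^{\bsf}|$ (rather than something coarser like $\frac{W}{|1-\lambda_1^{\bsf}|}$) requires careful telescoping of the partial sums, and the universal constant ``2'' arises precisely from the bound $|(\lambda_r^{\bsf})^W - 1|\leq 2$, which depends on ergodicity through $|\lambda_r^{\bsf}|\leq 1$. Once the eigenvalue-by-eigenvalue estimate is in hand, the monotonicity argument that lifts the bound from arbitrary non-unit $\lambda_r^{\bsf}$ to the distinguished $\lambda_1^{\bsf}$ is routine, as is the substitution into \eqref{VarNij} and \eqref{CovNij}.
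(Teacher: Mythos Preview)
Your proposal is correct and follows essentially the same route as the paper's proof: spectral decomposition of $\Pbf^{\bsf}$ to express $\epsilon_{j,i}^{\bsf}{}^{(t'-1)}$ as $\sum_{r=1}^d u_{jr}(\lambda_r^{\bsf})^{t'-1}v_{ri}$, closed-form evaluation of the weighted geometric sum, the bound $|(\lambda_r^{\bsf})^W-1|\le 2$ from $|\lambda_r^{\bsf}|\le 1$, and then passage from each $\lambda_r^{\bsf}$ to $\lambda_1^{\bsf}$ via $|1-\lambda_1^{\bsf}|\le|1-\lambda_r^{\bsf}|$. The only cosmetic difference is ordering: the paper applies the triangle inequality before replacing each $|1-\lambda_r^{\bsf}|$ by $|1-\lambda_1^{\bsf}|$ (treating the terms $2/|1-\lambda_r^{\bsf}|^2$ and $W/|1-\lambda_r^{\bsf}|$ separately), whereas you first bound $|S_r(W)|$ and then invoke monotonicity of $x\mapsto(2+Wx)/x^2$; both yield the same inequality.
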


The proof is shown in \secref{proofVarCovNijUpperbound}.

\begin{prop}
\label{prop:VarZUpperbound}
When $\Pbf^{\zerosf}$, $\Pbf^{\onesf}$ are ergodic and have simple eigenvalues, ${\rm Var}[Z^{\bsf}]$ has upper bound:
\begin{equation}
\label{eq:VarZUpperbound}
\begin{aligned}
& {\rm Var}[Z^{\bsf}] \leq \sum_{i, j}l_{i,j}^2\pi_i^{\bsf}p_{i,j}^{\bsf}\left(\frac{1-\pi_i^{\bsf}p_{i,j}^{\bsf}}{W} + 2p_{i,j}^{\bsf}c_{j,i}\frac{2+W|1-\lambda_1^{\bsf}|}{W^2|1-\lambda_1^{\bsf}|^2}\right) \\
& +2 \sum_{ i, j< j'} \max{\{0,l_{i,j}l_{i,j'}\}}\pi_i^{\bsf} p_{i,j}^{\bsf}p_{i,j'}^{\bsf} \left(\frac{-\pi_i^{\bsf} }{W} + \left(c_{j,i}+c_{j',i}\right)\frac{2+W|1-\lambda_1^{\bsf}|}{W^2|1-\lambda_1^{\bsf}|^2} \right).
\end{aligned}
\end{equation}
\end{prop}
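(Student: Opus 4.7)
The plan is to derive the bound by direct term-by-term substitution of Lemma~\ref{lem:VarCovNijUpperbound} into the exact variance expression \eqref{eq:VarZ}. After applying the asymptotic independence approximation $(a)$ from \eqref{eq:VarZ}, we have ${\rm Var}[Z^{\bsf}] \approx \sum_{i,j} l_{i,j}^2 \frac{{\rm Var}[N_{i,j}^{\bsf}]}{W^2} + 2\sum_{i,\, j<j'} l_{i,j} l_{i,j'} \frac{{\rm Cov}[N_{i,j}^{\bsf}, N_{i,j'}^{\bsf}]}{W^2}$. For the ``diagonal'' (variance) piece, since $l_{i,j}^2 \geq 0$, I would substitute Lemma~\ref{lem:VarCovNijUpperbound}'s variance bound and sum; this produces the first double-sum of \eqref{eq:VarZUpperbound} verbatim.

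For the ``cross'' (covariance) piece, the coefficient $l_{i,j} l_{i,j'}$ is of indefinite sign, so I would split each term according to its sign. When $l_{i,j} l_{i,j'} \geq 0$, multiplying Lemma~\ref{lem:VarCovNijUpperbound}'s covariance bound by this nonnegative factor preserves the inequality and exactly recovers the $\max\{0, l_{i,j} l_{i,j'}\}$ contribution of the claimed bound (with the $\max$ operator reducing to the identity). When $l_{i,j} l_{i,j'} < 0$, the $\max$ operator truncates the right-hand side to zero, which means the corresponding actual term $2 l_{i,j} l_{i,j'} \frac{{\rm Cov}}{W^2}$ must be shown to be nonpositive (or at least absorbable into the diagonal bound). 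This is the subtle step: one can either invoke Cauchy--Schwarz together with AM--GM to bound $2 |l_{i,j} l_{i,j'} \,\mathrm{Cov}[N_{i,j}^{\bsf},N_{i,j'}^{\bsf}]|/W^2$ by a linear combination of the already-present diagonal terms $l_{i,j}^2 \,\mathrm{Var}[N_{i,j}^{\bsf}]/W^2$ and $l_{i,j'}^2 \,\mathrm{Var}[N_{i,j'}^{\bsf}]/W^2$ and absorb it there (at the cost of slightly inflating the $c_{j,i}$ constants), or exploit the exponential decay of the $\epsilon^{\bsf(t'-1)}_{j,i}$ terms in \eqref{eq:CovNij} to argue that such residual positive contributions are of subdominant order $O(W^{-3})$ and hence absorbed into the leading $O(W^{-1})$ diagonal mass already counted on the right-hand side.

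The main obstacle is exactly this rigorous treatment of the negative-coefficient cross terms: since the covariance in \eqref{eq:CovNij} has indefinite sign, one cannot simply drop these terms from the sum without justification, and the cleanest path is an absorption argument of the type sketched above. Once that subtlety is resolved, regrouping the two contributions into the form of \eqref{eq:VarZUpperbound} and identifying the coefficients with those appearing in Lemma~\ref{lem:VarCovNijUpperbound} completes the proof.
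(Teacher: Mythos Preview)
Your approach is exactly the paper's: its entire proof reads ``The result follows immediately by applying \lemref{VarCovNijUpperbound} to \eqref{VarZ}.'' You are in fact being more careful than the authors, who do not address the sign subtlety you flag for the cross terms with $l_{i,j}l_{i,j'}<0$; the paper treats the term-by-term substitution as immediate and supplies no absorption or sign argument, so the ``main obstacle'' you identify is simply glossed over rather than resolved in the published proof.
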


\begin{proof}
The result follows immediately by applying \lemref{VarCovNijUpperbound} to \eqref{VarZ}.
\end{proof}

We can see that $\lim_{W\rightarrow \infty} {\rm Var}[Z^{\bsf}] = 0$ with convergence rate $O(1/W)$.  The upper bound in \eqref{VarZUpperbound} is a complicated function of the eigenvectors and the largest non-unit eigenvalue of $\Pbf^{\bsf}$.

%%%%%%%%%%%%%%%%%%%%%%%%%%%%%%%%%%%%%%%%%%%%%%%%%%%%%%%%%%%%%%%%%%%%%%%%%%%%%%%%%%%%%%%%
\subsection{Strategies for the RRJ under full observability}
\label{sec:rrjStrategy}

While the previous subsection proposes a jamming detector based upon a Markov model of the carrier sensing mechanism of a network, this section studies how an intelligent RRJ should choose the best operating point. As an attacker, the RRJ naturally has two objectives: $i)$ maximizing its jamming efficiency, and $ii)$ minimizing the probability of being detected.  With this objective in mind,  \secref{performanceMetrics} proposes two performance metrics of an RRJ, \secref{LDP} develops a large deviations principle (LDP) approximation for the detection probability, and \secref{jointlyChoosePRPJ} formulates the RRJ's optimization problem based on the LDP. 

%%%%%%%%%%%%%%%%%%%%%%%%%%%%%%%%%%%%%%%%%%%%%%%%%%%%%%%%%%%%%%%%%%%%%%%%%%%%%%%%%%%%%%%%
\subsubsection{Performance metrics of RRJ}
\label{sec:performanceMetrics}

Two performance metrics are proposed with regard to the RRJ's two objectives: $i)$ the fraction of collision time caused by the SUT; $ii)$ the detector's error probability. Define a {\em ``collision state''} as a state with multiple transmitters. In this case, the probability of a SUT being in a collision state is: $r^{\bsf}=\sum_{i\in\Smc_3} \pi_{i}^{\bsf} = \boldsymbol{\pi}^{\bsf}\tbf$, where $\Smc_3\equiv \{\Tmc \in \Smc, 1\in \Tmc, |\Tmc|>1 \}$ denotes the set of collision states of the Markov chain involving the SUT, and vector $\tbf\in \Rbb^{2^m\times 1}$ has components $t_i=1$ if $i \in \Smc_3$ and $t_i=0$ otherwise. The {\em jamming efficiency metric} is defined as: 
\begin{equation}
\label{eq:etaDef}
\eta \equiv r^{\onesf}/r^{\zerosf}.
\end{equation}
The second performance metric is defined either as the {\em missed-detection rate (MDR)} or as the {\em equal error rate (EER)}, depending upon the context. Given a threshold $\xi'$,  the {\em MDR} is $p_{\RN{2}}(\xi')\equiv \Pbb[Z^{\onesf}\leq \xi']$, and the FAR is $p_{\RN{1}}(\xi') \equiv \Pbb[Z^{\zerosf}>\xi']$. The EER is defined as $\frac{p_{\RN{1}}(\xi^*) + p_{\RN{2}}(\xi^*)}{2}$ where $\xi^*$ is the specific detection threshold at which the MDR is equal to the FAR\footnote{Or has the minimum distance to the FAR if equality cannot be obtained; this scenario explains the EER definition.}

Define the column vector $\pbf_{\rm rrj}=[p_R, p_J] \in [0,1]\times [0,1]$ as holding the two design parameters $(p_R,p_J)$. We assume the RRJ is aware of the design of the jamming detector, and as such its objective is to {\em maximize} the detection error (which may be captured by the MDR), under the constraint that its jamming efficiency, $\eta$, is above certain efficiency threshold, denoted $\tau_{\eta}$. To emphasis the fact that the transition probability matrix of the DTMC with a RRJ and its corresponding stationary distribution are parameterized by the jamming probabilities, hereafter we denote the transition matrix and its stationary distribution as $\Pbf^{\onesf}(\pbf_{\rm rrj})$ and $\boldsymbol{\pi}^{\onesf}(\pbf_{\rm rrj})$ respectively.  Thus, the RRJ has the following optimization problem:
\begin{equation}
\label{eq:RRJobjective}
\underset{\pbf_{\rm rrj}\in [0,1]\times [0,1]}{\mbox{maximize}} \quad
\null p_{\RN{2}}(\xi'_{\alpha}) \quad \text{subject to} \quad \boldsymbol{\pi}^{\onesf}(\pbf_{\rm rrj})\tbf \geq \tau_{\eta} \boldsymbol{\pi}^{\zerosf}\tbf.
\end{equation}
Here, $i)$ $f_{Z|\Hsf_{\onesf}}(z)$ denotes the Gaussian PDF $\Nca(\mu_Z^{\onesf}, {\rm Var}[Z^{\onesf}])$, $ii)$ threshold $\xi'_{\alpha}\equiv \Phi^{-1}(1-\alpha)\sqrt{{\rm Var}[Z^{\zerosf}]}+\mu_Z^{\zerosf}$ (with $\Phi^{-1}$ the inverse CDF of the standard Gaussian distribution), and thus $iii)$ the detection objective may be written as $p_{\RN{2}}(\xi_{\alpha}')=\int_{-\infty}^{\xi'_{\alpha}}f_{Z|\Hsf_{\onesf}}(z)\drm z$. 

In computing $p_{\RN{2}}(\xi'_{\alpha})$ in the objective function, we have shown that deriving the exact distribution of $Z^{\bsf}$ involves calculating powers of the transition matrix. To simplify computation, one option is to use the upper bound derived in \secref{distributionTestStatistics} to approximate ${\rm Var}[Z^{\bsf}]$. In this case, we first analyze the effect of applying the upper bound of ${\rm Var}[Z^{\bsf}]$ to the MDR computation. Define the upper bound of ${\rm Var}[Z^{\bsf}]$ shown in \eqref{VarZUpperbound} as $\hat{\sigma}_Z^{\bsf}{}^2$ for $\bsf=\zerosf, \onesf$, and define a random variable $\hat{Z}^{\bsf}\sim \Nca(\mu_Z^{\bsf},{\hat{\sigma}_Z^{\bsf}}{}^2)$ with PDF $f_{\hat{Z}^{\bsf}}(z)$. Then we have $\hat{\xi'}_{\alpha}\equiv  \Phi^{-1}(1-\alpha)\hat{\sigma}_Z^{\zerosf}+\mu_Z^{\zerosf} \geq \xi'_{\alpha}$, and thus, $\int_{-\infty}^{\xi'_{\alpha}}f_{Z^{\onesf}}(z)\drm z \leq \int_{-\infty}^{\hat{\xi'}_{\alpha}}f_{Z^{\onesf}}(z)\drm z \leq \int_{-\infty}^{\hat{\xi'}_{\alpha}}f_{\hat{Z}^{\onesf}}(z)\drm z$. Hence the usage of $\hat{\sigma}_Z^{\bsf}$ will significantly inflate the MDR in the objective function \eqref{RRJobjective} if the upper bound in \eqref{VarZUpperbound} is not tight enough. Furthermore, the (approximated) MDR in the objective, $ p_{\RN{2}}(\xi'_{\alpha})$ ({\em i.e.}, $\int_{-\infty}^{\hat{\xi'}_{\alpha}}f_{\hat{Z}^{\onesf}}(z)\drm z$), cannot be expressed as a explicit function of the RRJ design parameters. Thus, it is infeasible to directly solve the optimization problem in \eqref{RRJobjective} and obtain an explicit optimal jamming strategy $(p_R^*, p_J^*)$. As such, we consider an alternative objective function based upon large deviations theory.

%%%%%%%%%%%%%%%%%%%%%%%%%%%%%%%%%%%%%%%%%%%%%%%%%%%%%%%%%%%%%%%%%%%%%%%%%%%%%%%%%%%%%%%%
\subsubsection{Large deviations principle (LDP) for the asymptotic missed detection rate (MDR)}
\label{sec:LDP}

The G\"{a}rtner-Ellis theorem generalizes Cram\'{e}r's theorem, which gives the decay rate of the probability that the empirical mean of independent and identically distributed (i.i.d.) random variables deviates from the expectation, to the non-i.i.d.\ case \cite{DemZei2010}. It has been shown in \cite{Lev2008} that the moment generating function (MGF) of $Z^{\zerosf}$ converges to $\Lambda(t)\equiv \ln \lambda_{\max}(t)$ where $\lambda_{\max}(t)$ is the eigenvalue with the largest magnitude of the matrix $\Pbf(t)$, whose $(i,j)$-th entry is $p_{i,j}^{\onesf}{}^{(t)} p_{i,j}^{\zerosf}{}^{(1-t)}$. Similarly, the MGF of $Z^{\onesf}$ is $\Lambda(t+1)=\ln\lambda_{\max}(t+1)$ \cite{Lev2008}. The Fenchel-Legendre transform of $\Lambda(t)$ is: $\Lambda^*(\xi)\equiv \sup_{t\in\Rbb} (\xi t-\Lambda(t)) = \sup_{t\in\Rbb} (\xi t- \ln\lambda_{\max}(t))$. According to the G\"{a}rtner-Ellis theorem, the decaying rate of the FAR and MDR for detection threshold $\xi'\in (\mu_Z^{\zerosf},\mu_Z^{\onesf})$ are \cite{Lev2008}: $ \lim\limits_{W\rightarrow \infty}\frac{\ln p_{\RN{1}}(\xi')}{W}=-\Lambda^*(\xi')$, and $\lim\limits_{W\rightarrow \infty}\frac{1}{W}\ln p_{\RN{2}}(\xi')=\xi'-\Lambda^*(\xi')$.  Therefore, the FAR and MDR can be represented by $p_{\RN{1}}(\xi') = c_{\RN{1}}(W) e^{-W\Lambda^*(\xi')}$, and $p_{\RN{2}}(\xi') = c_{\RN{2}}(W) e^{-W(\Lambda^*(\xi')-\xi')}$, with constants $\lim\limits_{W\rightarrow \infty}\ln c_{\RN{1}}(W)=0$, and similarly for $c_{\RN{2}}(W)$. Furthermore, the following lemma derives the convergence rate of the infimum MDR corresponding to a bounded FAR.

\begin{lemma}[Stein's Lemma \cite{DemZei2010}]
\label{lem:steinLemma}
Let $0<\alpha<1$ denote a FAR threshold, and let $p_{\RN{2}}^{\alpha}\equiv \inf_{\xi\in \{\xi':p_{\RN{1}}(\xi')<\alpha\}} p_{\RN{2}}(\xi)$ denote the corresponding minimum MDR.  Then the asymptotic (in $W$) minimum MDR corresponding to FAR $\alpha$ is given by the LDP $\lim\limits_{W\rightarrow \infty}\frac{1}{W} \ln p_{\RN{2}}^{\alpha}  = -I(\Pbf^{\zerosf},\Pbf^{\onesf}(\pbf_{\rm rrj}))$, with rate function $I(\Pbf^{\zerosf},\Pbf^{\onesf}(\pbf_{\rm rrj}))\equiv \sum_{i,j} \pi_i^{\zerosf} p_{i,j}^{\zerosf}\ln\frac{p_{i,j}^{\zerosf}}{p_{i,j}^{\onesf}}$.
\end{lemma}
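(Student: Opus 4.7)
The plan is to adapt the classical Stein lemma argument to our Markov-chain setting, leveraging the large deviations machinery already established in \secref{LDP}. The starting observation is that the rate function admits the simple interpretation $I(\Pbf^{\zerosf},\Pbf^{\onesf}(\pbf_{\rm rrj}))=-\mu_Z^{\zerosf}$; indeed, from \eqref{defZ} together with $l_{i,j}=\ln(p_{i,j}^{\onesf}/p_{i,j}^{\zerosf})$ and $\Ebb[N_{i,j}^{\zerosf}/W]\to \pi_i^{\zerosf}p_{i,j}^{\zerosf}$, one obtains $\mu_Z^{\zerosf}=\sum_{i,j}\pi_i^{\zerosf}p_{i,j}^{\zerosf}\ln(p_{i,j}^{\onesf}/p_{i,j}^{\zerosf})=-I$. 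This recasts the problem as asking how rapidly $\Pbb[Z^{\onesf}\leq \xi']$ can decay while keeping $\Pbb[Z^{\zerosf}>\xi']<\alpha$, with the optimal rate being the left-tail LDP rate of $Z$ evaluated at $\xi'=\mu_Z^{\zerosf}$.

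For the achievability direction, I would select the threshold $\xi'=\mu_Z^{\zerosf}+\delta$ with $\delta>0$ arbitrarily small. Ergodicity of the compliant-only DTMC forces $Z^{\zerosf}\to \mu_Z^{\zerosf}$ in probability via the Markov-chain law of large numbers, so $p_{\RN{1}}(\xi')\to 0$ and the FAR constraint is satisfied for all $W$ sufficiently large. The MDR under this choice is a left-tail deviation, whose decay rate is precisely the LDP expression already quoted in \secref{LDP}: $W^{-1}\ln p_{\RN{2}}(\xi')\to \xi'-\Lambda^*(\xi')$. Since the Fenchel--Legendre transform vanishes at the mean, $\Lambda^*(\mu_Z^{\zerosf})=0$, letting $\xi'\downarrow \mu_Z^{\zerosf}$ yields $\limsup_{W\to\infty} W^{-1}\ln p_{\RN{2}}^{\alpha}\leq \mu_Z^{\zerosf}+\delta=-I+\delta$, and sending $\delta\to 0$ finishes this half.

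For the converse, I would run a change-of-measure argument on the LLR-typical set under $\Hsf_{\zerosf}$, namely $T_W\equiv \{\ybf_W : |W^{-1}\ln(f(\ybf_W|\Hsf_{\onesf})/f(\ybf_W|\Hsf_{\zerosf}))+I|<\epsilon\}$. The Markov-chain LLN implies $\Pbb[T_W|\Hsf_{\zerosf}]\to 1$. For any acceptance region $B$ for $\Hsf_{\zerosf}$ with $\Pbb[B|\Hsf_{\zerosf}]\geq 1-\alpha$, one has $\Pbb[B\cap T_W|\Hsf_{\zerosf}]\geq 1-\alpha-o(1)$, while on $T_W$ the likelihoods satisfy $f(\ybf_W|\Hsf_{\onesf})\geq e^{-W(I+\epsilon)}f(\ybf_W|\Hsf_{\zerosf})$; summing over $B\cap T_W$ then gives $p_{\RN{2}}=\Pbb[B|\Hsf_{\onesf}]\geq (1-\alpha-o(1))\,e^{-W(I+\epsilon)}$. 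Taking logs, normalizing by $W$, and sending $\epsilon\to 0$ yields $\liminf_{W\to\infty} W^{-1}\ln p_{\RN{2}}^{\alpha}\geq -I$, which together with the achievability bound delivers the claimed limit.

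The main obstacle is justifying the LLN and LDP ingredients in the Markov-chain (rather than i.i.d.)\ setting, but the ergodicity hypothesis used already in \lemref{VarCovNijUpperbound} and the G\"{a}rtner--Ellis computation $\Lambda(t)=\ln\lambda_{\max}(t)$ cited in \secref{LDP} supply exactly the needed concentration and decay-rate statements, so the argument reduces to bookkeeping. A minor technical nuance is that the initial-state correction $\ln(\pi_{y_1}^{\onesf}/\pi_{y_1}^{\zerosf})$ appearing in the full LLR of \eqref{NPtest} is $O(1)$ in $W$ and hence contributes $O(1/W)$ after normalization, so it can be absorbed into the $o(1)$ slack in both halves of the argument.
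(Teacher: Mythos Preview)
The paper does not supply its own proof of this lemma: it is stated with a citation to \cite{DemZei2010} and then immediately used, so there is nothing in the paper to compare your argument against. Your proposal is correct and is essentially the standard textbook proof of Stein's lemma (achievability via a threshold just above $\mu_Z^{\zerosf}$ combined with the ergodic LLN and the G\"{a}rtner--Ellis LDP, converse via change of measure on the LLR-typical set), specialized to the Markov-chain setting exactly as one would find in Dembo--Zeitouni; the identification $I=-\mu_Z^{\zerosf}$ and the handling of the $O(1/W)$ initial-state term are both fine.
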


According to \lemref{steinLemma}, the infimum MDR for a test that guarantees the $p_{\RN{1}}(\xi')\leq \alpha$ can be expressed as $p_{\RN{2}}^{\alpha}= c_{\RN{2}}(W)e^{-I(\Pbf^{\zerosf},\Pbf^{\onesf}(\pbf_{\rm rrj}))W}$. Since $\lim_{W\rightarrow \infty}\ln c_{\RN{2}}(W)=0$, we can ignore $c_{\RN{2}}(W)$ by assuming that $c_{\RN{2}}(W)=1$. The following Lemma and proposition show that the function $I(\Pbf^{\zerosf},\Pbf^{\onesf}(p_R, p_J))$ is a convex function of the RRJ design parameters $p_R, p_J$.

\begin{lemma}
\label{lem:convergenceRateConvex}
The rate function $I(\Pbf^{\zerosf},\Pbf^{\onesf}(\pbf_{\rm rrj}))$ for the asymptotic (in $W$) minimum MDR of an RRJ is a convex function of $p_R$ and $p_J$.
\end{lemma}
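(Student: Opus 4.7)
The plan is to exploit the fact that the rate function splits into a constant term plus a sum whose dependence on $\pbf_{\rm rrj}$ enters only through $p_{i,j}^{\onesf}(\pbf_{\rm rrj})$, and then to show that each such entry is an \emph{affine} function of $(p_R,p_J)$. Convexity then follows from the elementary facts that $-\ln(\cdot)$ is convex on $(0,\infty)$, composition of a convex decreasing function with an affine function is convex, and nonnegative combinations of convex functions are convex.

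First I would rewrite
\begin{equation*}
I(\Pbf^{\zerosf},\Pbf^{\onesf}(\pbf_{\rm rrj})) \;=\; \underbrace{\sum_{i,j}\pi_i^{\zerosf}p_{i,j}^{\zerosf}\ln p_{i,j}^{\zerosf}}_{\text{independent of }\pbf_{\rm rrj}} \;-\; \sum_{i,j}\pi_i^{\zerosf}p_{i,j}^{\zerosf}\ln p_{i,j}^{\onesf}(\pbf_{\rm rrj}),
\end{equation*}
noting that $\pi_i^{\zerosf}$ and $p_{i,j}^{\zerosf}$ are determined solely by the compliant-only CTMC and are therefore constant in $\pbf_{\rm rrj}$. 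It thus suffices to show that $-\ln p_{i,j}^{\onesf}(\pbf_{\rm rrj})$ is convex in $(p_R,p_J)$ for every pair $(i,j)$ with $\pi_i^{\zerosf}p_{i,j}^{\zerosf}>0$.

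Next, using $\Pbf^{\onesf}=\Ibf_{d+1}+\Qbf^{\onesf}/u$ and reading off the rates from \figref{RRJCTMCm} and \eqref{rrjTransitionRate}, I would verify that each off-diagonal entry $p_{i,j}^{\onesf}$ falls into one of two cases: (a) the transition is into, between, or out of states that do not involve the RRJ turning on, in which case $p_{i,j}^{\onesf}$ equals either $\gamma/u$, $\lambda p_I(k,\Tmc)/u$, or $0$, all independent of $(p_R,p_J)$; or (b) the transition is of the form $\Tmc \to \Tmc\cup\{1\}$ with $1\notin\Tmc$, in which case $p_{i,j}^{\onesf} = \lambda p_A(1,\Tmc)/u = (\lambda/u)\bigl[p_R p_I(1,\Tmc)+p_J(1-p_I(1,\Tmc))\bigr]$, which is affine in $(p_R,p_J)$. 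The diagonal entries $p_{i,i}^{\onesf}=1-\sum_{j\neq i}p_{i,j}^{\onesf}$ are then finite sums of affine functions and hence also affine. So every entry of $\Pbf^{\onesf}(\pbf_{\rm rrj})$ is affine in $\pbf_{\rm rrj}$.

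Finally, since $-\ln$ is convex and nonincreasing on $(0,\infty)$ and $p_{i,j}^{\onesf}(\pbf_{\rm rrj})$ is affine (in particular concave), the composition $-\ln p_{i,j}^{\onesf}(\pbf_{\rm rrj})$ is convex wherever it is finite; multiplying by the nonnegative constant $\pi_i^{\zerosf}p_{i,j}^{\zerosf}$ and summing over $(i,j)$ preserves convexity, completing the proof. The only real obstacle is the bookkeeping in the affinity step: one must make sure that the uniformization step and the diagonal ``self-loop'' entries are handled correctly, and that we stay in the regime where $p_{i,j}^{\onesf}(\pbf_{\rm rrj})>0$ so that $\ln(\cdot)$ is well-defined (terms with $p_{i,j}^{\zerosf}=0$ are excluded from the sum by convention, and for the remaining terms positivity holds for $(p_R,p_J)\in[0,1]^2$ once $u$ is chosen large enough).
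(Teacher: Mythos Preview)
Your proof is correct and follows essentially the same approach as the paper: both arguments hinge on showing that each entry $p_{i,j}^{\onesf}(\pbf_{\rm rrj})$ is affine in $(p_R,p_J)$ and then invoking convexity preserved under affine composition and nonnegative sums. The only cosmetic difference is that the paper first groups the inner sum over $j$ into a KL divergence $D(\pbf_i^{\zerosf}\Vert\pbf_i^{\onesf})$ and cites convexity of $D(\pbf\Vert\qbf)$ in $\qbf$, whereas you work term-by-term with the convexity of $-\ln(\cdot)$ directly; since the former fact is itself an immediate consequence of the latter, the two arguments are equivalent.
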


The proof is in \secref{proofConvergenceRateConvex}.  Observe $I(\Pbf^{\zerosf},\Pbf^{\onesf}(\pbf_{\rm rrj}))$ obtains its unique minimum of zero at $p_R=1, p_J=0$, where the RRJ behaves exactly the same as a CS-compliant station.

%%%%%%%%%%%%%%%%%%%%%%%%%%%%%%%%%%%%%%%%%%%%%%%%%%%%%%%%%%%%%%%%%%%%%%%%%%%%%%%%%%%%%%%%
\subsubsection{Asymptotic (in $W$) optimal choice of RRJ design parameters $(p_R,p_J)$}
\label{sec:jointlyChoosePRPJ}

%Under the constraint that an RRJ achieve a sufficiently high jamming efficiency, {\rm i.e.,} $\eta \geq \tau_{\eta} $, where $\tau_{\eta}>0$ is a constant threshold. The purpose of the 
We consider the RRJ's objective of finding the (asymptotic in $W$) optimal $(p_R^*,p_J^*)$, {\rm i.e.,} to minimize the rate function (which maximizes the MDP) subject to the efficiency constraint:
%such that the detection error is maximized under the condition that the jamming efficiency is greater than $\tau_{\eta}r_1^{\zerosf}$:
\begin{equation}
\begin{aligned}
\underset{\pbf_{\rm rrj}\in [0,1]\times [0,1]}{\mbox{minimize}} \quad 
&\null I(\Pbf^{\zerosf}, \Pbf^{\onesf}(\pbf_{\rm rrj}))\quad \mbox{subject to}\quad \boldsymbol{\pi}^{\onesf}(\pbf_{\rm rrj})\tbf \geq \tau_{\eta} \boldsymbol{\pi}^{\zerosf}\tbf.
\end{aligned}
\end{equation}
There are two potential issues with the constraint in the minimization problem above: $i)$ to the best of our knowledge, the leading eigenvector of the transition matrix, $\boldsymbol{\pi}^{\onesf}(\pbf_{\rm rrj})$ generally does not have an explicit expression when the dimension of $\Pbf^{\onesf}(\pbf_{\rm rrj})$ is high; and $ii)$ the efficiency constraint produces a non-convex feasible set. To address both these issues, we approximate $\boldsymbol{\pi}^{\onesf}(\pbf_{\rm rrj})$ via Taylor series expansion.  This expansion both approximates the stationary distribution as a function of $\pbf_{\rm rrj}$, and also converts the non-convex set to a convex feasible set (at least for the first-order expansion).  The order-$k$ (for $k \in \{1,2\}$) Taylor series expansion of  $\boldsymbol{\pi}^{\onesf}$ around the point $\hat{\pbf}_{\rm rrj} = [\hat{p_R}, \hat{p_J}] \in (0,1]^2$ is 
\begin{equation}
\boldsymbol{\pi}_{\rm ts}^{\onesf}(\pbf_{\rm rrj}, k) 
= \boldsymbol{\pi}^{\onesf}(\hat{\pbf}_{\rm rrj}) + (\pbf_{\rm rrj} - \hat{\pbf}_{\rm rrj})\left(\Jbf_{\boldsymbol{\pi}^{\onesf}}(\hat{\pbf}_{\rm rrj}) + \frac{\mathbf{1}_{\{k=2\}}}{2} \Hbf_{\boldsymbol{\pi}^{\onesf}}(\hat{\pbf}_{\rm rrj}) [\Ibf_{d+1}\otimes (\pbf_{\rm rrj} - \hat{\pbf}_{\rm rrj})^{\intercal}]\right)
\end{equation}
with Jacobian $\Jbf_{\boldsymbol{\pi}^{\onesf}} (\pbf_{\rm rrj}) \in \Rbb^{2}$, and Hessian $\Hbf_{\boldsymbol{\pi}^{\onesf}}(\pbf_{\rm rrj})\in \Rbb^{2\times 2}$ ($\otimes$ denotes the Kronecker product). 

The $k$-th order partial derivative of $\boldsymbol{\pi}^{\onesf}(\pbf_{\rm rrj})$ w.r.t.\ $p_R$ is $\frac{\partial^k\boldsymbol{\pi}^{\onesf}(\pbf_{\rm rrj})}{\partial p_R^k} = k!(-1)^k \boldsymbol{\pi}^{\onesf}(\pbf_{\rm rrj})\left(\frac{\partial\Qbf^{\onesf}(\pbf_{\rm rrj})}{\partial p_R}\Gbf(\pbf_{\rm rrj})\right)^k$, in which $\Gbf(\pbf_{\rm rrj})$ denotes the group inverse matrix of $\Qbf^{\onesf}(\pbf_{\rm rrj})$ and $\frac{\partial\Qbf^{\onesf}(\pbf_{\rm rrj})}{\partial p_R}$ denotes the first order partial derivative w.r.t.\ $p_R$ \cite{DhoChe2013}.  The same result applies for the $k$-th order derivative w.r.t.\ $p_J$. The mixed second-order derivative is \cite{DhoChe2013}: 
\begin{equation}
\frac{\partial^2\boldsymbol{\pi}^{\onesf}(\pbf_{\rm rrj})}{\partial p_R \partial p_J} = \boldsymbol{\pi}^{\onesf}(\pbf_{\rm rrj})\left(\frac{\partial\Qbf^{\onesf}(\pbf_{\rm rrj})}{\partial p_R}\Gbf(\pbf_{\rm rrj})\frac{\partial\Qbf^{\onesf}(\pbf_{\rm rrj})}{\partial p_J}\Gbf(\pbf_{\rm rrj})+\frac{\partial\Qbf^{\onesf}(\pbf_{\rm rrj})}{\partial p_J}\Gbf(\pbf_{\rm rrj})\frac{\partial\Qbf^{\onesf}(\pbf_{\rm rrj})}{\partial p_R}\Gbf(\pbf_{\rm rrj})\right).
\end{equation}
For $k\in \{1,2\}$, the approximate optimization problem becomes:
\begin{equation}
\label{eq:approxOptimization}
\begin{aligned}
\underset{\pbf_{\rm rrj}\in [0,1]\times [0,1]}{\mbox{minimize}} \quad 
&\null I(\Pbf^{\zerosf}, \Pbf^{\onesf}(\pbf_{\rm rrj}))\quad \mbox{subject to}\quad \boldsymbol{\pi}_{\rm ts}^{\onesf}(\pbf_{\rm rrj}, k)\tbf \geq \tau_{\eta} \boldsymbol{\pi}^{\zerosf}\tbf.
\end{aligned}
\end{equation}

\begin{theorem}
\label{thm:rrjOptimization}
The optimization problem \eqref{approxOptimization} is a convex optimization problem when $k=1$.
\end{theorem}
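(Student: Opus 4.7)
The plan is to verify the two standard requirements of a convex optimization problem: (i) the objective is a convex function, and (ii) the feasible set is a convex set. For (i), I would simply invoke \lemref{convergenceRateConvex}, which has already established that $I(\Pbf^{\zerosf}, \Pbf^{\onesf}(\pbf_{\rm rrj}))$ is convex in $(p_R, p_J)$. So all the work reduces to showing that the feasible region is convex when $k = 1$.

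To handle (ii), I would decompose the feasible set as the intersection of two pieces: the box $[0,1]\times[0,1]$ and the set defined by the efficiency constraint $\boldsymbol{\pi}_{\rm ts}^{\onesf}(\pbf_{\rm rrj}, 1)\tbf \geq \tau_{\eta} \boldsymbol{\pi}^{\zerosf}\tbf$. The box is manifestly convex. The key observation is that when $k=1$, the second-order Hessian term in the Taylor expansion vanishes (the indicator $\mathbf{1}_{\{k=2\}}$ is zero), so
\begin{equation}
\boldsymbol{\pi}_{\rm ts}^{\onesf}(\pbf_{\rm rrj}, 1) = \boldsymbol{\pi}^{\onesf}(\hat{\pbf}_{\rm rrj}) + (\pbf_{\rm rrj} - \hat{\pbf}_{\rm rrj})\Jbf_{\boldsymbol{\pi}^{\onesf}}(\hat{\pbf}_{\rm rrj}),
\end{equation}
which is affine in $\pbf_{\rm rrj}$. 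Post-multiplying by the fixed vector $\tbf$ keeps the expression affine, and the right-hand side $\tau_{\eta} \boldsymbol{\pi}^{\zerosf}\tbf$ is a constant, so the efficiency constraint is a single affine inequality. Its solution set is a half-space, which is convex.

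Finally, since the intersection of convex sets is convex, the feasible region is convex. Combined with the convex objective from \lemref{convergenceRateConvex}, the problem \eqref{approxOptimization} with $k=1$ fits the definition of a convex optimization problem. I do not anticipate any serious obstacle: the entire argument hinges on the fact that truncating the Taylor series at first order removes the quadratic Hessian contribution in $\pbf_{\rm rrj}$, and this is precisely why the claim fails (or at least requires more care) for $k = 2$, where the constraint becomes quadratic and generally yields a non-convex sublevel set.
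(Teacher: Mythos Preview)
Your proposal is correct and follows essentially the same approach as the paper: invoke \lemref{convergenceRateConvex} for convexity of the objective, and observe that for $k=1$ the Taylor-expanded constraint $\boldsymbol{\pi}_{\rm ts}^{\onesf}(\pbf_{\rm rrj},1)\tbf \geq \tau_{\eta}\boldsymbol{\pi}^{\zerosf}\tbf$ is affine in $\pbf_{\rm rrj}$, hence defines a convex feasible region. Your write-up is slightly more detailed (explicitly noting the box constraint and the intersection argument), but the substance is identical.
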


\begin{proof}
According to \lemref{convergenceRateConvex}, the objective function in \eqref{approxOptimization} is a convex function of $\pbf_{\rm rrj}$. Also when $k=1$, $\boldsymbol{\pi}^{\onesf}_{\rm ts}$ is the first-order Taylor series expansion to $\boldsymbol{\pi}^{\onesf}$ at point $\hat{\pbf}_{\rm rrj}$. Therefore the inequality constraint is linear and also convex in $\pbf_{\rm rrj}$. 
\end{proof}

%%%%%%%%%%%%%%%%%%%%%%%%%%%%%%%%%%%%%%%%%%%%%%%%%%%%%%%%%%%%%%%%%%%%%%%%%%%%%%%%%%%%%%%%
\subsection{Semi-supervised hypothesis testing of Markov chain models}
\label{sec:semiSupvised}

The hypothesis testing problem proposed in \secref{modelSetup} will not be applicable if the RRJ is not available for training. An alternative test that requires only the knowledge of transition matrix corresponding to the network without an RRJ, termed semi-supervised testing, is of natural interest.  Intuitively, it is desired that the detector report an anomaly upon noticing a deviation in the transmission pattern from $\Pbf^{\zerosf}$.  To derive such a detector, we cast the problem as a goodness-of-fit test:
$\Hsf_{\zerosf}: \Pbf = \Pbf^{\zerosf}$ vs.\ $\Hsf_{\onesf}: \Pbf \neq \Pbf^{\zerosf}$.
The log-likelihood of observing a sample path $\ybf_{W}$ with transition counts $\{N_{i,j}\}$ is $\ln f(\ybf_{W}) = \ln \pi^{\zerosf}_{y_1} + \sum_{i,j} N_{i,j}\ln p_{i,j}^{\zerosf}$.
The goodness-of-fit test between $\Hsf_{\zerosf}$ and $\Hsf_{\onesf}$ is
\begin{equation}
\label{eq:semiSupervisedTest}
\sum_{i,j} \frac{N_{i,j}}{W}\ln p_{i,j}^{\zerosf} \,\,\,\mathop{\gtreqless}_{\Hsf_{\onesf}}^{\Hsf_{\zerosf}}\,\,\, \frac{\xi(W) - \ln \pi^{\zerosf}_{y_1}}{W},
\end{equation}
and the test statistic is $Z= \sum_{i,j} \frac{N_{i,j}}{W}\ln p_{i,j}^{\zerosf}$.  The only difference between the semi-supervised test statistic \eqref{semiSupervisedTest} and the supervised test statistic \eqref{defZ} is that the LLR coefficients $l_{i,j}$'s in the latter are replaced by the coefficients $\ln p_{i,j}^{\zerosf}$'s in the former. Therefore, following the same approach employed in \secref{NPTestMC}, it is straightforward to derive the mean and variance of the test statistic under $\Hsf_{\bsf}$.  In particular, $\Ebb[Z^{\bsf}] = \sum_{i,j} \pi_{i}^{\bsf}p_{i,j}^{\bsf}\ln p_{i,j}^{\zerosf}$, and
\begin{equation}
\begin{aligned}
\mbox{Var}[Z^{\bsf}] & =  \sum_{i, j}(\ln p_{i,j}^{\zerosf})^2 \pi_i^{\bsf}p_{i,j}^{\bsf}  \left(\frac{1-\pi_i^{\bsf}p_{i,j}^{\bsf}}{W}+2 p_{i,j}^{\bsf} \sum_{t'=1 }^{W}\frac{W-t'}{W^2}\epsilon_{j,i}^{\bsf}{}^{(t'-1)}\right)+\\
& 2\sum_{i, j< j'}  (\ln p^{\zerosf}_{i,j}\ln p^{\zerosf}_{i,j'})\pi_i^{\bsf} p_{i,j}^{\bsf}p_{i,j'}^{\bsf}\left(-\frac{\pi_i^{\bsf}}{W}+\frac{W-t'}{W^2}\left({\epsilon_{j,i}^{\bsf}}^{(t'-1)}+{\epsilon_{j',i}^{\bsf}}^{(t'-1)}\right)  \right).
\end{aligned}
\end{equation}
In addition, the analytical upper bound on the test statistic proposed in \propref{VarZUpperbound} is applicable to the semi-supervised approach's test statistic if we replace the $l_{i,j}$'s in \eqref{VarZUpperbound} by $\ln p_{i,j}^{\zerosf}$'s. 

We next develop new models in \secref{limitedModel}, but in \secref{Experiments} we will $i)$ evaluate the accuracy of the test statistic variance upper bound from \secref{distributionTestStatistics}, $ii)$ evaluate the accuracy of the Taylor series approximation to the intelligent RRJ strategy optimization problem from \secref{jointlyChoosePRPJ}, and $iii)$ compare the detection accuracy of the supervised and semi-supervised hypothesis tests from \secref{semiSupvised}.

%%%%%%%%%%%%%%%%%%%%%%%%%%%%%%%%%%%%%%%%%%%%%%%%%%%%%%%%%%%%%%%%%%%%%%%%%%%%%%%%%%%%%%%%
\section{Markov models for limited observability}
\label{sec:limitedModel}
%%%%%%%%%%%%%%%%%%%%%%%%%%%%%%%%%%%%%%%%%%%%%%%%%%%%%%%%%%%%%%%%%%%%%%%%%%%%%%%%%%%%%%%%

The ``full-observability'' assumed in \secref{fullModel}, {\rm i.e.,} knowledge of which stations are transmitting at any time, requires that AP and RRJ have the strong capability to differentiate all $m$ of the signals emitted from the stations.  Such an assumption is not feasible in practice.  As such, this section makes the weaker assumption that the AP and RRJ only have {\em limited observability}.  They don't know {\em which} stations are currently transmitting, but instead they only know partial information, such as, $i)$ the number of stations currently transmitting (the ``intermediate'' model, \secref{intermediateModel}), or $ii)$ whether or not any stations are currently transmitting (the ``simplified'' model, \secref{simpleModel}).  

%%%%%%%%%%%%%%%%%%%%%%%%%%%%%%%%%%%%%%%%%%%%%%%%%%%%%%%%%%%%%%%%%%%%%%%%%%%%%%%%%%%%%%%%
\subsection{Intermediate model: knowledge of the number of transmitters}
\label{sec:intermediateModel}

To compute the test statistic $Z^{\bsf}$ in \secref{fullModel}, the AP must know network parameters $\lambda,\gamma$ and the station location vector $\xbf$ in order to compute the log-likelihood coefficients $l_{i,j}$.  It must also know the transmission status (active or idle) of each station, at every instant in time, in order to compute the transition counts $N_{i,j}$'s of the full sample path.  This is a strong assumption about the capability of the AP which may not hold in practice.   This subsection considers an ``intermediate'' model in which the AP only holds two pieces of information: $i)$ the number of stations currently transmitting over the channel, $C \equiv |\Tmc|$ (recall $\Tmc$ denotes the set of currently active stations); and $ii)$ whether or not the SUT (station $1$) is transmitting, $X \equiv \mathbf{1}_{\{1\in \Tmc\}}$. The state in this model is represented by these two numbers: $(C,X)$, which we assume the AP is able to observe.  The state is a stochastic process with a (largely reduced) state space of size $2m$. 

%%%%%%%%%%%%%%%%%%%%%%%%%%%%%%%%%%%%%%%%%%%%%%%%%%%%%%%%%%%%%%%%%%%%%%%%%%%%%%%%%%%%%%%%
\subsubsection{State aggregation}
\label{sec:stateAggregation} 

The intermediate model partitions the states of the full observability model, $\Smc$, into the several disjoint subsets. With $\hat{Y} \equiv (C, X)$ denoting a state in the intermediate model, the subset of states in $\Smc$ corresponding to $\hat{Y}$ is denoted $\Smc_{\hat{Y}}$, and consists of all states $\Tmc$ from $\Smc$ in which $|\Tmc| = C$ and $\mathbf{1}_{\{1 \in \Tmc\}} = X$.  This partition of the  state space is called the {\em intermediate partition}. \figref{compliantCTMCIntermediate} shows the state space and transitions of the intermediate model with $m=3$ stations.  

The resulting aggregated stochastic process is denoted $\{\hat{Y}(t): t\geq 0\}$, taking value in the {\em aggregated state space} $\hat{\Smc}=\{(0,0), (1,0), (1,1),...,(m,1)\}$.  Although the process $Y(t)$ is Markov, the aggregated process $\hat{Y}(t)$ may {\em not} be Markov.  A common term for (Markov chain) state aggregation is {\em lumping}, and it is common to call a Markov chain {\em lumpable} w.r.t.\ a certain state partition if the resulting lumped stochastic process is Markov \cite{BalYeo1993}.  We show below that the full model is {\em not strongly lumpable} \cite{BalYeo1993, ZhaWeb2009} w.r.t.\ the proposed intermediate state partition. 

\begin{definition}[\cite{BalYeo1993}]
\label{def:strongLump}
A CTMC is called \textbf{strongly lumpable} w.r.t.\ a partition $\{\Smc_k, k=1,...,K\}$ of its state space $\Smc$ if and only if the rate matrix $\Qbf$ of the CTMC satisfies that for each pair of partitions $\Smc_k$, $\Smc_{k'}$ and for any pair of states $i,j\in \Smc_{k}$, $\sum_{i'\in\Smc_{k'}} q_{i,i'} = \sum_{i'\in \Smc_{k'}} q_{j,i'}$.
\end{definition}

\textbf{Observation:} The full-observability model CTMC is not strongly lumpable w.r.t.\ the state partition induced by the intermediate model, as shown by considering a network with $m=3$ compliant stations. The full model state space is $\Smc=\{\phi, \{1\}, \{2\}, \{3\}, \{1,2\}, \{1,3\}, \{2,3\}, \{1, 2, 3\}\}$.  Consider two sets of states in the full model that map to state $(1,0)$ and $(2,1)$ in the intermediate model: $\Smc_{1,0}=\left\{\{2\}, \{3\}\right\}$, $\Smc_{2,1}=\{\{1,2\}, \{1,3\}\}$. For $i = \{2\}$, $\sum_{i'\in \Smc_{2,1}} q_{i,i'}= q_{\{2\},\{1,2\}} + q_{\{2\},\{1,3\}} = \lambda p_I(1,\{2\})$, while for state $j=\{3\}$, we have $\sum_{i'\in \Smc_{2,1}} q_{j,i'}= q_{\{3\},\{1,2\}} + q_{\{3\},\{1,3\}} = \lambda p_I(1,\{3\})$. By \eqref{idleProb}, $p_I(1,\{2\})\neq p_I(1,\{3\})$ when station 1 has different distances to stations 2 and 3.  Thus, the condition in \defref{strongLump} is violated, primarily due to the dependence of the idle to active transition rates on the set of current active stations. 

\begin{figure}[!htb]
	\vspace{+1mm}
	\begin{minipage}[b]{0.5\textwidth}
		\centering
		\begin{tikzpicture}[font=\sffamily,scale=0.30]
		% Setup the style for the states
		\tikzset{node style/.style={state, 
				minimum width=1cm,
				line width=0.3mm,
				fill=gray!20!white}}
		% Draw the states
		\node[node style] at (0, 0)      (00)     {$0,0$};
		\node[node style] at (5.5, 3)      (10) 	  {$1,0$};
		\node[node style] at (5.5, -3)     (11)     {$1,1$};
		\node[node style] at (11, 3)     (20)     {$2,0$};
		\node[node style] at (11, -3)    (21)     {$2,1$};
		\node[node style] at (16.5, 0)    (31)     {$3,1$};
		% Connect the states with arrows
		\draw[every loop,
		auto=right,
		line width=0.3mm,
		>=latex,
		draw=black,
		fill=black]
		(00)     edge[bend right=20]     node {} (10)
		(10)     edge[bend right=20, auto=left] node {} (00)
		(10)     edge[bend right=20]     node {} (21)
		(21)     edge[bend right=20, auto=left] node {} (10)
		(10) edge[bend right=20] node {} (20)
		(20) edge[bend right=20, auto=left] node {} (10)
		(20) edge[bend right=20] node {} (31)
		(31) edge[bend right=20, auto=left] node {} (20)
		(00) edge[bend right=20] node {} (11)
		(11) edge[bend right=20, auto=left] node {} (00)
		(11) edge[bend right=20] node {} (21)
		(21) edge[bend right=20, auto=left] node {} (11)
		(21) edge[bend right=20] node {} (31)
		(31) edge[bend right=20, auto=left] node {} (21)
		;
		\end{tikzpicture}
		\caption{CTMC of intermediate model ($m=3$).}\label{fig:compliantCTMCIntermediate}
	\end{minipage}
	\quad
	\begin{minipage}[b]{0.48\textwidth}
		\centering
		\begin{tikzpicture}[font=\sffamily,scale=0.28]
		% Setup the style for the states
		\tikzset{node style/.style={state, 
				minimum width=1cm,
				line width=0.3mm,
				fill=gray!20!white}}
		% Draw the states
		\node[node style] at (0, 0)     (00)     { $0,0$};
		\node[node style] at (0, 6.5)     (10) 	 { $1,0$};
		\node[node style] at (6.5, 0)     (01)     { $0,1$};
		\node[node style] at (6.5, 6.5)     (11)     { $1,1$};
		% Connect the states with arrows
		\draw[every loop,
		auto=right,
		line width=0.3mm,
		>=latex,
		draw=black,
		fill=black]
		(00)     edge[bend left=20, auto=left, sloped, anchor=center, above]     node {} (10)
		(10)     edge[bend left=20, auto=left, sloped, anchor=center, above] 	   node {} (00)
		(10)     edge[bend left=20, auto=left, sloped, anchor=center, above]     node {} (11)
		(11)     edge[bend left=20, auto=left, sloped, anchor=center, above]     node {} (10)
		(11) 	 edge[bend left=20, auto=left, sloped, anchor=center, above] 	   node {} (01)
		(01) 	 edge[bend left=20, auto=left, sloped, anchor=center, above]     node {} (11)
		(00) 	 edge[bend left=20, auto=left, sloped, anchor=center, above]     node {} (01)
		(01) 	 edge[bend left=20, auto=left, sloped, anchor=center, above]     node {} (00)
		;
		\end{tikzpicture}
		\caption{CTMC of simplified model.}\label{fig:compliantCTMCSimplified}
	\end{minipage}
\end{figure}

As the Markov chain under the full observability model is not strongly lumpable with respect to the state partition induced by the intermediate model, the stochastic process for the intermediate model is not Markov.  As our purpose is {\em not} to precisely model the aggregated stochastic process, bu rather to detect the presence or absence of an RRJ, we {\em approximate} the intermediate model's stochastic process as a Markov process, and derive an optimal detector under this assumption.  We demonstrate the accuracy of this approximation in \secref{Experiments}. 
% SW: do we really do this?

In order to construct the intermediate model Markov chain, we must assign transition rates $\hat{q}_{i, j}$ for $i, j\in \hat{\Smc}$.  For {\em lumpable} chains, the transition rates are obtained by directly summing the transition rates from one aggregate set to another.  As our chain is {\em not} lumpable, however, we employ an alternate method, called {\em ideal aggregate} \cite{Buc1994}.  Under this method, the aggregated transition rate for $\hat{Y}, \hat{Y}' \in \hat{\Smc}$ under hypothesis $\Hsf_{\bsf}$ is:
\begin{equation}
\begin{aligned}
\hat{q}^{\bsf}_{\hat{Y},\hat{Y}'} 
= \frac{\sum_{ \Tmc \in \Smc_{\hat{Y}}} \Pbb(Y=\Tmc |\Hsf_{\bsf}) \sum_{\Tmc' \in \Smc_{\hat{Y}'}} q^{\bsf}_{\Tmc,\Tmc'}}{\sum_{\Tmc \in \Smc_{\hat{Y}}} \Pbb(Y=\Tmc|\Hsf_{\bsf})} 
= \frac{\sum_{ \Tmc \in \Smc_{\hat{Y}}} \pi^{\bsf}_{\Tmc} \sum_{\Tmc' \in \Smc_{\hat{Y}'}} q^{\bsf}_{\Tmc,\Tmc'}}{\sum_{\Tmc\in \Smc_{\hat{Y}}} \pi^{\bsf}_{\Tmc}},
\end{aligned}
\end{equation}
Here, $Y$ denotes the (random) state under the full observability model. One advantage of using the ideal aggregate is its preservation of the jamming efficiency, as shown in \propref{equalEta}. 

\begin{prop}
\label{prop:equalEta}
The full and the ideal aggregated CTMC have the same jamming efficiency $\eta$.
\end{prop}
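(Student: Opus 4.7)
The plan is to show that the stationary distribution of the ideal aggregate CTMC is obtained by summing the full-observability stationary probabilities over each aggregate, and then observe that the set of collision states $\Smc_3$ of the full model is exactly the union of the aggregate states with $X=1$ and $C>1$. Together these two facts immediately force $\hat{r}^{\bsf}=r^{\bsf}$ for $\bsf\in\{\zerosf,\onesf\}$ and therefore $\hat{\eta}=\eta$.

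Concretely, I would define $\hat{\pi}^{\bsf}_{\hat{Y}}\equiv \sum_{\Tmc\in \Smc_{\hat{Y}}} \pi^{\bsf}_{\Tmc}$ and verify the global balance equations for the ideal-aggregate rate matrix $\hat{\Qbf}^{\bsf}$. The key cancellation is that when one multiplies $\hat{\pi}^{\bsf}_{\hat{Y}}$ by $\hat{q}^{\bsf}_{\hat{Y},\hat{Y}'}$, the denominator $\sum_{\Tmc\in\Smc_{\hat{Y}}}\pi^{\bsf}_{\Tmc}$ in the definition of $\hat{q}^{\bsf}_{\hat{Y},\hat{Y}'}$ cancels, giving
\begin{equation}
\hat{\pi}^{\bsf}_{\hat{Y}}\,\hat{q}^{\bsf}_{\hat{Y},\hat{Y}'} \;=\; \sum_{\Tmc\in \Smc_{\hat{Y}}}\pi^{\bsf}_{\Tmc}\sum_{\Tmc'\in \Smc_{\hat{Y}'}} q^{\bsf}_{\Tmc,\Tmc'}.
\end{equation}
Summing over all $\hat{Y}\neq \hat{Y}'$ and comparing with the analogous expression for the outflow $\hat{\pi}^{\bsf}_{\hat{Y}'}\sum_{\hat{Y}\neq\hat{Y}'}\hat{q}^{\bsf}_{\hat{Y}',\hat{Y}}$, the global balance equation for $\hat{Y}'$ reduces exactly to the aggregated balance over the set $\Smc_{\hat{Y}'}$ in the full chain, namely $\sum_{\Tmc\notin\Smc_{\hat{Y}'}}\sum_{\Tmc^*\in\Smc_{\hat{Y}'}}\pi^{\bsf}_{\Tmc}q^{\bsf}_{\Tmc,\Tmc^*}=\sum_{\Tmc^*\in\Smc_{\hat{Y}'}}\sum_{\Tmc\notin\Smc_{\hat{Y}'}}\pi^{\bsf}_{\Tmc^*}q^{\bsf}_{\Tmc^*,\Tmc}$, which follows by summing the full-chain stationarity $\boldsymbol{\pi}^{\bsf}\Qbf^{\bsf}=\mathbf{0}$ over the states $\Tmc^*\in \Smc_{\hat{Y}'}$ and cancelling the intra-aggregate terms.

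With $\hat{\pi}^{\bsf}_{\hat{Y}}=\sum_{\Tmc\in\Smc_{\hat{Y}}}\pi^{\bsf}_{\Tmc}$ established, the collision-state indicator set $\Smc_3=\{\Tmc:1\in\Tmc,|\Tmc|>1\}$ is partitioned exactly by $\{\Smc_{(C,1)}:C>1\}$, so
\begin{equation}
\hat{r}^{\bsf} \;=\; \sum_{C>1}\hat{\pi}^{\bsf}_{(C,1)} \;=\; \sum_{C>1}\sum_{\Tmc\in\Smc_{(C,1)}}\pi^{\bsf}_{\Tmc} \;=\; \sum_{\Tmc\in\Smc_3}\pi^{\bsf}_{\Tmc} \;=\; r^{\bsf},
\end{equation}
and thus $\hat{\eta}=\hat{r}^{\onesf}/\hat{r}^{\zerosf}=r^{\onesf}/r^{\zerosf}=\eta$, both for the $\Hsf_{\zerosf}$ and $\Hsf_{\onesf}$ chains.

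The only nontrivial step is the stationary-distribution identification; once the denominator cancellation in $\hat{q}^{\bsf}_{\hat{Y},\hat{Y}'}$ is noticed, the argument is essentially the standard fact that ideal aggregation preserves the aggregate masses of the stationary distribution (irrespective of whether the chain is strongly lumpable). The partitioning step for $\Smc_3$ is then bookkeeping. I do not anticipate the need to invoke uniqueness beyond the ergodicity of $\Qbf^{\bsf}$ assumed throughout \secref{fullModel}, which also ensures irreducibility of $\hat{\Qbf}^{\bsf}$ and hence uniqueness of $\hat{\boldsymbol{\pi}}^{\bsf}$.
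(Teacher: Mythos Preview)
Your proposal is correct and follows essentially the same approach as the paper: both arguments rest on the fact that the ideal-aggregate CTMC preserves the aggregated stationary masses of the full chain, from which the equality of $r^{\bsf}$ (and hence $\eta$) is immediate once one notes that the collision set $\Smc_3$ is a union of aggregate cells. The only difference is that the paper simply invokes this preservation property by citation, whereas you supply a direct verification via the denominator cancellation and global balance; your added detail is sound and makes the argument self-contained.
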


\begin{proof}
This follows immediately from the fact that the CTMC under ideal aggregation preserves the aggregated stationary distribution of the full model for any state partition \cite{DeMan2008}. 
\end{proof}

%%%%%%%%%%%%%%%%%%%%%%%%%%%%%%%%%%%%%%%%%%%%%%%%%%%%%%%%%%%%%%%%%%%%%%%%%%%%%%%%%%%%%%%%
\subsection{Simplified model: knowledge of whether or not there are active transmitters}
\label{sec:simpleModel}

In certain practical settings it will be unreasonable to assume that the AP is able to keep track of the number of active transmitters, much less the identities of those transmitters.  As such, this section considers the ``simplified'' model, in which only two bits of information are available to the AP: $i)$ whether or not there are any active transmissions among the CS-compliant stations, and $ii)$ whether or not the SUT is active.  The simplified model aggregates the station status bits $(T_k)_{k=1}^m$ to two bits: $S=\max_{k\in \{2,\ldots,m\}}T_k$ and $X = T_1$. The interaction between the SUT bit $X$ and the CS bit $S$ is captured as a four-state stochastic process $\tilde{Y}(t)$, where $\tilde{Y} = (S,X)$. The simplification is equivalent to partitioning the state space $\Smc$ of the full-observability model into four aggregate sets, $\{\Smc_{0}, \Smc_{1}, \Smc_{2}, \Smc_{3}\}$, where 
$i)$ $\Smc_{0}\equiv \{\emptyset\}$ denotes no station is active; 
$ii)$ $\Smc_{1}\equiv \{\{1\}\}$ holds the state in which the SUT is the only active radio; 
$iii)$ $\Smc_{2}\equiv \{\Tmc\in \Smc: 1 \not\in \Tmc, \Tmc \neq \emptyset\}$ holds states in which the SUT is idle and at least one of the CS-compliant stations is active;
$iv)$ $\Smc_{3} \equiv \{\Tmc\in\Smc: 1 \in \Tmc\}$ holds states in which the SUT and at least one CS-compliant station are active. The proposed state space aggregation 
%highlights the SUT's reaction to the channel status, and is reasonable in that it 
aligns with our objective to determine whether or not the SUT is CS-compliant or not. The state transition diagram for the simplified model is shown in \figref{compliantCTMCSimplified}.  \tabref{transitionRatesSimplified} gives the transition rates of the Markov chain on the simplified model state space under ideal aggregation (\secref{stateAggregation}), where $p^{\zerosf}(1,\Tmc)=
p_I(1,\Tmc)$ and $p^{\onesf}(1,\Tmc)=p_A(1,\Tmc)$, and the convenience parameters are defined in \tabref{transitionRatesParametersSimplified}. 

\begin{table}[!htb]
	\begin{minipage}[t]{\textwidth}
		\centering
		\begin{tabular}{c|c|c|c|c|c|c|c}
			\hline
			$\tilde{q}^{\bsf}_{00,10}$ & $\tilde{q}^{\bsf}_{10,00}$ &  $\tilde{q}^{\bsf}_{00,01}$ & $\tilde{q}^{\bsf}_{01,00}$ & $\tilde{q}^{\bsf}_{01,11}$ & $\tilde{q}^{\bsf}_{11,01}$ &  $\tilde{q}^{\bsf}_{10,11}$ & $\tilde{q}^{\bsf}_{11,10}$  \\
			\hline
			$(m-1)\lambda$ & $\beta^{\bsf}_{10,00}\lambda$ &  
			$p^{\bsf}(1,\phi)\lambda$ & $\gamma$ & 
			$\beta_{01,11}\lambda$ & $\beta^{\bsf}_{11,01}\lambda$ &
			$\beta^{\bsf}_{10,11}\lambda$ & $\gamma$ \\
			\hline
		\end{tabular}
		\caption{Transition rates for the simplified model.}
		\label{tab:transitionRatesSimplified}
	\end{minipage}
	\begin{minipage}[t]{\textwidth}
		\centering
		\begin{tabular}{c|c|c|c}
			\hline
			$\beta_{01,11}$ & $\beta^{\bsf}_{10,00}$ & $\beta^{\bsf}_{10,11}$ & $\beta^{\bsf}_{11,01}$ \\
			\hline
			$\sum_{i=2}^m p_I(i,\{1\})$ 
			& $\frac{  \sum_{\Tmc \in \Smc_2} \pi^{\bsf}_{\Tmc} \mathbf{1}_{\{|\Tmc|=1\}}}{\sum_{\Tmc \in \Smc_{2}} \pi^{\bsf}_{\Tmc}}$ 
			&  $\frac{ \sum_{\Tmc \in \Smc_{2}} \pi^{\bsf}_{\Tmc} p^{\bsf}(1, \Tmc) }{\sum_{\Tmc \in \Smc_{2}} \pi^{\bsf}_{\Tmc}}$ 
			& $\frac{ \sum_{\Tmc \in \Smc_{3}} \pi^{\bsf}_{\Tmc}\mathbf{1}_{\{|\Tmc|=2\}} }{\sum_{\Tmc \in \Smc_{3}} \pi^{\bsf}_{\Tmc}}$\\
			\hline
		\end{tabular}
		\caption{Convenience parameters in the transition rates for the simplified model.}
		\label{tab:transitionRatesParametersSimplified}
	\end{minipage}
\end{table}

%%%%%%%%%%%%%%%%%%%%%%%%%%%%%%%%%%%%%%%%%%%%%%%%%%%%%%%%%%%%%%%%%%%%%%%%
\section{Numerical results}
\label{sec:Experiments}
%%%%%%%%%%%%%%%%%%%%%%%%%%%%%%%%%%%%%%%%%%%%%%%%%%%%%%%%%%%%%%%%%%%%%%%%%%%%%%%%%%%%%%%%

We now present numerical results; network parameters are shown in \tabref{simulationParameters}\footnote{Parameter values are the defaults in \texttt{ns-3} (https://www.nsnam.org/).}. \figref{systemTopo} shows the network topology considered in the simulation: the topology spacing parameter $R$ controls the number of HT pairs in the network. Due to space limitations, we only show experimental results of $R=40$, which corresponds to stations 1-4 and stations 2-3 being HT pairs, when Rayleigh fading is omitted.

\begin{figure}[!htb]
	\begin{minipage}[b]{0.6\textwidth}
	\centering
	\subfloat[$m=4$]{\includegraphics[scale=0.45]{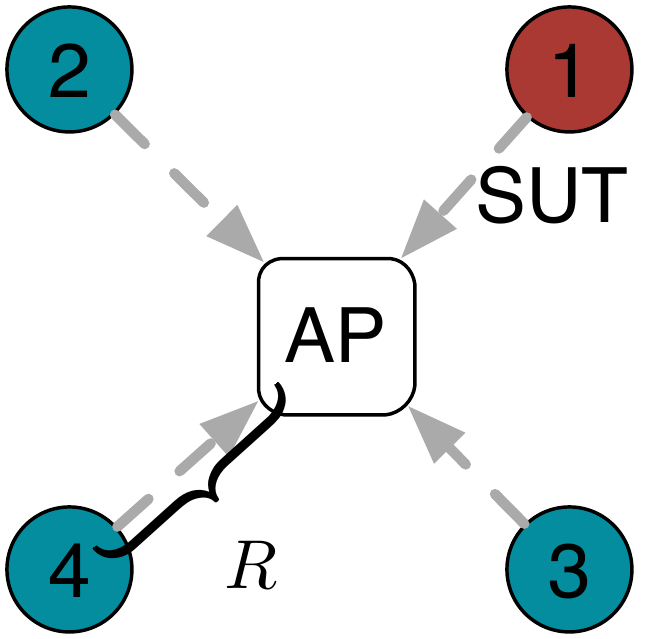}}
	\quad\quad\quad
	\subfloat[$m=6$]{\includegraphics[scale=0.45]{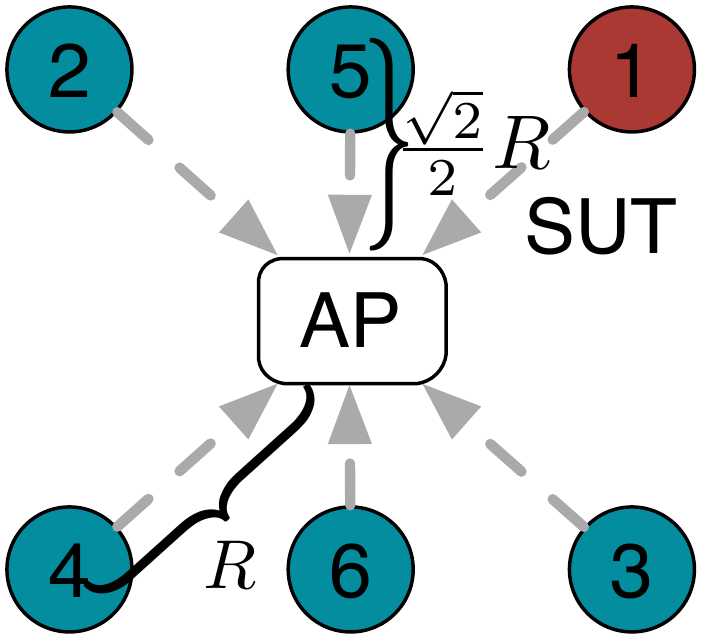}}
	\caption{Wireless network topology.}\label{fig:systemTopo}
	\end{minipage}
	\begin{minipage}[b]{0.35\textwidth}
		\centering
		\begin{tabular}{l|l}
			\hline
			$p_t$ 		& $0.04$ W\\			
			$N_0$ 		& $4.0124*10^{-13}$ W\\
			$\theta$ 	& $2.5119*10^{-12}$ W\\			
			$p_o$ 		& $8.5959*10^{-7}$ W\\		
			$d_o$		& $1$ m\\
			\hline
		\end{tabular}
		\caption{Simulation parameters.}\label{tab:simulationParameters}
	\end{minipage}
	\begin{minipage}[b]{\textwidth}
	\centering
	\subfloat[EER vs. $p_R$ ($p_J=0.01$)\label{fig:eerVsPR}]{\includegraphics[width=0.28\linewidth]{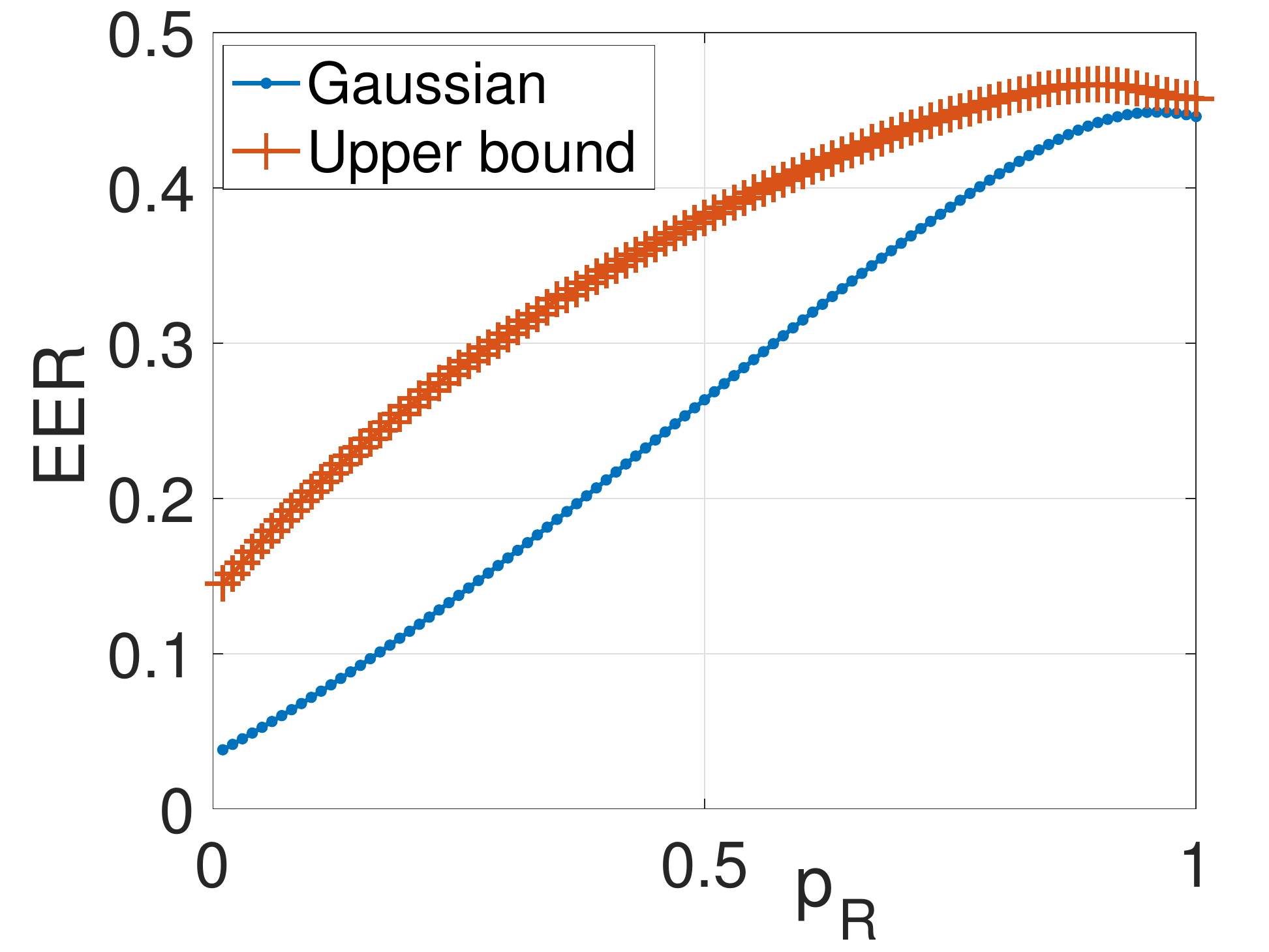}}
	\subfloat[EER vs. $p_J$ ($p_R=0.1$)\label{fig:eerVsPJ}]{\includegraphics[width=0.28\linewidth]{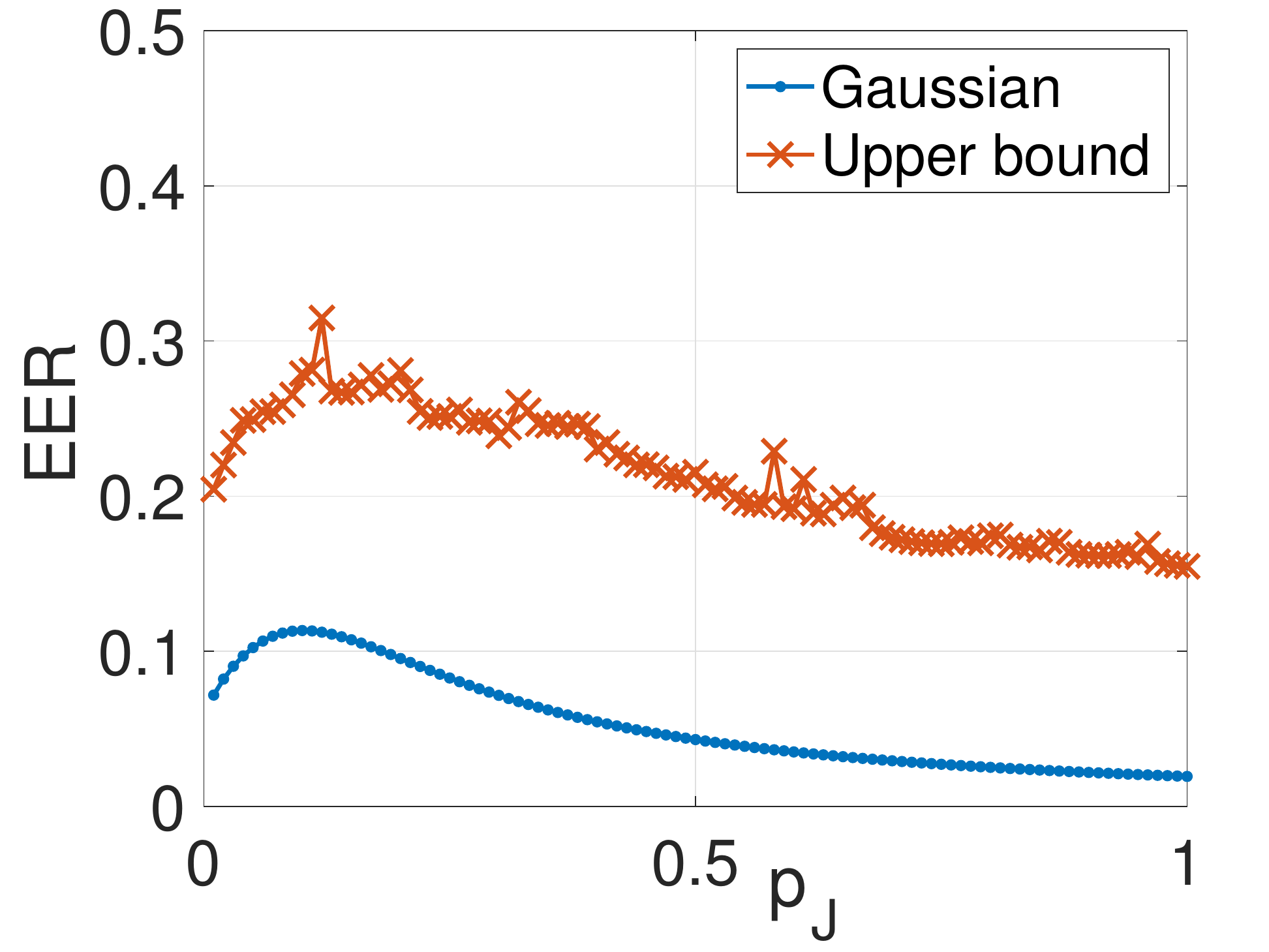}}
	\subfloat[EER vs. $p_J$ ($p_R=1$)\label{fig:eerVsPJ}]{\includegraphics[width=0.28\linewidth]{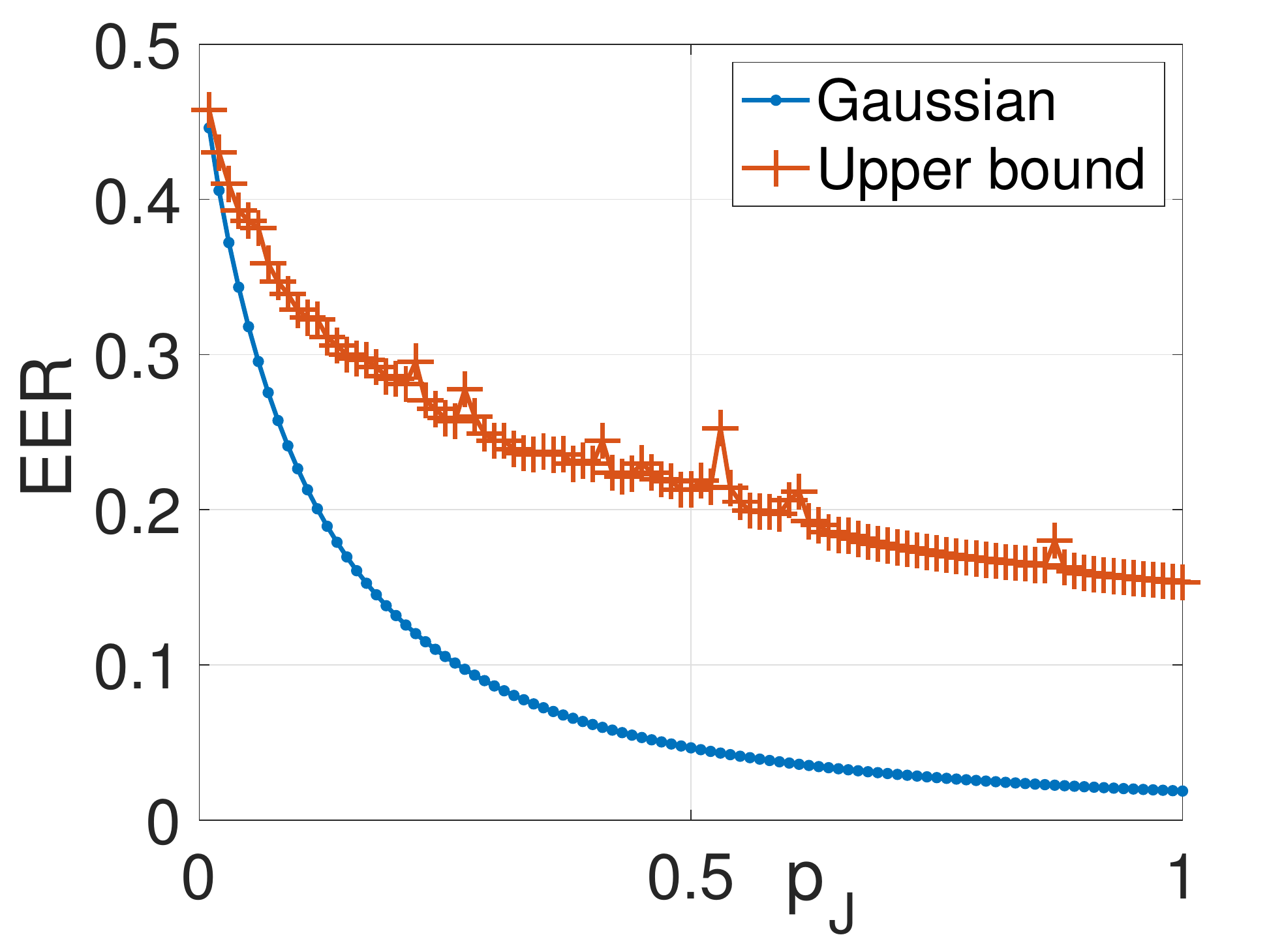}}
	\caption{Relationship between jamming parameters and the EER ($m=6$, $W=1000$).}\label{fig:eerVsPRPJ}	
	\end{minipage}
\end{figure}

%%%%%%%%%%%%%%%%%%%%%%%%%%%%%%%%%%%%%%%%%%%%%%%%%%%%%%%%%%%%%%%%%%%%%%%%%%%%%%%%%%%%%%%%
\subsection{Impacts of the RRJ parameters on the EER}

This subsection shows how the RRJ parameters $(p_R, p_J)$ affect the detection EER. \figref{eerVsPRPJ} shows the EER computed using the theoretical ${\rm Var}(Z^{\bsf})$ in \eqref{VarZ} (labeled ``Gaussian'') and the EER computed using the upper bound of ${\rm Var}(Z^{\bsf})$ in \propref{VarZUpperbound}. We see the EER is not necessarily a monotone function of either $p_R$ or $p_J$. We also observe the upper bound is somewhat loose when $p_R$ is small (with $p_J$ fixed) and when $p_J$ is large (with $p_R$ fixed).  This looseness provides justification for our development of the LDP in the objective function of \thmref{rrjOptimization}.

%%%%%%%%%%%%%%%%%%%%%%%%%%%%%%%%%%%%%%%%%%%%%%%%%%%%%%%%%%%%%%%%%%%%%%%%%%%%%%%%%%%%%%%%
\begin{figure}[!htb]
	\begin{minipage}[b]{0.63\textwidth}
		\subfloat[Compare $\eta_{\rm ts}$'s and $\eta$ \label{fig:etaTsComparison}]{\includegraphics[scale=0.25]{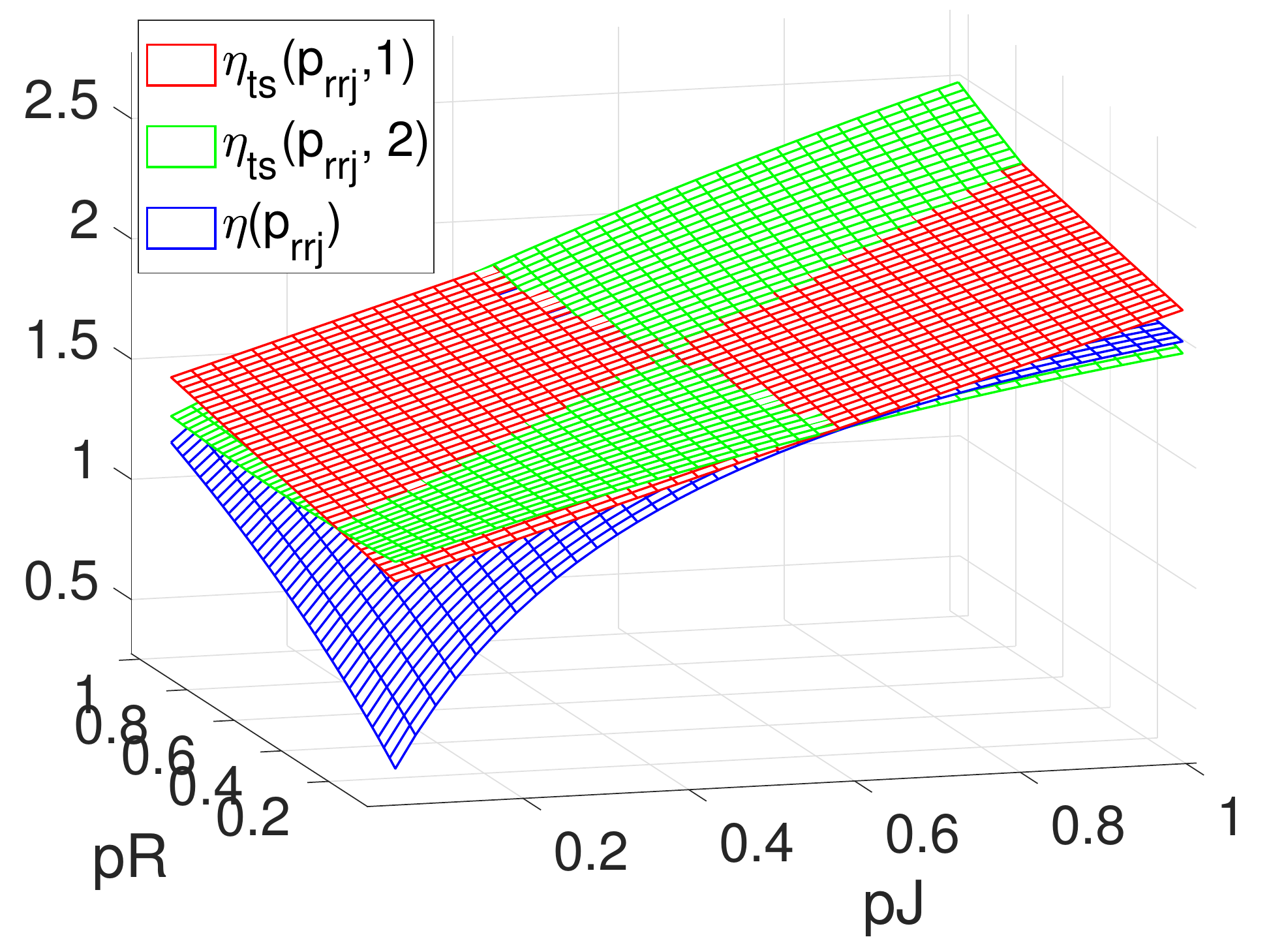}}
		\hfill
		\subfloat[$\frac{|\eta_{\rm ts}(\pbf_{\rm rrj}, 1)-\eta(\pbf_{\rm rrj})|}{\eta(\pbf_{\rm rrj})}$\label{fig:absoluteDiffEtaTs}]{\includegraphics[scale=0.25]{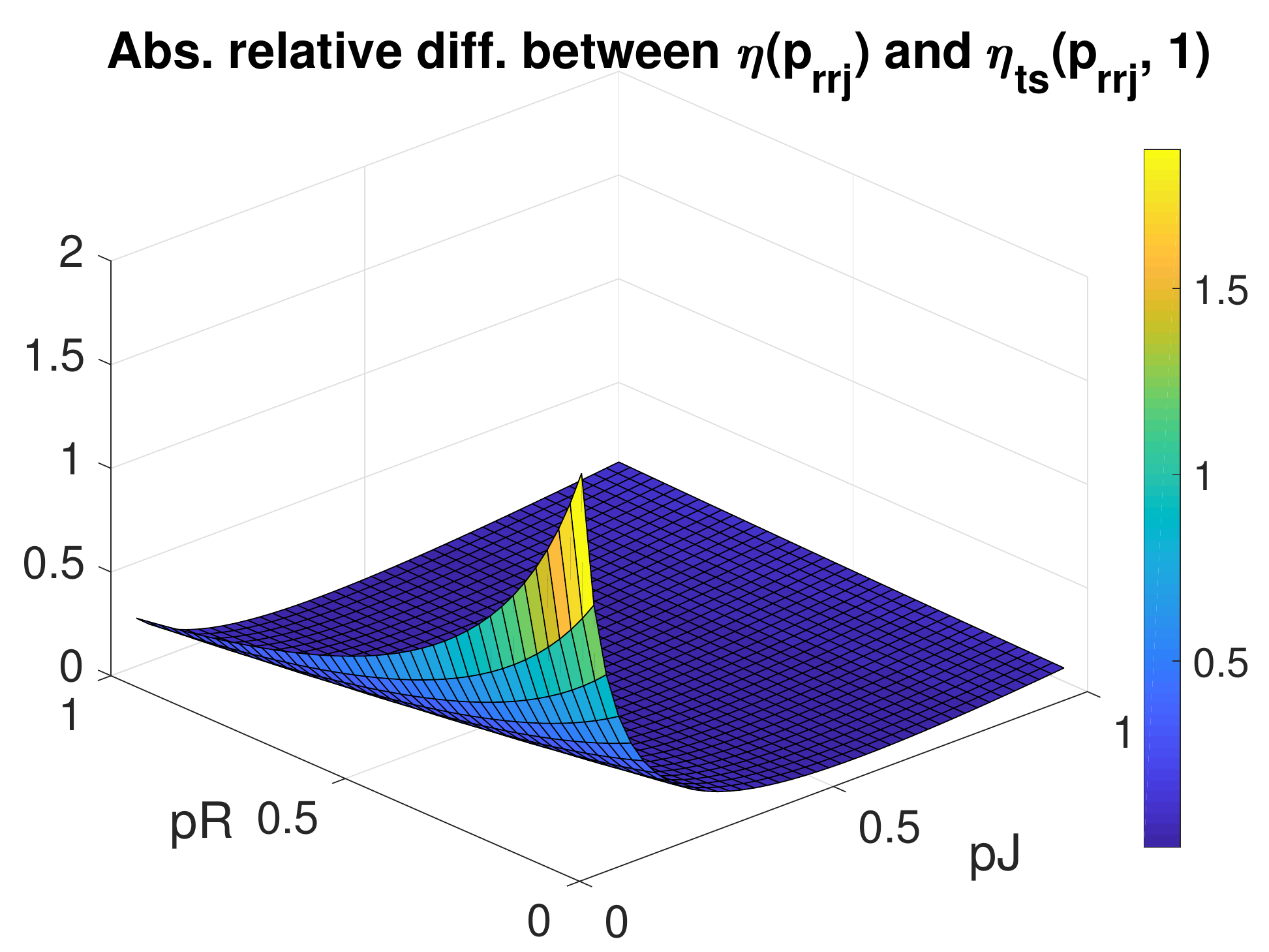}}
		\caption{Compare $\eta_{\rm ts}(\pbf_{\rm rrj}, 1)$, $\eta_{\rm ts}(\pbf_{\rm rrj}, 2)$ with the true $\eta$ ($\hat{\pbf}_{\rm rrj}=[0.5, 0.5]$, $m=6$, $R=40$)}
	\end{minipage}
	\quad\quad
	\begin{minipage}[b]{0.35\textwidth}
		\includegraphics[scale=0.3]{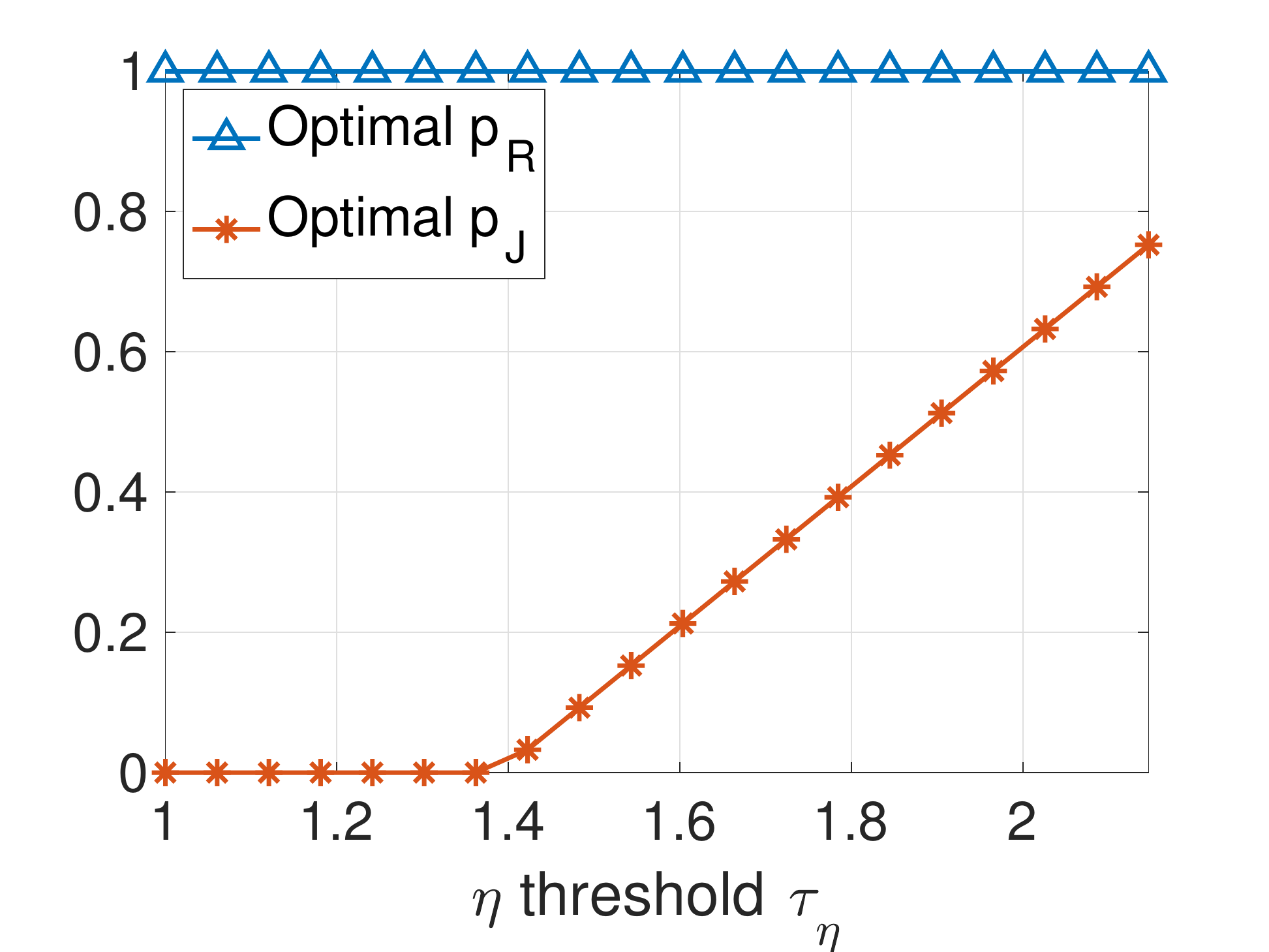}
		\caption{Optimal choices of $p_R$ and $p_J$ vs. $\tau_{\eta}$ ($m=6$, $R=40$).}\label{fig:optimalPRPJ}
	\end{minipage}
	\begin{minipage}[t]{0.32\textwidth}
		\centering
		\includegraphics[scale=0.25]{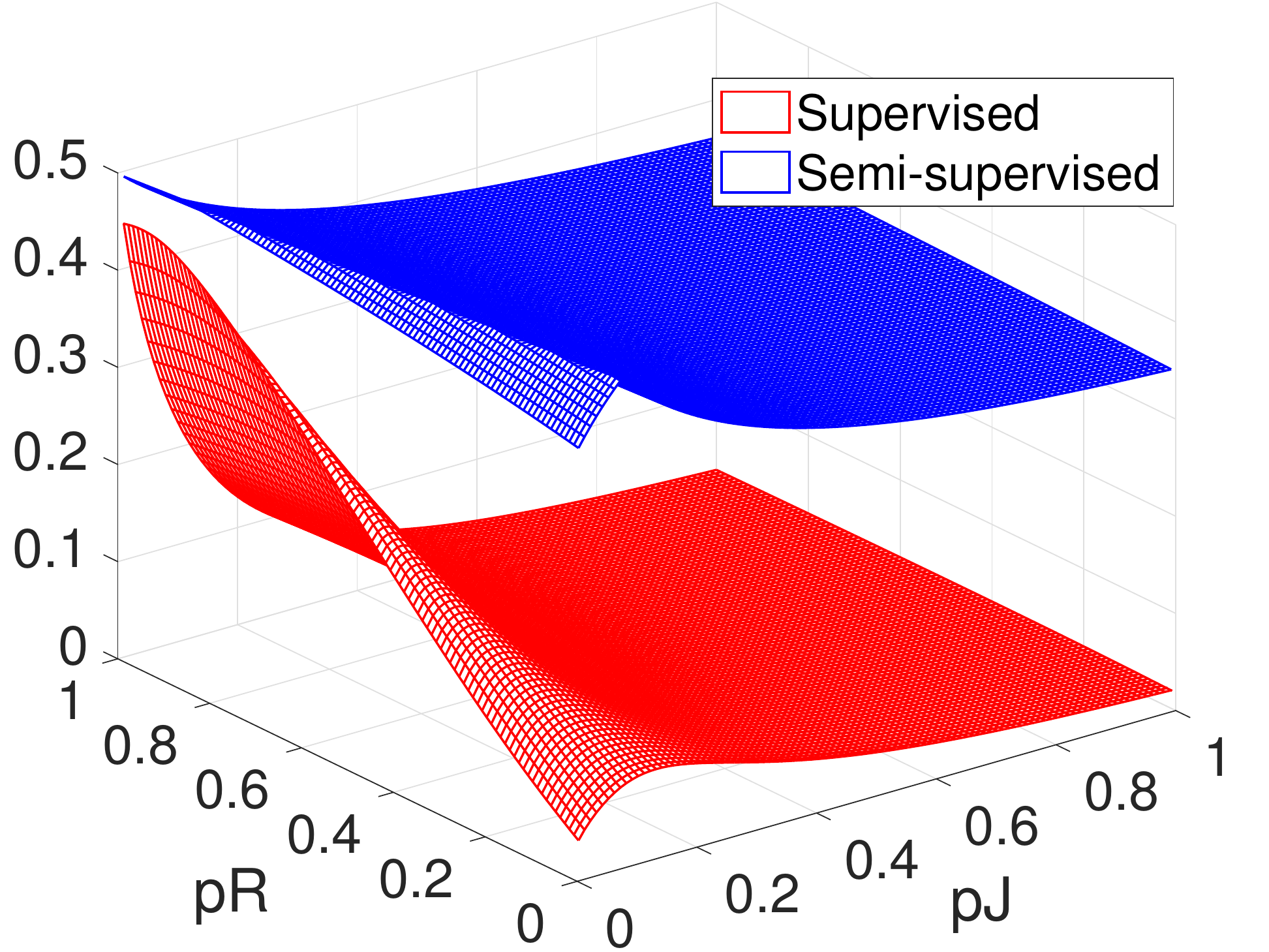}
		\caption{EER vs. $p_R-p_J$ \\($m=6$, $W=1000$).}\label{fig:EER-Compare-supVssemiSup}
	\end{minipage}
	\hfill
	\begin{minipage}[t]{0.32\textwidth}
		\centering
		\includegraphics[scale=0.23]{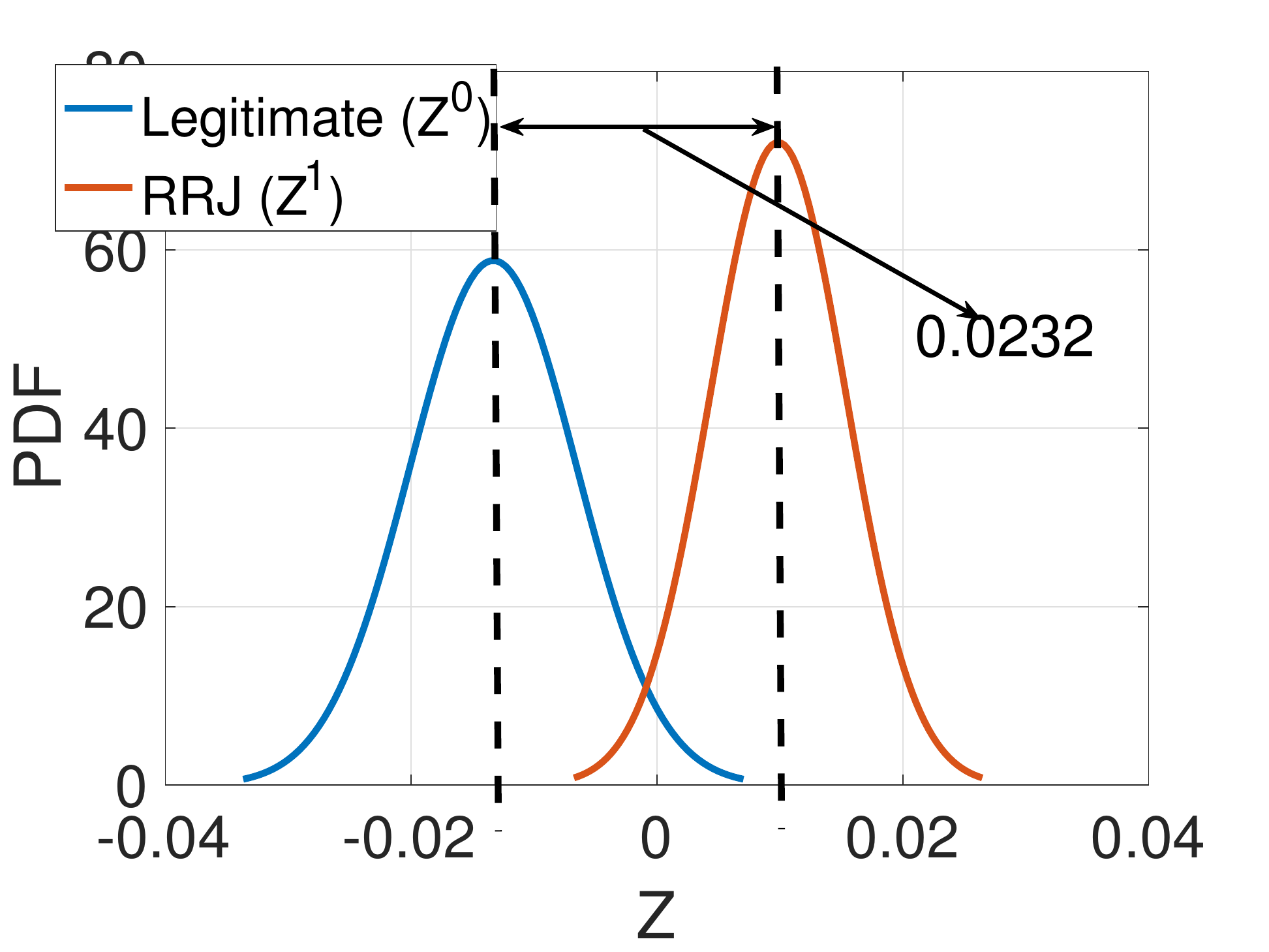}
		\caption{PDFs of $Z^{\zerosf}$, $Z^{\onesf}$ (supervised, $p_R=0.01, p_J=1$).}\label{fig:testStatisticsPDF-sup}
	\end{minipage}
	\hfill
	\begin{minipage}[t]{0.32\textwidth}
		\centering
		\includegraphics[scale=0.23]{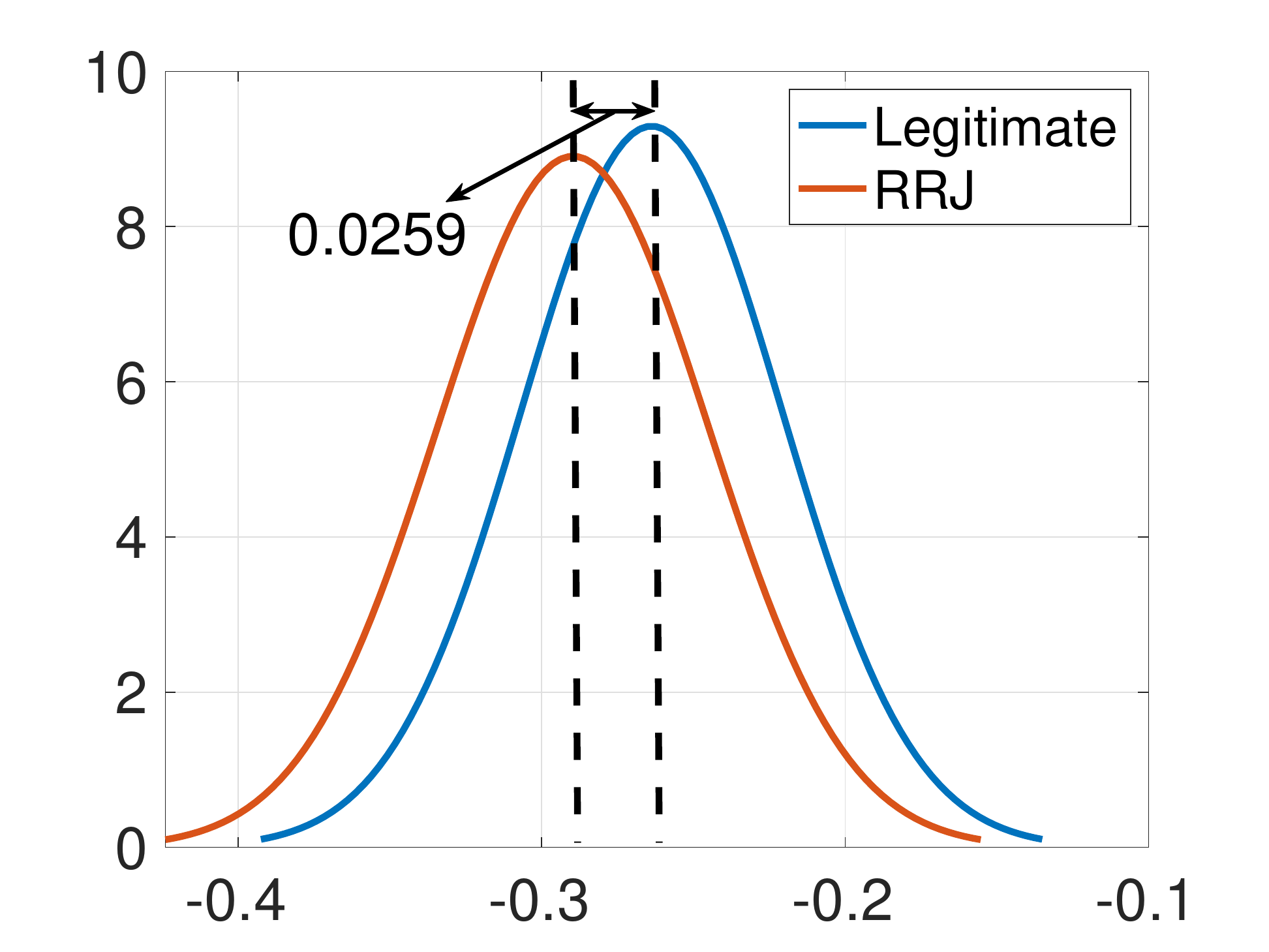}
		\caption{PDFs of $Z^{\zerosf}$, $Z^{\onesf}$ (semi-supervised, $p_R\!\!=\!\!0.01,p_J=1$).}\label{fig:testStatisticsPDF-semiSup}
	\end{minipage}
\end{figure}
\subsection{RRJ strategies under the full observability model}
As proposed in \secref{jointlyChoosePRPJ}, the jamming efficiency constraint may be approximated via Taylor series $\boldsymbol{\pi}^{\onesf}$ to approximately compute the jamming efficiency $\eta(\pbf_{\rm rrj})$. Denote the approximated efficiency obtained using $\boldsymbol{\pi}^{\onesf}_{\rm ts}(\pbf_{\rm rrj}, k)$ as $\eta_{\rm ts}(\pbf_{\rm rrj}, k)$. We first investigates the difference between $\eta$ and $\eta_{\rm ts}(\pbf_{\rm rrj}, 1)$, $\eta_{\rm ts}(\pbf_{\rm rrj}, 2)$ respectively. As shown in \figref{etaTsComparison} (plotted on the $p_R$-$p_J$ plane with grid interval $0.0244$), the difference between the first-order and second-order Taylor series truncation is not very large, and the average relative difference $\frac{|\eta_{\rm ts}(\pbf_{\rm rrj}, 1)-\eta(\pbf_{\rm rrj})|}{\eta(\pbf_{\rm rrj})}$ is 0.0883, and the average relative error $\frac{|\eta_{\rm ts}(\pbf_{\rm rrj}, 2)-\eta(\pbf_{\rm rrj})|}{\eta(\pbf_{\rm rrj})}$ is 0.0828. \figref{absoluteDiffEtaTs} shows the absolute relative difference between $\eta_{\rm ts}(\pbf_{\rm rrj}, 1)$ and $\eta(\pbf_{\rm rrj})$, and we can see that the discrepancy is small when $p_R$ and $p_J$ are large. \figref{optimalPRPJ} presents the optimal $p_R$ and $p_J$ obtained by solving the optimization problem \eqref{approxOptimization}, and shows that the optimal solution is always obtained at $p_R=1$ and the choice of $p_J$ depends on the jamming efficiency threshold. However, the efficiency constraint in \eqref{approxOptimization} uses $\boldsymbol{\pi}^{\onesf}_{\rm ts}(\pbf_{\rm rrj}, 1)$ which overestimates $\eta$, especially when $p_R$ and $p_J$ are small by \figref{etaTsComparison}. Therefore, the optimization yields solutions that have a slightly lower jamming efficiency than $\tau_{\eta}$.

%%%%%%%%%%%%%%%%%%%%%%%%%%%%%%%%%%%%%%%%%%%%%%%%%%%%%%%%%%%%%%%%%%%%%%%%%%%%%%%%%%%%%%%%
\subsection{Comparison the supervised and the semi-supervised models}

This experiment compares the detection performance of the supervised detector proposed in \secref{NPTestMC} and the semi-supervised detector proposed in \secref{semiSupvised}. We can see from \figref{EER-Compare-supVssemiSup} that the semi-supervised anomaly detector has a much higher EER than the supervised detector. Comparison between \figref{testStatisticsPDF-sup} and \figref{testStatisticsPDF-semiSup} shows the performance degradation is due to the fact that the semi-supervised test statistics have a much larger variance than the supervised ones.

\begin{figure}[!htb]
	\begin{minipage}[b]{0.48\textwidth}
		\centering
		\includegraphics[scale=0.28]{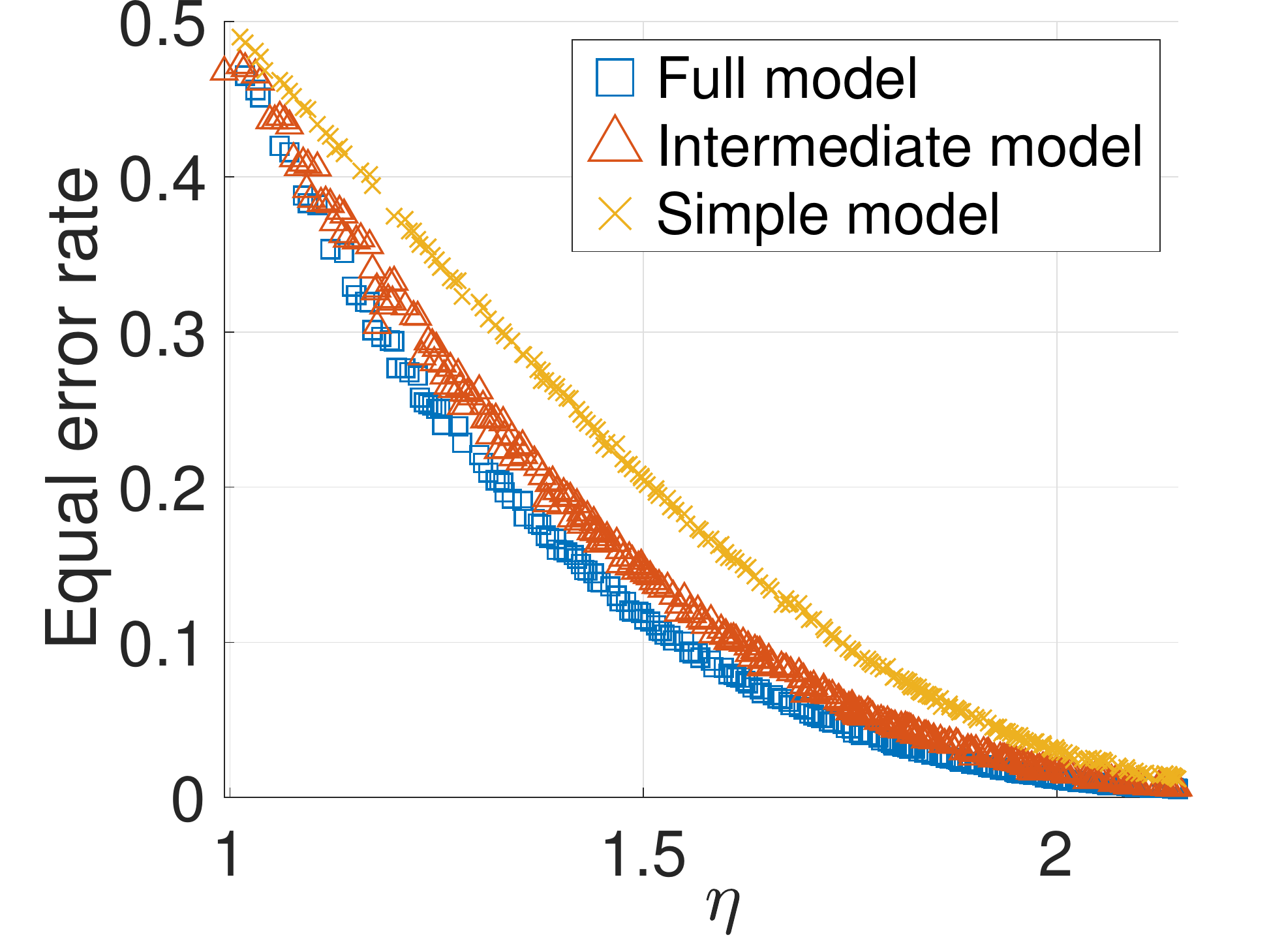}
		\caption{Pareto frontiers of the full, intermediate \& simplified models ($m=6, W=1000$).}\label{fig:paretoEfficiency}
	\end{minipage}
	\hfill
	\begin{minipage}[b]{0.48\textwidth}
		\centering
		\includegraphics[scale=0.28]{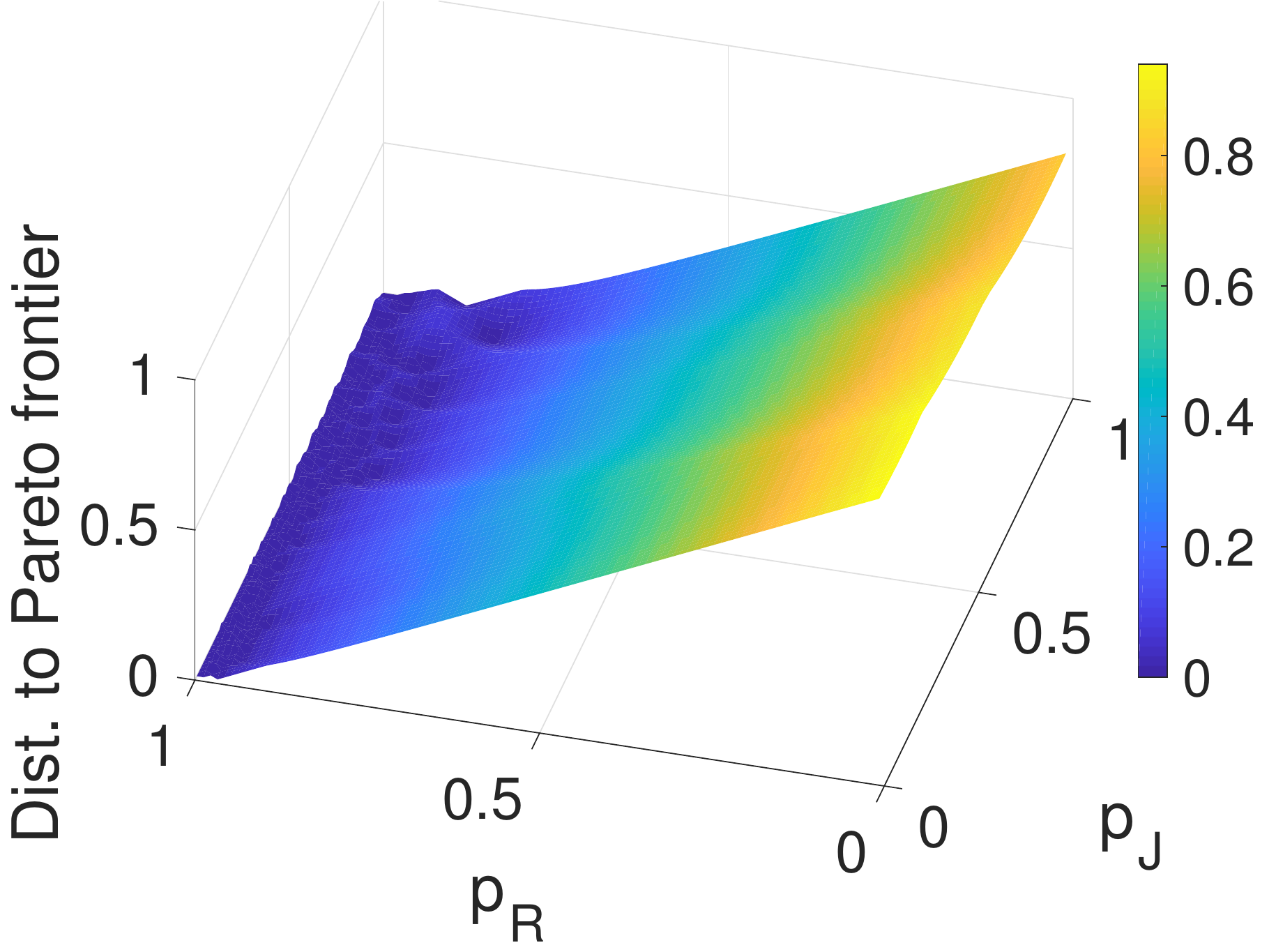}
		\caption{Distance to the Pareto frontier on the $p_J$-$p_R$ plane (full model, $m=6, W=1000$).}\label{fig:distToPf}
	\end{minipage}
	\begin{minipage}[b]{0.36\textwidth}
		\centering
		\subfloat[ROC curve]{\includegraphics[scale=0.25]{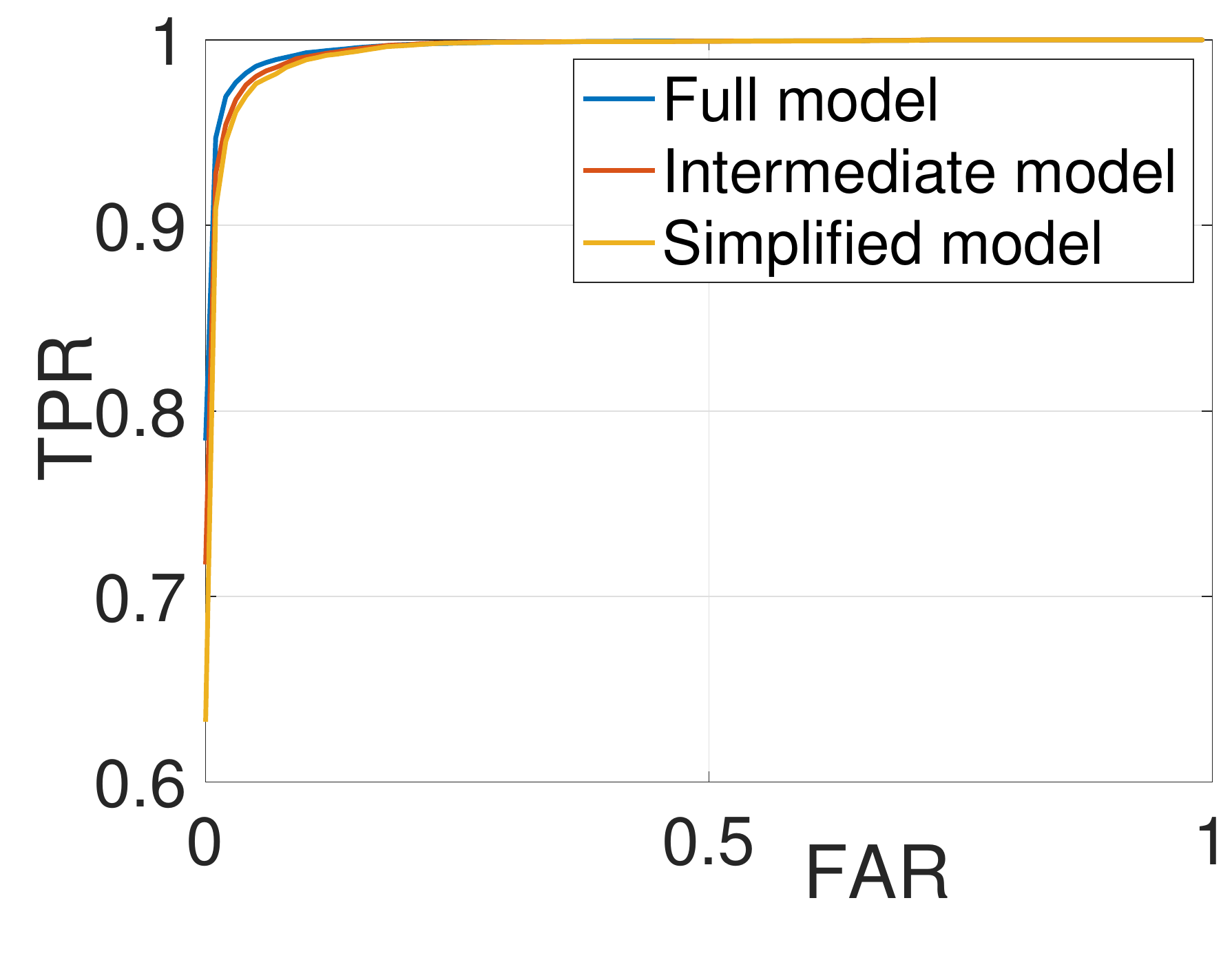}}
		\hfill
		\subfloat[Location plot]{\includegraphics[scale=0.25]{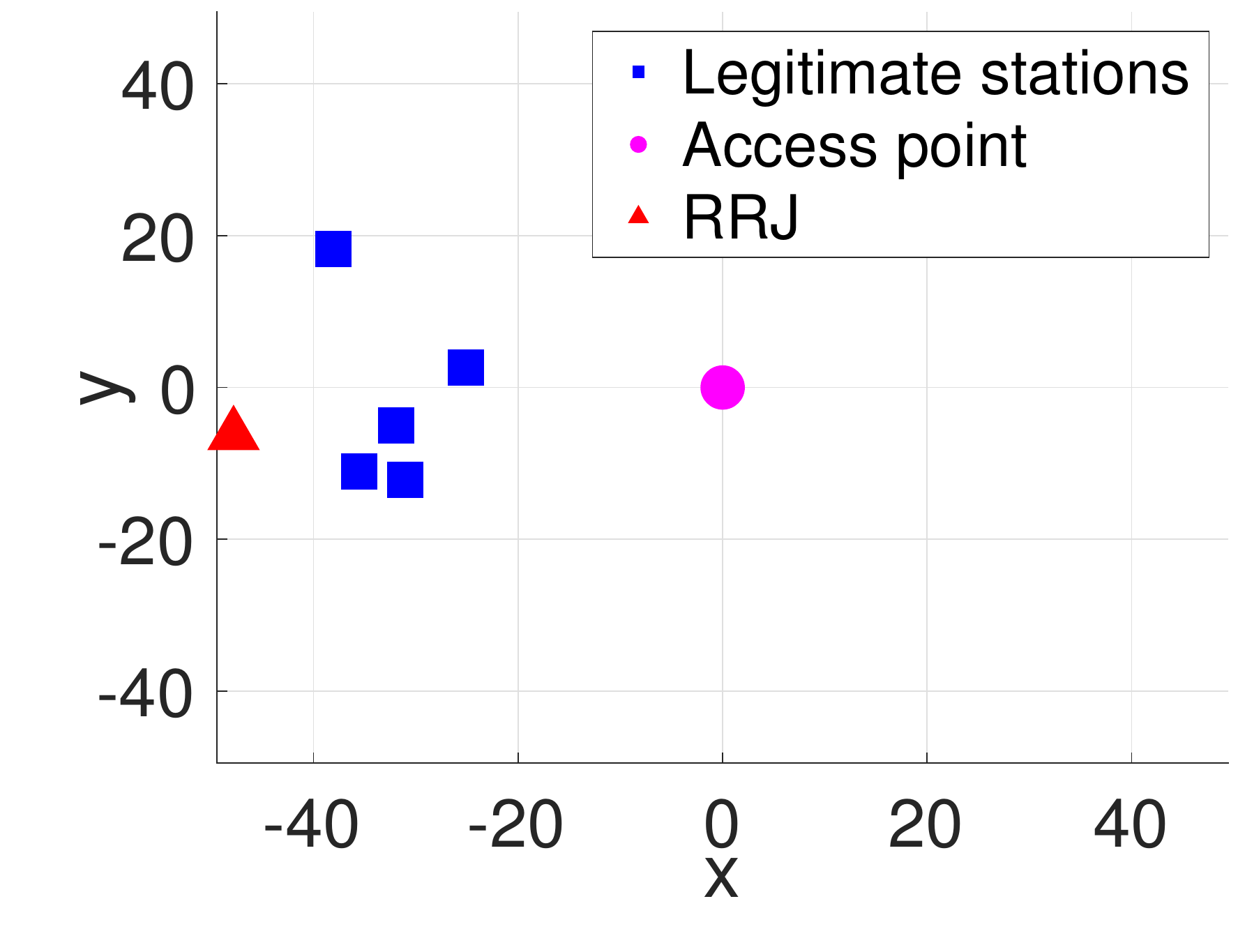}}
	\end{minipage}
	\hfill
	\begin{minipage}[b]{0.6\textwidth}
		\centering
		\subfloat[Case one: ROC curve\label{fig:556}]{\includegraphics[scale=0.25]{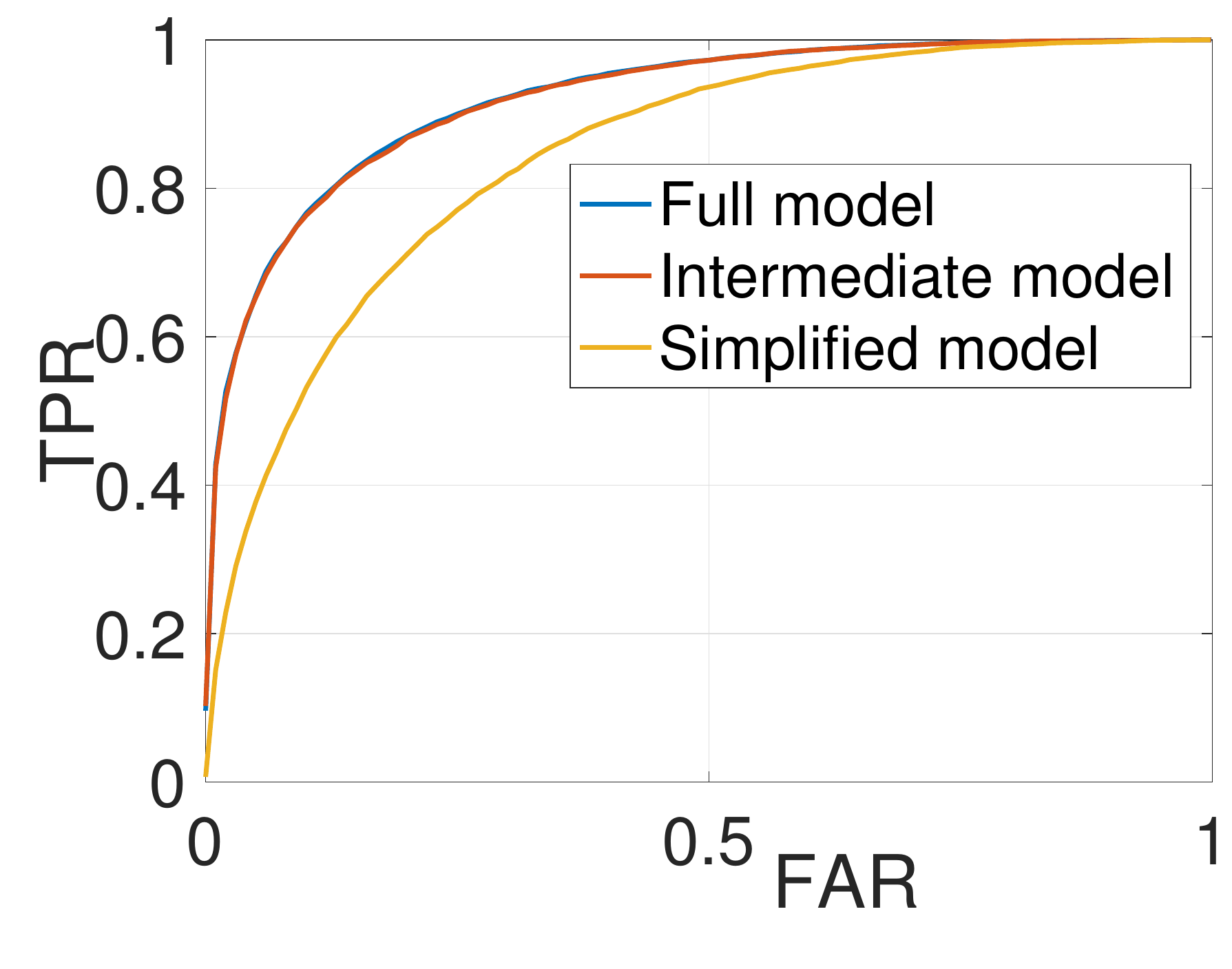}}
		\hfill
		\subfloat[Case two: ROC curve\label{fig:41}]{\includegraphics[scale=0.25]{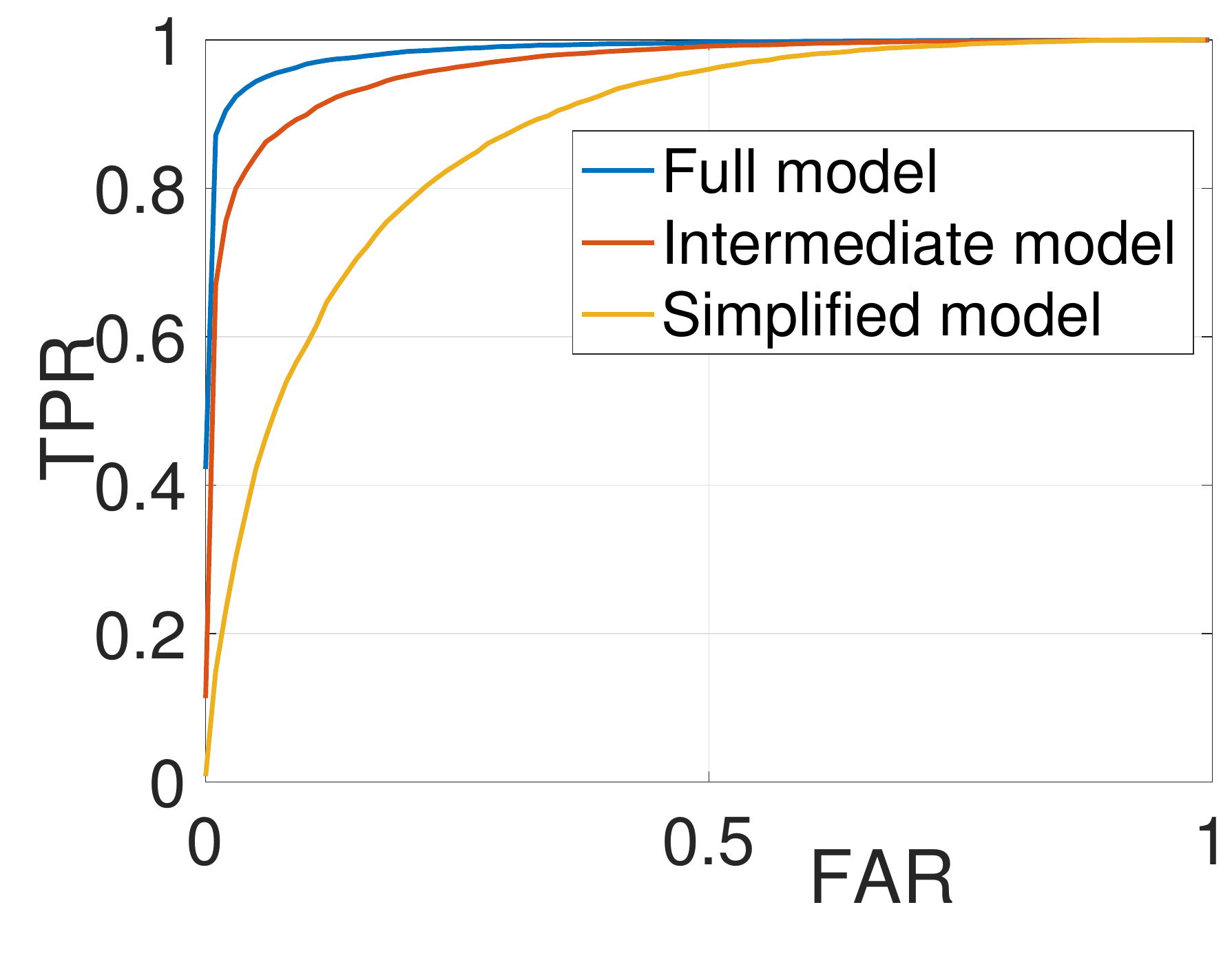}}
		\hfill
		\subfloat[Case one: Location plot]{\includegraphics[scale=0.25]{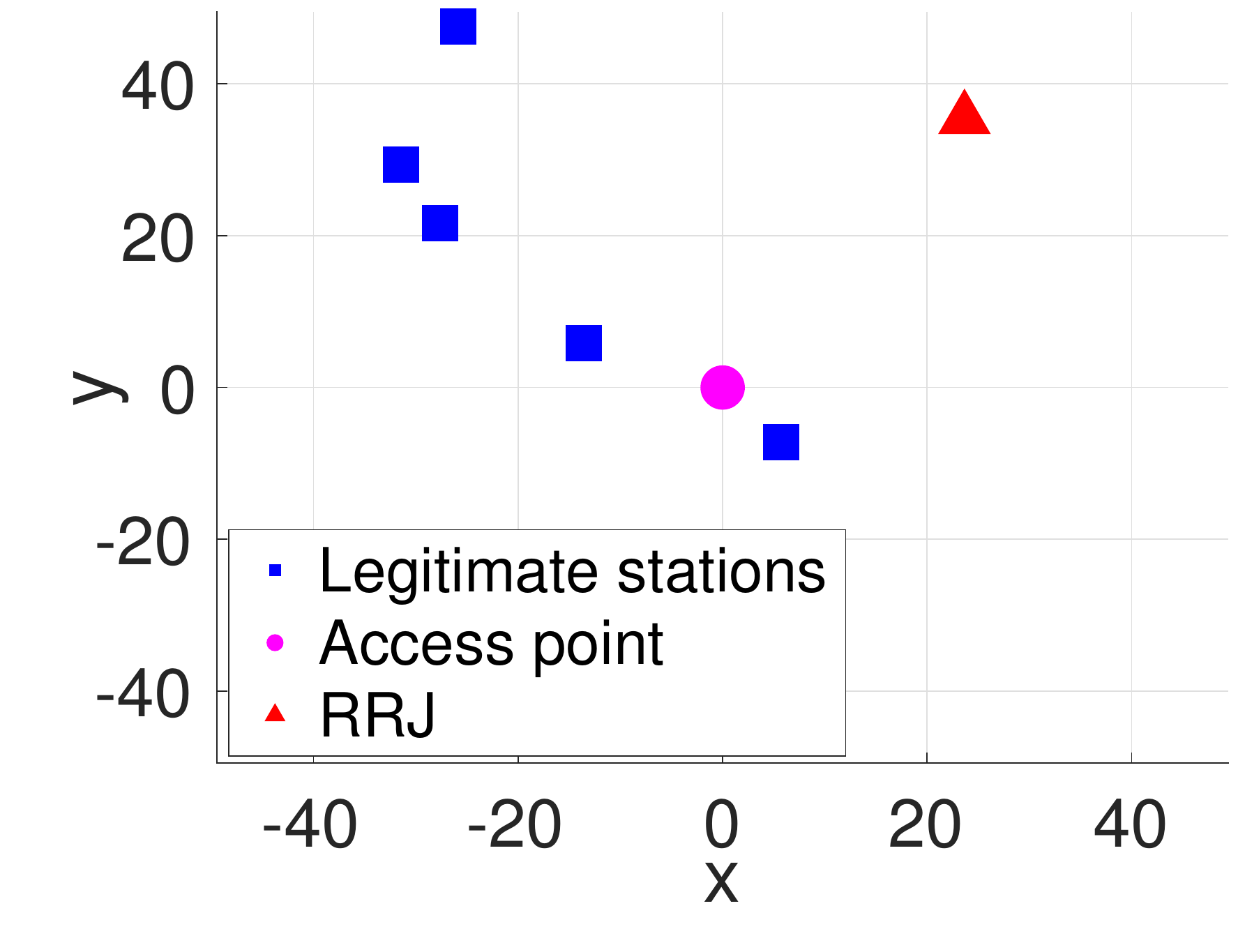}}
		\hfill
		\subfloat[Case two: Location plot]{\includegraphics[scale=0.25]{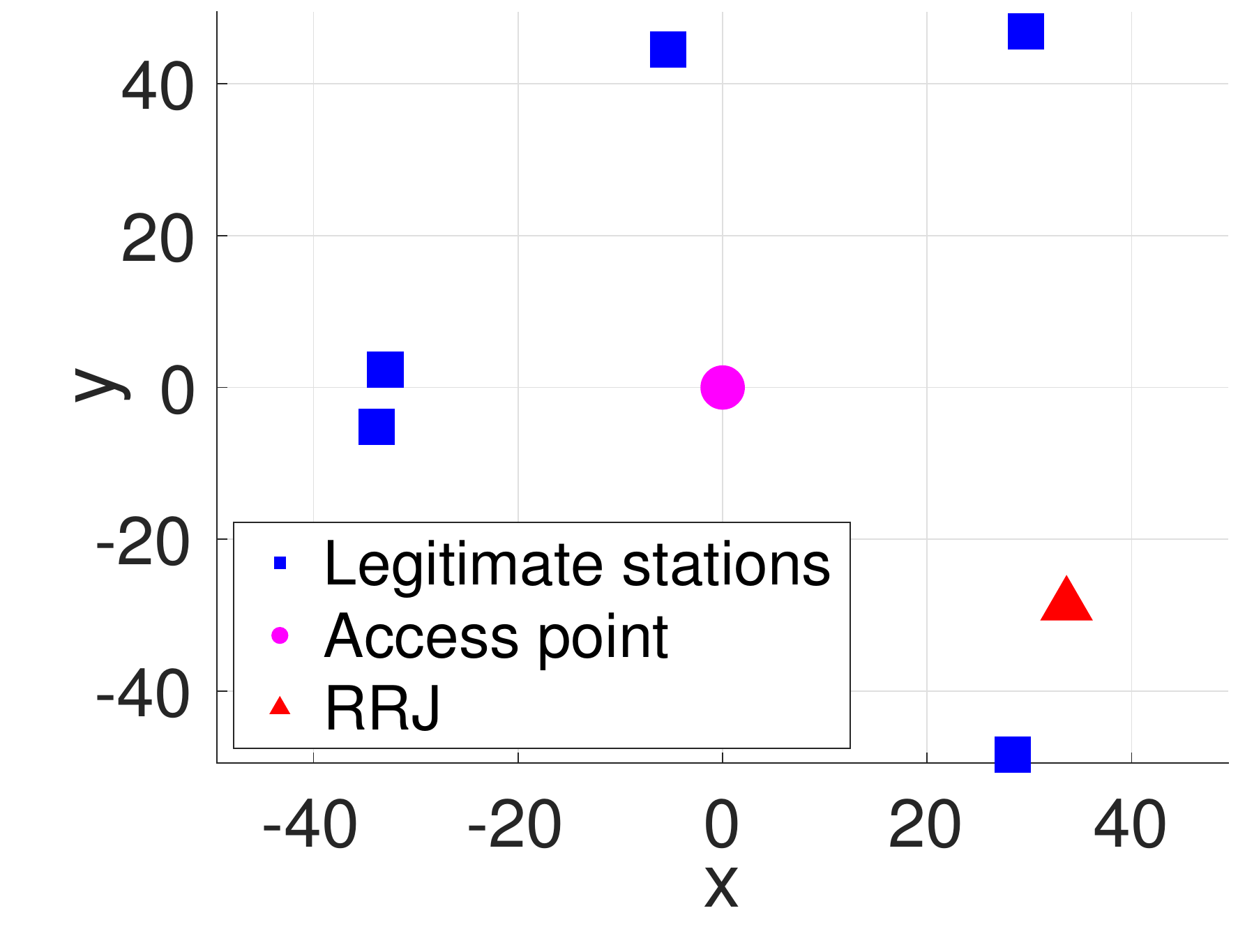}}
	\end{minipage}
	\begin{minipage}[t]{.38\linewidth}
		\caption{ROC and location graph when all three models have similar detection accuracy ($\pbf_{\rm rrj}=[0.8, 0.2]$).}\label{fig:rocFullVsLimited-1}
	\end{minipage}%
	\hfill
	\begin{minipage}[t]{.58\linewidth}
		\caption{ROC and location graph for two example cases that the simplified model has significantly higher EER than the intermediate and full model ($\pbf_{\rm rrj}=[0.8, 0.2]$).}\label{fig:rocFullVsLimited-2}
	\end{minipage}%
\end{figure}

%%%%%%%%%%%%%%%%%%%%%%%%%%%%%%%%%%%%%%%%%%%%%%%%%%%%%%%%%%%%%%%%%%%%%%%%%%%%%%%%%%%%%%%%
\subsection{Comparison between the full and limited observability models}
This section compares the supervised hypothesis testing of the full observability and the two limited observability models. In the simulation, we generate $n=10^4$ compliant sample paths $\ybf^{\zerosf}_W$ (correspond to the DTMC with $\Pbf^{\zerosf}$) and $n$ RRJ sample paths $\ybf^{\onesf}_W$ (correspond to the DTMC with $\Pbf^{\onesf}$). For each sample path with length $W$, we can compute the transition counts $N_{i,j}$ of the full model, and also we can compute the aggregated transition counts $\hat{N}_{i,j}$ of the intermediate model, and $\tilde{N}_{i,j}$ of the simplified model. With the transition counts, we can compute the test statistic $Z$ of each sample path and perform the supervised hypothesis testing to determine if a sample path is generated from the compliant DTMC or the RRJ DTMC. \figref{paretoEfficiency} shows the Pareto efficient allocations of the full, intermediate and simplified models with the topology in \figref{systemTopo}. The Pareto efficiency is defined from the point of view of RRJ, {\em i.e.}, a RRJ will prefer large EER and large $\eta$. \figref{paretoEfficiency} demonstrates that the simplified model is advantageous for RRJ in that it has a higher EER and $\eta$. This is due to the fact that the amount of information available to the jamming detector is significantly reduced in the four-state simplified model than the other two models with state space size $2m$ and $2^m$. \figref{distToPf} shows the distance of each operating point $(p_R, p_J)$ to the Pareto frontier of the full model, and we can see the distance is small with large $p_R$, which demonstrates the Pareto efficient operating points have $p_R$ value close to 1.

To investigate if the large EER gain of the simplified model over the other two models are consistent for various network topologies, we randomize the station location by randomly placing the stations in a $70\times 70$ area, and plot the corresponding ROC of the three models as shown in \figref{rocFullVsLimited-1} and \figref{rocFullVsLimited-2}. We can see from \figref{rocFullVsLimited-1} the full, intermediate, and simplified models all obtain similar detection accuracy. However, \figref{rocFullVsLimited-2} shows two cases that the simplified model has a much larger detection error than the other two models. The difference among the three models' detection accuracy highly depends upon the geographical distributions of stations.

%%%%%%%%%%%%%%%%%%%%%%%%%%%%%%%%%%%%%%%%%%%%%%%%%%%%%%%%%%%%%%%%%%%%%%%%%%%%%%%%%%%%%%%%
\section{Conclusion}
\label{sec:conclusion}
%%%%%%%%%%%%%%%%%%%%%%%%%%%%%%%%%%%%%%%%%%%%%%%%%%%%%%%%%%%%%%%%%%%%%%%%%%%%%%%%%%%%%%%%

This paper analyzed the RJ detection problem in the presence of HTs in networks by measuring (non)compliance with CS. Formulating this jamming detection problem as a generic Markov chain hypothesis testing problem, we derived an upper bound on the variance of the test statistic, and developed a novel optimization problem for an intelligent RRJ to determine its optimal operating point at which the desired tradeoff between the jamming efficiency and the risk of exposure is achieved. We introduced three models: the full observability model, the intermediate model, and the simplified model, corresponding to increasingly realistic assumptions regarding the network state information available at the AP. 
%%%%%%%%%%%%%%%%%%%%%%%%%%%%%%%%%%%%%%%%%%%%%%%%%%%%%%%%%%%%%%%%%%%%%%%%%%%%%%%%%%%%%%%%
\appendices
%%%%%%%%%%%%%%%%%%%%%%%%%%%%%%%%%%%%%%%%%%%%%%%%%%%%%%%%%%%%%%%%%%%%%%%%%%%%%%%%%%%%%%%%

%%%%%%%%%%%%%%%%%%%%%%%%%%%%%%%%%%%%%%%%%%%%%%%%%%%%%%%%%%%%%%%%%%%%%%%%%%%%%%%%%%%%%%%%
\section{Proof of \lemref{VarCovNijUpperbound}}
\label{sec:proofVarCovNijUpperbound}
%%%%%%%%%%%%%%%%%%%%%%%%%%%%%%%%%%%%%%%%%%%%%%%%%%%%%%%%%%%%%%%%%%%%%%%%%%%%%%%%%%%%%%%%

\begin{proof}
For an ergodic CTMC having transition matrix with simple eigenvalues, its transition matrix can be written as: $\Pbf^{\bsf}\in \Rbb^{(d+1)\times (d+1)} =\Ubf \boldsymbol{\Lambda}^{\bsf} \Vbf$ where $\Vbf=[\vbf_0,...,\vbf_d]^T=\Ubf^{-1}$ is the left eigenvector matrix of $\Pbf^{\bsf}$, and $\Ubf=[\ubf_0,...,\ubf_d]$ is the right eigenvector matrix of $\Pbf^{\bsf}$, and $\boldsymbol{\Lambda}=\mbox{diag}\{\lambda_0^{\bsf},...,\lambda_d^{\bsf}\}$ holds the eigenvalues of the matrix sorted according to $|1-\lambda_0^{\bsf}|\leq...\leq |1-\lambda_d^{\bsf}|$. Since $\lambda_0^{\bsf}$ is 1 with left eigenvector $\boldsymbol{\pi}^{\bsf}$ and right eigenvector $\mathbf{1}$, thus we can write 
\begin{equation}
\label{eq:kstepProbRewrite}
[{\Pbf^{\bsf}}^{(t')}]_{i,j} = \left[\sum_{r=0}^d {\lambda_r^{\bsf}}^{(t')} \ubf_r\vbf_r^{\intercal}\right]_{i,j}=\pi_j^{\bsf}+ \sum_{r=1}^d {\lambda_r^{\bsf}}^{(t')}u_{ir}v_{rj},
\end{equation}
where ${\lambda_r^{\bsf}}^{(t')}$ denotes the $t'$-th power of ${\lambda_r^{\bsf}}$. First, we want to derive the upper bound of ${\rm Var}[N_{i,j}]$ in \eqref{VarNij}. According to \eqref{kstepProbRewrite}, we have
\begin{equation}
\label{eq:VarNijPartRewrite}
\begin{aligned}
\sum_{t'=1 }^{W}(W-t'){\epsilon_{j,i}^{\bsf}}^{(t'-1)} 
= & \sum_{t'=1 }^{W}(W-t')\sum_{r=1}^d {\lambda_r^{\bsf}}^{(t'-1)}u_{jr}v_{ri}
=  \sum_{r=1}^d u_{jr}v_{ri}\sum_{t'=1 }^{W}(W-t'){\lambda_r^{\bsf}}^{(t'-1)}\\
= & \sum_{r=1}^d u_{jr}v_{ri}\frac{{\lambda_r^{\bsf}}^{(W)}-W{\lambda_r^{\bsf}}+(W-1)}{(1-{\lambda_r^{\bsf}})^2}
=  \sum_{r=1}^d u_{jr}v_{ri}\left(\frac{{\lambda_r^{\bsf}}^{(W)}-1}{(1-{\lambda_r^{\bsf}})^2} + \frac{W}{1-{\lambda_r^{\bsf}}}\right)\\
\overset{(a)}{\leq} &  \sum_{r=1}^d |u_{jr}v_{ri}|\left(\frac{2}{|1-{\lambda_r^{\bsf}}|^2} + \frac{W}{|1-{\lambda_r^{\bsf}}|}\right)\\
\overset{(b)}{\leq} &  \sum_{r=1}^d |u_{jr} v_{ri}|\left(\frac{2}{|1-\lambda_1^{\bsf}|^2} + \frac{W}{|1-\lambda_1^{\bsf}|}\right)
\leq  c_{j,i}\frac{2+W|1-\lambda_1^{\bsf}|}{|1-\lambda_1^{\bsf}|^2}
\end{aligned}
\end{equation}
where $(a)$ follows by the triangle inequality of complex numbers and $|{\lambda_r^{\bsf}}|\leq 1\mbox{ for } r\geq 1$, $(b)$ follows by $|1-\lambda_1|\leq |1-{\lambda_r^{\bsf}}|$ for $r\geq 2$, $c_{j,i}\equiv \sum_{r=1}^d |u_{jr}v_{ri}|$ is a constant depends on the eigenvectors of $\Pbf^{\bsf}$.  According to \eqref{VarNij}, and \eqref{VarNijPartRewrite}, we can then derive \eqref{VarNijUpperbound}. Similarly,
\begin{equation}
\label{eq:CovNijPartUpperbound}
-W\pi_i^{\bsf} + 2\sum_{t'=1}^{W-1}(W-t')\left({\epsilon_{j,i}^{\bsf}}^{(t'-1)}+{\epsilon_{j',i}^{\bsf}}^{(t'-1)}\right)
\leq -W{\pi_i^{\bsf}} + (c_{j,i}+c_{j',i})\frac{ (2+W|1-\lambda_1^{\bsf}|)}{|1-\lambda_1^{\bsf}|^2}
\end{equation}
The upper-bound of ${\rm Cov}[\frac{N_{i,j}^{\bsf}}{W},\frac{N_{i,j'}^{\bsf}}{W}]$ follows immediately by substituting \eqref{CovNijPartUpperbound} into \eqref{CovNij}.
\end{proof}

%%%%%%%%%%%%%%%%%%%%%%%%%%%%%%%%%%%%%%%%%%%%%%%%%%%%%%%%%%%%%%%%%%%%%%%%%%%%%%%%%%%%%%%%
\section{Proof of \lemref{convergenceRateConvex}}
\label{sec:proofConvergenceRateConvex}
%%%%%%%%%%%%%%%%%%%%%%%%%%%%%%%%%%%%%%%%%%%%%%%%%%%%%%%%%%%%%%%%%%%%%%%%%%%%%%%%%%%%%%%%

\begin{proof}
Define the transition probability vector of full model from state $i$ to states in $\Smc$ as: $\pbf_i^{\zerosf}\in \Rbb^{1\times (d+1)}$, $\pbf_i^{\onesf}\in \Rbb^{1\times (d+1)}$. 
\begin{equation}
\label{eq:informationDiscrimination}
\begin{aligned}
I(\Pbf^{\zerosf},\Pbf^{\onesf}(\pbf_{\rm rrj})) = \sum_{i}\pi_i^{\zerosf} \sum_{j} p_{i,j}^{\zerosf}\ln\frac{p_{i,j}^{\zerosf}}{p_{i,j}^{\onesf}}
\stackrel{(a)}{=} \sum_{i^*\in \Smc_{\neg 1}}\pi_{i^*}^{\zerosf} D(\pbf_{i^*}^{\zerosf}||\pbf_{i^*}^{\onesf}),
\end{aligned}
\end{equation}
where $\Smc_{\neg 1}\equiv \{\Tmc\in\Smc: 1\notin \Tmc\}$ denotes the set of the indices of all states that do not contain the SUT, $D(\pbf||\qbf)$ denotes the Kullback-Leibler (KL) divergence between $\pbf$ and $\qbf$; $(a)$ is due to the fact that $D(\pbf||\qbf)$ is non-zero when $\pbf\neq \qbf$, and the transition probabilities ${\pbf_{i^*}}^{\zerosf}$ and ${\pbf_{i^*}}^{\onesf}$ are not equal only when $i^*\in \Smc_{\neg 1}$ according to the definition of RRJ's Markov chain in \secref{generalFullModel}. It is known that the KL divergence $D(\pbf||\qbf)$ is a convex function in $\qbf$ \cite{CovTho2012}. Since in our case, $\pbf_{i^*}^{\onesf}$ is an affine function of the jamming parameters of the RRJ $\pbf_{i^*}^{\onesf} =  \Abf [p_R, p_J]^{\intercal} +\bbf$, where $\Abf$ is a matrix with its $h_{j^*}$-th row ($h_{j^*}$ is the index of state $j^*=\Tmc\cup \{1\}$) being: $[\frac{\lambda}{u} p_I(1,\Tmc),\frac{\lambda}{u}( 1- p_I(1,\Tmc))]$, and all the other elements are zeros, and $\bbf$ is an array with each element $b_j$ being: $b_j = p_{i^*,j}^{\zerosf}$ if $j\neq j^*$; $b_j =0$ if $j=j^*$. Since affine mapping preserves the convexity of the function \cite{BoyVan2004},  $D({\pbf_{i^*}}^{\zerosf}||\pbf_{i^*}^{\onesf})$ is a convex function of $[p_R, p_J]$. Therefore, as a non-negative weighted sum of convex functions of $[p_R, p_J]$ as in \eqref{informationDiscrimination}, 	$I(\Pbf^{\zerosf},\Pbf^{\onesf}(\pbf_{\rm rrj}))$ is also a convex function of $[p_R, p_J]$.
\end{proof}

\bibliographystyle{IEEEtran}
\bibliography{IEEEabrv,TCOM}

% Generated by IEEEtran.bst, version: 1.13 (2008/09/30)
\begin{thebibliography}{10}
\providecommand{\url}[1]{#1}
\csname url@samestyle\endcsname
\providecommand{\newblock}{\relax}
\providecommand{\bibinfo}[2]{#2}
\providecommand{\BIBentrySTDinterwordspacing}{\spaceskip=0pt\relax}
\providecommand{\BIBentryALTinterwordstretchfactor}{4}
\providecommand{\BIBentryALTinterwordspacing}{\spaceskip=\fontdimen2\font plus
\BIBentryALTinterwordstretchfactor\fontdimen3\font minus
  \fontdimen4\font\relax}
\providecommand{\BIBforeignlanguage}[2]{{%
\expandafter\ifx\csname l@#1\endcsname\relax
\typeout{** WARNING: IEEEtran.bst: No hyphenation pattern has been}%
\typeout{** loaded for the language `#1'. Using the pattern for}%
\typeout{** the default language instead.}%
\else
\language=\csname l@#1\endcsname
\fi
#2}}
\providecommand{\BIBdecl}{\relax}
\BIBdecl

\bibitem{AnWeb2018}
N.~An and S.~Weber, ``Efficiency and detectability of random reactive jamming
  in wireless networks,'' in \emph{2018 15th Annual IEEE International
  Conference on Sensing, Communication, and Networking {(SECON)}}.\hskip 1em
  plus 0.5em minus 0.4em\relax IEEE, 2018.

\bibitem{XuTra2005}
W.~Xu, W.~Trappe, Y.~Zhang, and T.~Wood, ``The feasibility of launching and
  detecting jamming attacks in wireless networks,'' in \emph{{ACM} {MobiHoc}},
  Urbana-Champaign, IL, Apr. 2005, pp. 46--57.

\bibitem{MokBen2015}
L.~Mokdad, J.~Ben-Othman, and A.~T. Nguyen, ``{DJAVAN}: Detecting jamming
  attacks in vehicle {Ad hoc} networks,'' \emph{Performance Evaluation},
  vol.~87, pp. 47--59, 2015.

\bibitem{ShiShe2009}
I.~Shin, Y.~Shen, Y.~Xuan, M.~T. Thai, and T.~Znati, ``Reactive jamming attacks
  in multi-radio wireless sensor networks: an efficient mitigating measure by
  identifying trigger nodes,'' in \emph{Proceedings of the 2nd ACM
  international workshop on Foundations of wireless ad hoc and sensor
  networking and computing}, New Orleans, LA, May 2009, pp. 87--96.

\bibitem{LiKou2007}
M.~Li, I.~Koutsopoulos, and R.~Poovendran, ``Optimal jamming attacks and
  network defense policies in wireless sensor networks,'' in \emph{IEEE
  INFOCOM}.\hskip 1em plus 0.5em minus 0.4em\relax IEEE, 2007, pp. 1307--1315.

\bibitem{PelIli2011}
K.~Pelechrinis, M.~Iliofotou, and S.~V. Krishnamurthy, ``Denial of service
  attacks in wireless networks: The case of jammers,'' \emph{{IEEE}
  Communications Surveys \& Tutorials}, vol.~13, no.~2, pp. 245--257, 2011.

\bibitem{LuWan2014}
Z.~Lu, W.~Wang, and C.~Wang, ``Modeling, evaluation and detection of jamming
  attacks in time-critical wireless applications,'' \emph{{IEEE} Transactions
  on Mobile Computing}, vol.~13, no.~8, pp. 1746--1759, 2014.

\bibitem{MarWyg2014}
A.~Marttinen, A.~M. Wyglinski, and R.~J{\"a}ntti, ``Statistics-based jamming
  detection algorithm for jamming attacks against tactical {MANETs},'' in
  \emph{{IEEE} Military Communications Conference ({MILCOM})}, Baltimore, MD,
  Oct. 2014, pp. 501--506.

\bibitem{PunAkt2014}
O.~Pu{\~n}al, I.~Akta{\c{s}}, C.-J. Schnelke, G.~Abidin, K.~Wehrle, and
  J.~Gross, ``Machine learning-based jamming detection for {IEEE} 802.11:
  {D}esign and experimental evaluation,'' in \emph{{IEEE} {WoWMoM}}, Sydney,
  Australia, Jun. 2014.

\bibitem{SpuGiu2014}
M.~Spuhler, D.~Giustiniano, V.~Lenders, M.~Wilhelm, and J.~B. Schmitt,
  ``Detection of reactive jamming in {DSSS}-based wireless communications,''
  \emph{IEEE Transactions on Wireless Communications}, vol.~13, no.~3, pp.
  1593--1603, 2014.

\bibitem{CiuAub2017}
D.~Ciuonzo, A.~Aubry, and V.~Carotenuto, ``Rician {MIMO} channel-and
  jamming-aware decision fusion,'' \emph{IEEE Transactions on Signal
  Processing}, vol.~65, no.~15, pp. 3866--3880, 2017.

\bibitem{HarPin2011}
I.~Harjula, J.~Pinola, and J.~Prokkola, ``Performance of {IEEE} 802.11 based
  {WLAN} devices under various jamming signals,'' in \emph{Military
  Communications Conference (MILCOM)}.\hskip 1em plus 0.5em minus 0.4em\relax
  IEEE, 2011, pp. 2129--2135.

\bibitem{BenBou2011}
A.~Benslimane, M.~Bouhorma \emph{et~al.}, ``Analysis of jamming effects on
  {IEEE} 802.11 wireless networks,'' in \emph{2011 IEEE International
  Conference on Communications (ICC)}.\hskip 1em plus 0.5em minus 0.4em\relax
  IEEE, 2011, pp. 1--5.

\bibitem{BayKin2013}
E.~Bayraktaroglu, C.~King, X.~Liu, G.~Noubir, R.~Rajaraman, and B.~Thapa,
  ``Performance of {IEEE} 802.11 under jamming,'' \emph{Springer Journal of
  Mobile Networks and Applications}, vol.~18, no.~5, pp. 678--696, 2013.

\bibitem{Bia2000}
G.~Bianchi, ``Performance analysis of the {IEEE} 802.11 distributed
  coordination function,'' \emph{IEEE Journal on selected areas in
  communications}, vol.~18, no.~3, pp. 535--547, 2000.

\bibitem{LieKai2010}
S.~C. Liew, C.~H. Kai, H.~C. Leung, and P.~Wong, ``Back-of-the-envelope
  computation of throughput distributions in {CSMA} wireless networks,''
  \emph{IEEE Transactions on Mobile Computing}, vol.~9, no.~9, pp. 1319--1331,
  2010.

\bibitem{BorHil2013}
L.~Bortolussi, J.~Hillston, D.~Latella, and M.~Massink, ``Continuous
  approximation of collective system behaviour: A tutorial,'' \emph{Performance
  Evaluation}, vol.~70, no.~5, pp. 317--349, 2013.

\bibitem{NarKni2012}
B.~Nardelli and E.~W. Knightly, ``Closed-form throughput expressions for {CSMA}
  networks with collisions and hidden terminals,'' in \emph{2012 Proceedings
  IEEE INFOCOM}, 2012, pp. 2309--2317.

\bibitem{BelZoc2014}
B.~Bellalta, A.~Zocca, C.~Cano, A.~Checco, J.~Barcelo, and A.~Vinel,
  ``Throughput analysis in {CSMA/CA} networks using continuous time {Markov}
  networks: a tutorial,'' in \emph{Wireless Networking for Moving
  Objects}.\hskip 1em plus 0.5em minus 0.4em\relax Springer, 2014, pp.
  115--133.

\bibitem{FohZuk2007}
C.~H. Foh, M.~Zukerman, and J.~W. Tantra, ``A {M}arkovian framework for
  performance evaluation of {IEEE} 802.11,'' \emph{IEEE Transactions on
  Wireless Communications}, vol.~6, no.~4, 2007.

\bibitem{JiaWal2010}
L.~Jiang and J.~Walrand, ``A distributed {CSMA} algorithm for throughput and
  utility maximization in wireless networks,'' \emph{IEEE/ACM Transactions on
  Networking (ToN)}, vol.~18, no.~3, pp. 960--972, 2010.

\bibitem{LauKle2013}
R.~Laufer and L.~Kleinrock, ``On the capacity of wireless {CSMA/CA} multihop
  networks,'' in \emph{2013 Proceedings IEEE INFOCOM}, 2013, pp. 1312--1320.

\bibitem{BelChe2016}
B.~Bellalta, A.~Checco, A.~Zocca, and J.~Barcelo, ``On the interactions between
  multiple overlapping {WLANs} using channel bonding,'' \emph{IEEE Transactions
  on Vehicular Technology}, vol.~65, no.~2, pp. 796--812, 2016.

\bibitem{MicRog2016}
A.~Michaloliakos, R.~Rogalin, Y.~Zhang, K.~Psounis, and G.~Caire, ``Performance
  modeling of next-generation {WiFi} networks,'' \emph{Computer Networks}, vol.
  105, pp. 150--165, 2016.

\bibitem{AmaMis1997}
S.~V. Amari and R.~B. Misra, ``Closed-form expressions for distribution of sum
  of exponential random variables,'' \emph{IEEE Transactions on reliability},
  vol.~46, no.~4, pp. 519--522, 1997.

\bibitem{Lev2008}
B.~C. Levy, \emph{Principles of signal detection and parameter
  estimation}.\hskip 1em plus 0.5em minus 0.4em\relax Springer Science \&
  Business Media, 2008.

\bibitem{MarMat2006}
S.~Marano, V.~Matta, and L.~Tong, ``Distributed inference in the presence of
  byzantine sensors,'' in \emph{2006 Fortieth Asilomar Conference on Signals,
  Systems and Computers}.\hskip 1em plus 0.5em minus 0.4em\relax IEEE, 2006,
  pp. 281--284.

\bibitem{Bar1951}
M.~S. Bartlett, ``The frequency goodness of fit test for probability chains,''
  \emph{Mathematical Proceedings of the Cambridge Philosophical Society},
  vol.~47, no.~1, p. 86–95, 1951.

\bibitem{Bil1961}
P.~Billingsley, ``Statistical methods in {M}arkov chains,'' \emph{The Annals of
  Mathematical Statistics}, pp. 12--40, 1961.

\bibitem{Gan1955}
J.~Gani, ``Some theorems and sufficiency conditions for the maximum-likelihood
  estimator of an unknown parameter in a simple {M}arkov chain,''
  \emph{Biometrika}, vol.~42, no. 3-4, pp. 342--359, 1955.

\bibitem{XueRoy2011}
M.~Xue and S.~Roy, ``Spectral and graph-theoretic bounds on
  steady-state-probability estimation performance for an ergodic {M}arkov
  chain,'' \emph{Journal of the Franklin Institute}, vol. 348, no.~9, pp.
  2448--2467, 2011.

\bibitem{DemZei2010}
A.~Dembo and O.~Zeitouni, \emph{Large Deviations Techniques and
  Applications}.\hskip 1em plus 0.5em minus 0.4em\relax Springer-Verlag,
  Berlin, 2010.

\bibitem{DhoChe2013}
S.~V. Dhople, Y.~C. Chen, and A.~D. Dom{\'\i}nguez-Garc{\'\i}a, ``A
  set-theoretic method for parametric uncertainty analysis in {M}arkov
  reliability and reward models,'' \emph{IEEE Transactions on Reliability},
  vol.~62, no.~3, pp. 658--669, 2013.

\bibitem{BalYeo1993}
F.~Ball and G.~F. Yeo, ``Lumpability and marginalisability for continuous-time
  {M}arkov chains,'' \emph{Journal of Applied Probability}, vol.~30, no.~3, pp.
  518--528, 1993.

\bibitem{ZhaWeb2009}
Z.~Zhao, S.~Weber, and J.~C. de~Oliveira, ``Preemption rates for a parallel
  link loss network,'' \emph{Performance Evaluation}, vol.~66, no.~1, pp.
  21--46, 2009.

\bibitem{Buc1994}
P.~Buchholz, ``Exact and ordinary lumpability in finite {M}arkov chains,''
  \emph{Journal of applied probability}, vol.~31, no.~1, pp. 59--75, 1994.

\bibitem{DeMan2008}
H.~De~Sterck, T.~A. Manteuffel, S.~F. McCormick, Q.~Nguyen, and J.~Ruge,
  ``Multilevel adaptive aggregation for {Markov} chains, with application to
  web ranking,'' \emph{SIAM Journal on Scientific Computing}, vol.~30, no.~5,
  pp. 2235--2262, 2008.

\bibitem{CovTho2012}
T.~M. Cover and J.~A. Thomas, \emph{Elements of information theory}.\hskip 1em
  plus 0.5em minus 0.4em\relax John Wiley \& Sons, 2012.

\bibitem{BoyVan2004}
S.~Boyd and L.~Vandenberghe, \emph{Convex optimization}.\hskip 1em plus 0.5em
  minus 0.4em\relax Cambridge university press, 2004.

\end{thebibliography}
%%%%%%%%%%%%%%%%%%%%%%%%%%%%%%%%%%%%%%%%%%%%%%%%%%%%%%%%%%%%%%%%%%%%%%%%%%%%%%%%%%%%%%%%

\end{document}